\documentclass{article}
\usepackage{balance}
\usepackage{mathrsfs}
\usepackage{multirow}
\usepackage{multicol}
\usepackage{amsmath}
\usepackage{amsthm}
\usepackage[vlined,ruled,linesnumbered,boxed]{algorithm2e}
\usepackage{algorithmic}
\usepackage{amssymb}
\usepackage{appendix}
\usepackage{color}
\usepackage{amstext}
\usepackage{hyperref}
\usepackage{array}
\usepackage{authblk}
\usepackage{geometry}
\usepackage{graphicx}
\newcommand{\Z}{\mathbb{Z}}
\newcommand{\Q}{\mathbb{Q}}
\newcommand{\N}{\mathbb{N}}
\newcommand{\R}{\mathbb{R}}

\newcommand{\C}{\mathbb{C}}
\newcommand{\I}{\sqrt{-1}}

\newcommand{\lc}{\mathop{\mathtt{LC}}}

\newcommand{\discrim}{\mathop{\mathtt{discrim}}}
\newtheorem{theorem}{Theorem}[section]
\newtheorem{proposition}[theorem]{Proposition}
\newtheorem{lemma}[theorem]{Lemma}
\newtheorem{corollary}[theorem]{Corollary}
\newtheorem{example}[theorem]{Example}
\newtheorem{remark}[theorem]{Remark}

\newtheorem{definition}[theorem]{Definition}

\title{Computing Bunches of
Semi-Periodic Solutions of Bivariate
Exponential-Trigonometric Polynomial Equations with
Separated Variables}
\author{Tao Zheng\thanks{zhengtao@xidian.edu.cn}, Hao Yuan}
\affil{School of Mathematics and Statistics, Xidian University}
\date{}

\begin{document}
{\fontsize{12}{14}\selectfont
\maketitle
\begin{abstract}
A bivariate exponential-trigonometric polynomial (BETP) equation with separated variables is of the form 
\[g(x, e^x, y, \sin y, \cos y) = 0\] 
with $g$ a polynomial and $x$, $y$ real variables. Solving BETP equations with separated variables is useful in engineering. Besides, the problems of computing complex roots of rational-coefficient mixed-trigonometric polynomials (MTP) and exponential polynomials, which occur frequently in dynamic systems, can both be reduced to solving a system containing two BETP equations with separated variables:
\[
\left\{\begin{array}{rcl}
    g(x, e^x, y, \sin y, \cos y) & =&0 \\
     h(x, e^x, y, \sin y, \cos y)& = & 0
\end{array}\right.
.\]
The main contributions of this paper are as follows: First, the theory of a class of functions called the analytic algebraic exponential polynomials is developed. Based on the terminal properties of these functions, we show that if some non-degenerate conditions hold for the system above, then there are real numbers $N>0$ and $M>0$ such that all solutions of that system in the quarter
$\{(x, y)\in\R^2| x>N, y>M \}$ lie on the curves of finitely many analytic algebraic exponential polynomials which are increasing and tend to infinity. These solutions consist of finitely many bunches of so-called semi-periodic solutions, and each bunch is entirely distributed along a certain curve. Finally, effective algorithms have been implemented to find those curves and to count those bunches of semi-periodic roots. 
\end{abstract}
\textbf{Keywords}: semi-periodic root, algebraic exponential polynomial, elementary function, mixed trigonometric-polynomial
\section{Introduction}
Solving algebraic and transcendental equations are fundamental problems in the field of scientific computation. This paper focuses on symbolic methods for computing solutions to transcendental equations rather than numerical ones.

Real root isolation is a symbolic method for computing real roots of univariate polynomials. It is fundamental in computational real algebraic geometry and implemented in computer algebra systems like Maple and Mathematica. Real root isolation has also been studied for non-polynomial functions. In \cite{mitchell1990robust}, interval arithmetic was used to isolate roots of analytic functions. More recently, Strzeboński proposed methods for exp-log-arctan functions \cite{strzebonski2008real,strzebonski2012real}, using weak/semi-Fourier sequences (related to concepts in \cite{richardson1991towards}). He also studied tame elementary functions via false derivatives \cite{strzebonski2009real}, with completeness relying on Schanuel's conjecture.

In \cite{achatz2008deciding,mccallum2012deciding}, Achatz, McCallum, and Weispfenning developed a recursive real root isolation algorithm for generalized integral polynomials in trans$(x)$ (i.e., $f(x,\text{trans}(x))$), using pseudo-differentiation, Rolle’s theorem \cite{collins1976polynomial}, and Lindemann’s result \cite{shidlovskii2011transcendental}. Their method applies to $\exp(x),\ln(x),\arctan(x)$, with the $\exp(x)$ case implemented and believed more efficient than \cite{maignan1998solving}. Inspired by \cite{achatz2008deciding} and \cite{strzebonski2011cylindrical}, Xu \emph{et al.} studied a class of quantified exponential polynomial formulas extending polynomial ones \cite{xu2015quantifier}, modifying the isolation algorithm from \cite{achatz2008deciding}. Later, Huang et al. isolated positive roots of poly-powers \cite{huang2018positive}, while Wang and Xu addressed radical expressions \cite{wang2021symbolic}. Moreover, related transcendental decision problems over the reals can be find in \cite{chen2024reduction,lipparini2025satisfiability,ni2024local,lipparini2024satisfiability}.

Among those non-polynomials that have been studied, there is a class of transcendental functions called {\em mixed trigonometric-polynomials} (MTPs). An MTP refers to a function of the form \[f(x,\sin x, \cos x)\] with $f$ a trivariate polynomial with rational coefficients. In \cite{chen2024isolating}, Chen \emph{et al.} developed the first algorithm isolating all the roots of an MTP on the whole real axis and proposed the concept of ``periodic root". This research is restricted to the real roots. However, in many real-word cases we encounter some differential equation whose character equation (can be an MTP or an exponential polynomial) is with rational coefficients and we want to study its complex roots in order to decide the stability of the original differential equation (see \cite{matsumoto2012nonlinear}).

In the following, we introduce how the problem of computing the complex roots of an MTP $f(z,\sin z,\cos z)$ reduces to solving the real solutions of a group of BETP equations: We consider equivalently the complex roots of the function 
\[f\big(z\I,\sin\,(z\I),\cos\,(z\I)\big).\]
 Note that 
\[
\begin{array}{rcl}\vspace{2mm}
     &&f\big(z\I,\sin\,(z\I),\cos\,(z\I)\big)  \\\vspace{2mm}
     &=&f(z\I,\frac{e^z-e^{-z}}{2}\I,\frac{e^z+e^{-z}}{2})\\\vspace{2mm}
     &\in&\mathbb{Q}[\I][z,e^z,e^{-z}].
\end{array}
\]
There is a positive integer $k$ such that the function 
\begin{equation}\label{equ:ine^z}
h(z)=f\big(z\I,\sin\,(z\I),\cos\,(z\I)\big)\cdot (e^{z})^k
\end{equation} is in the ring $\mathbb{Q}[\I][z,e^z]$ 
and it shares the same roots with the function 
\begin{equation}\label{equ:f(iz)}
    f\big(z\I,\sin\,(z\I),\cos\,(z\I)\big).
\end{equation}
Replacing the complex variable $z$ in the function $h(z)$ in formula (\ref{equ:ine^z}) by $a+b\I$ with $a$ and $b$ being real variables, we see that both the real and the imaginary parts of $h(z)$ belong to the function ring $\mathbb{Q}[a,b,e^a,\sin b, \cos b]$. Suppose that $r, s\in\mathbb{Q}[x_1,\dots,x_5]$ are polynomials such that $r(a,b,e^a,\sin b, \cos b)$ and $s(a,b,e^a,\sin b, \cos b)$ equal respectively to the real and the imaginary parts of $h(z)$, then finding a root of $h(z)$ is equivalent to finding a solution $(a,b)$ to the following BETP equations
\begin{equation}\label{equ:ri}\left\{
\begin{array}{rcl}
     r(a,b,e^a,\sin b, \cos b)&=&0  \\
     s(a,b,e^a,\sin b, \cos b)&=&0
\end{array}.
\right.
\end{equation}

The goal of this paper is to describe the real solutions of equations (\ref{equ:ri}) in the quarter 
\[
\{(x,y)\in\R^2\,|\,x>N,y>M\},
\]
with $N$ and $M$ being  sufficiently large positive numbers, via the concept of semi-periodic roots (Definition \ref{定义：半周期根}). The main conclusions are as follows: All the real solutions in that quarter are on the graphs of some finitely many analytic algebraic exponential polynomials (Definition \ref{定义：消失多项式和代数指数多项式}) $\phi^{(1)}<\phi^{(2)}<\cdots<\phi^{(t)}$, which are some functions  defined on the interval $(N,+\infty)$ that tend to $+\infty$. Moreover, the set of the solutions in that quarter can be expressed as a disjoint union of finitely many bunches of semi-periodic roots, each of which is entirely on the graph of a certain function (but there can be more than one bunches on one graph). A bunch of semi-periodic solutions consists of countably many solutions, the $b$-coordinates of which are nearly periodic while the $a$-coordinates are not (Remark \ref{评论：为什么叫做半周期解}). This is why they are called \emph{ a bunch of semi-periodic solutions}. Finally, algorithms have been implemented to compute these functions $\phi^{(i)}$ and the number of those bunches of semi-periodic solutions (Algorithm \ref{算法：半周期根计数}), although the specific methods for calculating the numbers $N$ and $M$ have not yet been addressed in this paper.

The rest of the paper is organized as follows.
The next section first introduces some properties about the resultant and then establishes some basic properties of the exponential polynomials. In Section \ref{节：解析代数指数多项式最终性质}, important properties of the algebraic exponential polynomials and their derivatives are proved. In Section \ref{节：根函数的精细最终性质} we introduce delicate properties of the so-called \emph{root-functions}, which are at the core of how we locate the semi-periodic solutions in Section \ref{sec:semiper}. Finally, Section \ref{节：算法和实验} shows the algorithms counting the bunches of semi-periodic solutions and the corresponding experiments.

\section{Preliminaries}
\subsection{Basic Properties on Resultants}
This subsection contains some properties on resultants that will be used later.
\begin{proposition}\label{prop:bukeyueJieshiFeiling}
    If $g,h\in\mathbb{Q}[x_1,x_2,x_3]$ are co-prime and both contain $x_3$, then $\text{res}_{x_3}(g,h)\neq0.$
\end{proposition}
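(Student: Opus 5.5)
We view $g$ and $h$ as univariate polynomials in $x_3$ over the coefficient ring $A=\mathbb{Q}[x_1,x_2]$, whose fraction field is $K=\mathbb{Q}(x_1,x_2)$. We then use the standard characterization of a vanishing resultant. For $g,h\in A[x_3]$ of positive degree in $x_3$, $\text{res}_{x_3}(g,h)=0$ holds if and only if $g$ and $h$ have a common factor of positive degree in $x_3$ in $K[x_3]$. By Gauss's lemma, since $A$ is a UFD, this is equivalent to $g$ and $h$ sharing a common factor of positive $x_3$-degree in $A[x_3]=\mathbb{Q}[x_1,x_2,x_3]$. The hypothesis that both $g$ and $h$ contain $x_3$ guarantees exactly that $\deg_{x_3}g\ge1$ and $\deg_{x_3}h\ge1$, so the Sylvester matrix is the one in the usual setting.

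First, suppose for contradiction that $\text{res}_{x_3}(g,h)=0$. Write $m=\deg_{x_3}g$ and $n=\deg_{x_3}h$. The Sylvester matrix is singular over $K$, so there exist $u,v\in K[x_3]$, not both zero, with $\deg u<n$, $\deg v<m$ and $ug+vh=0$. Since the leading coefficients of $g$ and $h$ are nonzero, neither $u$ nor $v$ is zero. Then $g\mid vh$ in $K[x_3]$, and $\deg v<\deg g$, so $g$ and $h$ cannot be coprime in $K[x_3]$. Hence they share an irreducible factor $p\in K[x_3]$ with $\deg_{x_3}p\ge1$.

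Second, we lift $p$ back to $\mathbb{Q}[x_1,x_2,x_3]$. Clearing denominators and taking the primitive part with respect to $A$ gives a primitive polynomial $\tilde p\in A[x_3]$ of positive $x_3$-degree that is associate to $p$ over $K$. Gauss's lemma says that if a primitive polynomial divides $g$ in $K[x_3]$, it divides $g$ in $A[x_3]$. Hence $\tilde p$ divides both $g$ and $h$ in $\mathbb{Q}[x_1,x_2,x_3]$. It is nonconstant, which contradicts the assumption that $g$ and $h$ are coprime.

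The only delicate step is the Gauss's lemma lift. It requires noting that $A=\mathbb{Q}[x_1,x_2]$ is a UFD, so that primitive parts and the divisibility transfer are available. Everything else is the textbook argument relating the Sylvester matrix to the existence of $u,v$. We would also check that the hypothesis "contain $x_3$" is what rules out the degenerate case where one polynomial is free of $x_3$. In that case the resultant would be a power of that polynomial, and the statement as phrased would not need it.
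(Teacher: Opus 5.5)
Your proof is correct and follows the same route as the paper: a vanishing $\text{res}_{x_3}(g,h)$ forces a common factor of positive $x_3$-degree, which contradicts co-primality. The paper only asserts this standard fact, whereas you justify it through the Sylvester-matrix kernel over $\mathbb{Q}(x_1,x_2)$ and the Gauss's lemma lift back to $\mathbb{Q}[x_1,x_2,x_3]$.
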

\begin{proof}
    If $\text{res}_{x_3}(g,h)=0$, then $g$  and $h$ have a common factor whose degree with respect to $x_3$ is positive, contradicting the co-prime assumption.
\end{proof}
\begin{proposition}\label{prop:bukeyuePanbieshiFeiling}
    If $g\in\mathbb{Q}[x_1,x_2,x_3]$ is irreducible with $\deg(g,x_3)>0$, then $\discrim(g,x_3)\neq0.$
\end{proposition}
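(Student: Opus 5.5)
The plan is to reduce to Proposition~\ref{prop:bukeyueJieshiFeiling}, using the standard identity linking the discriminant to the resultant of $g$ and its derivative. Write $d=\deg(g,x_3)>0$ and let $g_3=\partial g/\partial x_3$. Up to a nonzero constant factor and a factor of the leading coefficient $\lc(g,x_3)$, we have
\[
\text{res}_{x_3}(g,g_3)=\pm\,\lc(g,x_3)\cdot\discrim(g,x_3).
\]
Since $\lc(g,x_3)$ is a nonzero polynomial in $\mathbb{Q}[x_1,x_2]$, it suffices to prove that $\text{res}_{x_3}(g,g_3)\neq0$.

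The case split depends on $d$. If $d=1$, then $\discrim(g,x_3)$ is by convention a nonzero constant (equivalently, $g_3=\lc(g,x_3)$ is free of $x_3$, and the resultant is a power of it, hence nonzero), so the claim holds trivially. If $d\ge2$, then $g_3$ has $x_3$-degree exactly $d-1\geq1$. The leading coefficient of $g_3$ is $d\cdot\lc(g,x_3)$, which is nonzero because we are in characteristic zero. So both $g$ and $g_3$ contain $x_3$.

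Next I will show that $g$ and $g_3$ are coprime. Since $g$ is irreducible, any common factor of $g$ and $g_3$ is either a constant or an associate of $g$. The second option is impossible: $g_3\neq0$ has strictly smaller $x_3$-degree than $g$, so $g$ cannot divide $g_3$. Hence $\gcd(g,g_3)=1$, and Proposition~\ref{prop:bukeyueJieshiFeiling} gives $\text{res}_{x_3}(g,g_3)\neq0$. Therefore $\discrim(g,x_3)\neq0$.

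The only delicate point is the bookkeeping in the discriminant--resultant identity, namely the leading-coefficient factor and the degenerate case $d=1$. Characteristic zero is exactly what guarantees $\deg(g_3,x_3)=d-1$, so that the Sylvester-matrix formulation of the resultant applies with the expected degrees.
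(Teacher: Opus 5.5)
Your proof is correct and follows the paper's argument. Irreducibility of $g$, together with $0\neq\partial g/\partial x_3$ having smaller $x_3$-degree, makes $g$ and $\partial g/\partial x_3$ coprime, so their resultant, and hence $\discrim(g,x_3)$, is nonzero. You are in fact slightly more careful than the paper on two points. You treat $\deg(g,x_3)=1$ separately, where Proposition~\ref{prop:bukeyueJieshiFeiling} does not literally apply because $\partial g/\partial x_3$ is free of $x_3$. You also state the key fact as $g\nmid\partial g/\partial x_3$, not merely $\partial g/\partial x_3\neq g$.
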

\begin{proof}
    The only non-constant factor of $g$ is $g$ itself, while $\deg(\partial g/\partial x_3,x_3)<\deg(g,x_3)$ indicating $\partial g/\partial x_3\neq g$. Thus $g$ and $\partial g/\partial x_3$ are co-prime. Their resultant and hence $\discrim(g,x_3)$ are both nonzero.
\end{proof}
\begin{proposition}\label{prop:panbieshiFeilingChongyao}
    For $g\in\mathbb{Q}[x_1,x_2,x_3]$ with $\deg(g,x_3)>0$, then $\discrim(g,x_3)\neq0$ iff
    \begin{equation}\label{equ:x3Wupingfang}
    g=p_1^{n_1}(x_1,x_2)\cdots p_k^{n_k}(x_1,x_2)q_1(x_1,x_2,x_3)q_2(x_1,x_2,x_3)\cdots q_m(x_1,x_2,x_3)
    \end{equation}
    with $k\geq0$, $m\geq1$, $n_i\geq0$, and $p_i$, $q_j$ different irreducible factors.
\end{proposition}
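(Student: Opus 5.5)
The plan is to factor $g$ into irreducibles in the UFD $\mathbb{Q}[x_1,x_2,x_3]$ and to separate the factors by whether they involve $x_3$. Write
\[
g=\prod_i p_i^{n_i}(x_1,x_2)\cdot\prod_j q_j^{e_j}(x_1,x_2,x_3),
\]
where the $p_i$ are the distinct irreducible factors free of $x_3$ and the $q_j$ are the distinct irreducible factors with $\deg(q_j,x_3)>0$. Because $\deg(g,x_3)>0$, there is at least one $q_j$, so $m\ge1$. The statement then amounts to showing that $\discrim(g,x_3)\neq0$ if and only if every $e_j=1$. The factors $p_i$ contribute only to the leading coefficient and content with respect to $x_3$, so they should not affect whether the discriminant vanishes. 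The key tool is the standard fact, already used in Proposition \ref{prop:bukeyueJieshiFeiling}, that $\text{res}_{x_3}(g,g')=0$ iff $g$ and $g'=\partial g/\partial x_3$ share a factor of positive $x_3$-degree. Here I use $\discrim(g,x_3)=\pm\text{res}_{x_3}(g,g')/\lc(g,x_3)$, and the leading coefficient is nonzero, so the discriminant vanishes exactly when that resultant does.

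For the direction ($\Leftarrow$), assume every $e_j=1$. Suppose some irreducible $q$ with $\deg(q,x_3)>0$ divides both $g$ and $g'$. Then $q=q_j$ for some $j$. By the product rule,
\[
g'=q_j'\cdot\frac{g}{q_j}+q_j\cdot\Big(\frac{g}{q_j}\Big)',
\]
so $q_j$ must divide $q_j'\cdot g/q_j$. It cannot divide $g/q_j$, because the $q$'s are distinct, each appears once, and the $p_i$ do not contain $x_3$. So $q_j\mid q_j'$. Since $q_j'\ne0$ (we are in characteristic $0$ and $\deg(q_j,x_3)>0$) and $\deg(q_j',x_3)<\deg(q_j,x_3)$, this is impossible. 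This is the same argument as in Proposition \ref{prop:bukeyuePanbieshiFeiling}. Hence the resultant is nonzero, and so is the discriminant.

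For the direction ($\Rightarrow$), suppose some $e_j\ge2$. Then $q_j^2\mid g$, and differentiating $g=q_j^2h$ gives $g'=2q_jq_j'h+q_j^2h'$, which is divisible by $q_j$. So $g$ and $g'$ share a factor of positive $x_3$-degree. The resultant vanishes, and hence so does the discriminant.

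The main obstacle is the bookkeeping that links the discriminant to the resultant. With the usual definition, $\text{res}_{x_3}(g,g')=\pm\lc(g,x_3)\discrim(g,x_3)$, the claim that the resultant vanishes iff there is a common factor of positive $x_3$-degree must be justified over the domain $\mathbb{Q}[x_1,x_2]$. I would do this by passing to its fraction field $\mathbb{Q}(x_1,x_2)$ and applying Gauss's lemma. I would also remark that the $p_i^{n_i}$ play no role, since $x_3$-free factors are irrelevant to common factors of positive $x_3$-degree. This is why the statement allows arbitrary exponents $n_i$ on the $p_i$ but requires exponent $1$ on the $q_j$.
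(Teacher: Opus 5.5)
Your proof is correct and follows the same route as the paper. Both reduce the claim to whether some irreducible factor of positive $x_3$-degree occurs with exponent at least $2$, using that such a factor is exactly a common factor of $g$ and $\partial g/\partial x_3$; your product-rule argument and the resultant--discriminant bookkeeping just spell out the ``verification by contradiction'' that the paper leaves to the reader.
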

\begin{proof}
    This is to say $\discrim(g,x_3)=0$ iff there is an irreducible factor of $g$ containing $x_3$ with an exponent $\geq2$ in the factorization of $g$.
    
    If such factor does exist, it is a   common factor of $g$ and $\partial g/\partial x_3$, implying that $\discrim(g,x_3)=0$.

    If such factor does not exist, then $g$ has the form in equation (\ref{equ:x3Wupingfang}). One verifies by contradiction that every factor of $g$ containing $x_3$ is not a factor of $\partial g/\partial x_3$. Thus $g$ and $\partial g/\partial x_3$ have no common factor containing $x_3$, implying that $\discrim(g,x_3)\neq0$.
\end{proof}
\begin{lemma}\label{lemma:mouyinziZhihanx_1}
    Suppose $p\in\mathbb{Q}[x_1,x_2,x_3]$ is irreducible with $\deg(p,x_3)>0$. If $p(x_1,x_2,x_1x_3)=gh$ for some $g,h\in\mathbb{Q}[x_1,x_2,x_3]$, then one of $g$ and $h$ equals $cx_1^n$ with $c\in\mathbb{Q}^*$ and $n\geq 0.$ 
\end{lemma}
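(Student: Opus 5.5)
The plan is to pass to the Laurent polynomial ring $L=\mathbb{Q}[x_1,x_1^{-1},x_2,x_3]$. There the substitution $x_3\mapsto x_1x_3$ becomes invertible, so irreducibility transfers directly, and the units of $L$ are exactly the elements $cx_1^n$ with $c\in\mathbb{Q}^*$ and $n\in\Z$.

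\textbf{Step 1: $p$ stays prime in $L$.} $\mathbb{Q}[x_1,x_2,x_3]$ is a UFD, so the irreducible $p$ is prime. $L$ is the localization of $\mathbb{Q}[x_1,x_2,x_3]$ at the multiplicative set $\{x_1^k\}$. Since $\deg(p,x_3)>0$, $p$ is not associate to $x_1$, so $p\nmid x_1^k$ for every $k$. By the standard behaviour of primes under localization, $p$ is then still a prime, and in particular a nonunit, in $L$. A quick justification of the nonunit claim: if $pq=x_1^k$, then $p$ would divide $x_1^k$ in the polynomial ring.

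\textbf{Step 2: transport along the automorphism.} Let $\sigma\colon L\to L$ be the $\mathbb{Q}$-algebra map fixing $x_1,x_2$ and sending $x_3\mapsto x_1x_3$. Its inverse sends $x_3\mapsto x_1^{-1}x_3$, so $\sigma$ is a ring automorphism. It therefore maps primes to primes, and $\sigma(p)=p(x_1,x_2,x_1x_3)$ is prime, hence irreducible, in $L$.

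\textbf{Step 3: conclude.} Suppose $p(x_1,x_2,x_1x_3)=gh$ with $g,h\in\mathbb{Q}[x_1,x_2,x_3]\subset L$. Both factors are nonzero because the left side is nonzero. Irreducibility in $L$ forces $g$ or $h$ to be a unit of $L$, that is, of the form $cx_1^n$ with $c\in\mathbb{Q}^*$ and $n\in\Z$. That element is an honest polynomial, so $n\ge 0$, which is the claim.

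\textbf{Main obstacle.} The only nontrivial point is Step 1, namely that irreducibility in the polynomial ring survives inverting $x_1$. This is exactly where the hypothesis $\deg(p,x_3)>0$ is used, since it rules out $p\sim x_1$. I would justify it by the prime-ideal argument above. Everything else is formal.
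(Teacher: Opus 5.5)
Your proof is correct and is essentially the paper's argument in more structural language. The paper substitutes $x_3=t/x_1$ (your $\sigma^{-1}$) and clears powers of $x_1$ to get $x_1^{m+k}p(x_1,x_2,t)=GH$ in the polynomial ring; primality of $p$ then forces one of $G,H$ to divide $x_1^{m+k}$, which is your ``$p$ stays prime after inverting $x_1$, so one factor is a unit $cx_1^n$ of $L$'' done by hand, and the one minor difference is that the paper never invokes $\deg(p,x_3)>0$, whereas you use it only to exclude $p\sim x_1$, where the conclusion is trivial anyway.
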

\begin{proof}
    Taking $x_3=t/x_1$ in the equation $p(x_1,x_2,x_1x_3)=gh$, we obtain an equation in the field $\mathbb{Q}(x_1,x_2,t)$:
    \[p(x_1,x_2,t)=g(x_1,x_2,t/x_1)h(x_1,x_2,t/x_1).
    \]
    Suppose the minimal exponent of $x_1$ in those terms of $g(x_1,x_2,t/x_1)$ is $\ell_g\in\mathbb{Z}$, we may define $m=\max\{-\ell_g,0\}$. A similar number $\ell_h$ for $h(x_1,x_2,t/x_1)$ can also be defined and we set $k=\max\{-\ell_h,0\}$. Then we have the following equation in $\mathbb{Q}[x_1,x_2,t]$:
    \begin{equation}\label{equ:fenmuFenjie}
x_1^{m+k}p(x_1,x_2,t)=(x_1^mg(x_1,x_2,t/x_1))(x_1^kh(x_1,x_2,t/x_1)),
    \end{equation}
with $x_1^mg(x_1,x_2,t/x_1)\triangleq G$ and $x_1^kh(x_1,x_2,t/x_1)\triangleq H$ polynomials in $\mathbb{Q}[x_1,x_2,t]$. Since $p(x_1,x_2,t)$ is irreducible, we have $p(x_1,x_2,t)|G$ or $p(x_1,x_2,t)|H.$ Suppose $p(x_1,x_2,t)|G$, from equation (\ref{equ:fenmuFenjie}) we have 
    \begin{equation}
x_1^{m+k}=\frac{G}{p(x_1,x_2,t)}H,
    \end{equation}
which means that  $H|x_1^{m+k}$. Therefore, \[h(x_1,x_2,t/x_1)=H/x_1^k=cx_1^n\] with $c\in\mathbb{Q}^*$ and $m\geq n\in\mathbb{Z}$. But this indicates that the polynomial $h(x_1,x_2,x_3)$ does not contain $x_2$ and $x_3$. Thus $h(x_1,x_2,x_3)=cx_1^n$ with $n\geq0$.
\end{proof}
This lemma indicates the proposition below:
\begin{proposition}\label{prop:qx1x3Fenjie}
    Suppose $p\in\mathbb{Q}[x_1,x_2,x_3]$ is irreducible with $\deg(p,x_3)>0$, then \[p(x_1,x_2,x_1x_3)=x_1^nf(x_1,x_2,x_3)\]
    with $n\geq0$ and $f$ irreducible.
\end{proposition}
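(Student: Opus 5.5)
The plan is to pull out the largest power of $x_1$ dividing $P:=p(x_1,x_2,x_1x_3)$ and show that the remaining cofactor is irreducible, using Lemma \ref{lemma:mouyinziZhihanx_1}.

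First, $P$ is a nonzero polynomial in $\mathbb{Q}[x_1,x_2,x_3]$. Indeed, the substitution $x_3\mapsto x_1x_3$ is injective on $\mathbb{Q}[x_1,x_2,x_3]$, since we can invert it in $\mathbb{Q}(x_1,x_2,x_3)$ by $x_3\mapsto x_3/x_1$. Let $n\ge 0$ be the largest integer with $x_1^n\mid P$, and write $P=x_1^nf$ with $x_1\nmid f$.

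Next we check that $f$ is not a constant. Suppose $f=c\in\mathbb{Q}^*$. Then $P=cx_1^n$. Substituting $x_3\mapsto x_3/x_1$ gives $p(x_1,x_2,x_3)=cx_1^n$. This does not involve $x_3$, contradicting $\deg(p,x_3)>0$.

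Now we prove $f$ is irreducible. Suppose $f=gh$ with $g,h$ non-constant. Then
\[
p(x_1,x_2,x_1x_3)=(x_1^ng)\,h,
\]
which is a factorization of the form required by Lemma \ref{lemma:mouyinziZhihanx_1}. The lemma says that either $x_1^ng$ or $h$ equals $c'x_1^m$ with $c'\in\mathbb{Q}^*$ and $m\ge 0$.

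If $h=c'x_1^m$, then $m\ge1$ because $h$ is non-constant. So $x_1\mid h\mid f$, contradicting the maximality of $n$.

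If $x_1^ng=c'x_1^m$, then $g=c'x_1^{m-n}$. Since $g$ is non-constant, $m-n\ge 1$, so again $x_1\mid g\mid f$, the same contradiction.

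Hence $f$ is irreducible, and $P=x_1^nf$ as claimed.

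The main point needing care is the choice of $n$ as the exact power of $x_1$ in $P$. It is this maximality that rules out the monomial factor $cx_1^m$ that Lemma \ref{lemma:mouyinziZhihanx_1} allows. The rest follows directly from that lemma.
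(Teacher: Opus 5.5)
Your proof is correct and follows the same route as the paper. Both factor out the exact power $x_1^n$ of $p(x_1,x_2,x_1x_3)$, apply Lemma~\ref{lemma:mouyinziZhihanx_1} to the factorization $(x_1^n g)\,h$, and use the maximality of $n$ to exclude the monomial factor. You are slightly more careful than the paper, which takes for granted that $p(x_1,x_2,x_1x_3)\neq 0$ and that $f$ is non-constant; you check both.
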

\begin{proof}
    Let $n\geq0$ be the nonnegative integer such that $x_1^n|p(x_1,x_2,x_1x_3)$ but $x_1^{n+1}\nmid p(x_1,x_2,x_1x_3)$. Set $f=p(x_1,x_2,x_1x_3)/x_1^n$, then $x_1\nmid f$. We need to show that $f$ is irreducible.

    Assume on the contrary that $f=f_1f_2$. Then, since $x_1\nmid f$, neither $f_1$ nor $f_2$ is of the form $cx_1^m$, where $c\in\mathbb{Q}^*$ and $m\geq0$. Therefore, we have 
    \[p(x_1,x_2,x_1x_3)=(x_1^nf_1)\cdot f_2\]
    with two factors $(x_1^nf_1)$ and $f_2$. Neither of them is of the form $cx_1^m$ with $c\in\mathbb{Q}^*$ and $m\geq0$. This contradicts Lemma \ref{lemma:mouyinziZhihanx_1}.
\end{proof}
The following proposition is useful:
\begin{proposition}\label{prop:x1x3Baochix3wupingfang}
Suppose $g(x_1,x_2,x_3)\in\mathbb{Q}[x_1,x_2,x_3]$ satisfies $\deg(g,x_3)>0$  and $\discrim(g,x_3)\neq0$, then the polynomial $h(x_1,x_2,x_3)=g(x_1,x_2,x_1x_3)$ also satisfies these conditions.
\end{proposition}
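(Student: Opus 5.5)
The plan is to work through the factorization characterization of Proposition \ref{prop:panbieshiFeilingChongyao} and transport it along the substitution $x_3\mapsto x_1x_3$ by means of Proposition \ref{prop:qx1x3Fenjie}.

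\emph{Step 1: the degree in $x_3$.} Write $g=\sum_{j=0}^{d}a_j(x_1,x_2)x_3^j$ with $d=\deg(g,x_3)>0$ and $a_d\neq0$. Then
\[
h=\sum_{j=0}^{d}a_j(x_1,x_2)\,x_1^jx_3^j .
\]
Its leading coefficient in $x_3$ is $a_dx_1^d\neq0$. Hence $\deg(h,x_3)=d>0$, which is the first condition.

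\emph{Step 2: factor $g$.} Since $\discrim(g,x_3)\neq0$, Proposition \ref{prop:panbieshiFeilingChongyao} lets us write
\[
g=p_1^{n_1}\cdots p_k^{n_k}\,q_1\cdots q_m .
\]
Here the $p_i\in\mathbb{Q}[x_1,x_2]$, the $q_j$ are pairwise distinct irreducibles with $\deg(q_j,x_3)>0$, and $m\ge1$.

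\emph{Step 3: apply the substitution factor by factor.} The factors $p_i$ are unchanged by the substitution. By Proposition \ref{prop:qx1x3Fenjie}, each $q_j(x_1,x_2,x_1x_3)=x_1^{s_j}f_j$ with $f_j$ irreducible, and $\deg(f_j,x_3)=\deg(q_j,x_3)>0$ by the same leading-coefficient observation as in Step 1. So
\[
h=x_1^{\sum s_j}\,p_1^{n_1}\cdots p_k^{n_k}\,f_1\cdots f_m .
\]
In this expression $x_1$ and the $p_i$ do not involve $x_3$. It has the shape required by Proposition \ref{prop:panbieshiFeilingChongyao}, once $x_1$ is merged with a $p_i$ if it already appears among them (exponents are allowed to be arbitrary). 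The only remaining requirement is that the $f_j$ be pairwise distinct, meaning non-associate.

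\emph{Step 4: distinctness of the $f_j$, the main obstacle.} Suppose $f_i=cf_j$ for some $i\neq j$ and $c\in\mathbb{Q}^*$. Then
\[
q_i(x_1,x_2,x_1x_3)\,x_1^{s_j}=c\,x_1^{s_i}\,q_j(x_1,x_2,x_1x_3).
\]
Pass to the field $\mathbb{Q}(x_1,x_2,t)$ and substitute $x_3=t/x_1$, as in the proof of Lemma \ref{lemma:mouyinziZhihanx_1}. This gives $q_i(x_1,x_2,t)=c\,x_1^{s_i-s_j}q_j(x_1,x_2,t)$ as polynomials. Since $q_i$ and $q_j$ are irreducible and contain $t$, neither is divisible by $x_1$. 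Comparing powers of $x_1$ forces $s_i=s_j$, and then $q_i=cq_j$, contradicting the distinctness of the $q_j$.

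Having verified all the conditions, Proposition \ref{prop:panbieshiFeilingChongyao} gives $\discrim(h,x_3)\neq0$. As an alternative route one could relate $\discrim(h,x_3)$ directly to $\discrim(g,x_3)$ through homogeneity of the discriminant in the roots: the roots of $h$ in $x_3$ are those of $g$ divided by $x_1$, giving $\discrim(h,x_3)=x_1^{e}\discrim(g,x_3)$ for some $e\ge0$. I would keep the factorization argument as the main proof, since it reuses the preceding propositions.
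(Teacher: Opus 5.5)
Your proof is correct and follows the paper's route. You factor $g$ via Proposition \ref{prop:panbieshiFeilingChongyao}, apply Proposition \ref{prop:qx1x3Fenjie} to each $q_j$, and rule out a repeated factor by substituting $x_3=t/x_1$ to contradict the distinctness (co-primality) of the $q_j$. Your cross-multiplication and comparison of powers of $x_1$ in Step 4 is a slightly cleaner version of the paper's minimal-exponent argument for $\ell_i=\ell_j$.
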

\begin{proof}
    It is clear that $\deg(h,x_3)>0$. To prove that $\discrim(h,x_3)\neq0$ we need to use the factorization 
    \[    g(x_1,x_2,x_3)=p_1^{n_1}(x_1,x_2)\cdots p_k^{n_k}(x_1,x_2)q_1(x_1,x_2,x_3)\cdots q_m(x_1,x_2,x_3)\]
    in Proposition \ref{prop:panbieshiFeilingChongyao}.

    It suffices to show that there is no irreducible polynomial $f(x_1,x_2,x_3)$ with $\deg(f,x_3)>0$ such that $f^2|q_i(x_1,x_2,x_1x_3)$ for some $q_i$, or such that $f|q_i(x_1,x_2,x_1x_3)$ and $f|q_j(x_1,x_2,x_1x_3)$ for some pair $q_i,\,q_j$ with $i\neq j$.

    The factorization given in Proposition \ref{prop:qx1x3Fenjie} already makes  the condition \[f^2|q_i(x_1,x_2,x_1x_3)\] impossible.
When $f|q_i(x_1,x_2,x_1x_3)$ and $f|q_j(x_1,x_2,x_1x_3)$, by Proposition \ref{prop:qx1x3Fenjie} we have 
    \[
    \left\{
    \begin{array}{rcl}
        q_i(x_1,x_2,x_1x_3) &=&x_1^{\ell_i}\cdot(c_if)\\
         q_j(x_1,x_2,x_1x_3)& = &x_1^{\ell_j}\cdot(c_jf)
    \end{array}
    \right.
    \]
    with $\ell_i,\ell_j\geq0$ and $c_i,c_j\in\mathbb{Q}^*$. Taking $x_3=t/x_1$, we have the following equations in the field $\mathbb{Q}(x_1,x_2,t)$:
    \begin{equation}\label{equ:qiqjFenjie}
    \left\{
    \begin{array}{rcl}
        q_i(x_1,x_2,t) &=&c_ix_1^{\ell_i}f(x_1,x_2,t/x_1)\\
         q_j(x_1,x_2,t)& = &c_jx_1^{\ell_j}f(x_1,x_2,t/x_1)
    \end{array}.
    \right.
    \end{equation}
Suppose the minimal exponent of $x_1$ in the terms of $f(x_1,x_2,t/x_1)$ is $\ell\in\mathbb{Z}$, then we have $\ell_i\geq\max\{-\ell,0\}$ since $c_ix_1^{\ell_i}f(x_1,x_2,t/x_1)=q_i(x_1,x_2,t)$ is a polynomial. If $\ell_i>\max\{-\ell,0\}$, then $c_ix_1^{\ell_i}f(x_1,x_2,t/x_1)$ is divided by $x_1$. However, $x_1\nmid q_i(x_1,x_2,t)$ since $q_i$ is irreducible and $\deg(q_i(x_1,x_2,t),t)>0$. Therefore, we have $\ell_i=\max\{-\ell,0\}$. Similarly, $\ell_j=\max\{-\ell,0\}$ as well. Then equations (\ref{equ:qiqjFenjie}) contradicting the condition that  $q_i(x_1,x_2,t)$ and $q_j(x_1,x_2,t)$ are co-prime.
\end{proof}
\subsection{Terminal Properties of Exponential Polynomials}
The function ring of exponential polynomials \[\mathbb{Q}[a,e^a]=\big\{p(a,e^a)\,|\,p\in\mathbb{Q}[x_1,x_2]\big\}\]
with each element a function in the variable $a$, is important in this paper. We prove some propositions about these functions:
\begin{proposition}\label{prop:non0}
    For any non-zero polynomial $c\in\mathbb{Q}[x_1,x_2]$, the function $c(a,e^a)$ tends to $\infty$, $-\infty$ or a non-zero constant as $a\rightarrow \infty$.
\end{proposition}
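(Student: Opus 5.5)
We prove the statement by finding the term of $c(a,e^a)$ that dominates as $a\to\infty$. Write $c(x_1,x_2)=\sum_{j=0}^{d} c_j(x_1)x_2^j$ with $c_j\in\mathbb{Q}[x_1]$, where $d=\deg(c,x_2)$ and $c_d\neq 0$. Then
\[
c(a,e^a)=\sum_{j=0}^{d}c_j(a)e^{ja}.
\]
The plan is to show that the top term $c_d(a)e^{da}$ dominates all the others, and then to read off the limit from it.

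First consider the case $d=0$. Here $c(a,e^a)=c_0(a)$ is a nonzero univariate polynomial. If it is a nonzero constant, the conclusion holds trivially. Otherwise $c_0(a)\to\pm\infty$, with the sign given by the sign of its leading coefficient.

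Now suppose $d\ge 1$. Factor out the top term:
\[
c(a,e^a)=c_d(a)e^{da}\Bigl(1+\sum_{j<d}\frac{c_j(a)}{c_d(a)}e^{-(d-j)a}\Bigr).
\]
For large $a$ we have $c_d(a)\neq0$. Each ratio $c_j(a)/c_d(a)$ is a rational function, so it is bounded by a polynomial in $a$ for large $a$. The key standard fact is $a^k e^{-ma}\to 0$ for every integer $k\ge0$ and every $m\ge1$. Hence the bracket tends to $1$. Moreover, $|c_d(a)|$ is eventually at least a positive constant, since it is a nonzero polynomial. Therefore $|c_d(a)|e^{da}\to\infty$, and the sign of $c(a,e^a)$ is eventually the sign of the leading coefficient of $c_d$. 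So $c(a,e^a)$ tends to $+\infty$ or $-\infty$ accordingly.

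No step is a real obstacle. The only point needing care is to note that a nonzero constant limit can occur only when $d=0$ and $c$ is a constant. In all other cases the limit is infinite.
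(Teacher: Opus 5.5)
Your proof is correct and follows essentially the same route as the paper. The paper orders the monomials $a^\ell(e^a)^m$ lexicographically, comparing the $e^a$-exponent first, and reads the limit off the maximal term $\lambda a^\ell(e^a)^m$, which is exactly the leading term of your $c_d(a)e^{da}$. Your factorization $c(a,e^a)=c_d(a)e^{da}\bigl(1+o(1)\bigr)$ supplies the dominance argument that the paper asserts without detail.
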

\begin{proof}
    We define a lexicographic order between the monomials of $c(a,e^a)$ as follows
    \[a^\ell(e^a)^m\succ a^{\ell'}(e^a)^{m'}\text{\;\; iff \;\;\;}m>m'\lor (m=m'\land \ell>\ell'),\]
for any $\ell,m,\ell',m'\in\mathbb{Z}_{\geq0}\in$. Then we have \[c(a,e^a)=\lambda a^\ell(e^a)^m+\cdots,\] where $\lambda\in\mathbb{Q}$ and $a^\ell(e^a)^m\; (\text{with }\ell,m\in\mathbb{Z}_{\geq0})$ is the maximal monomial. One observes that the limitation of $c(a,e^a)$ only depends on the maximal term $\lambda a^\ell(e^a)^m$. That is, 
\begin{equation}
\lim_{a\rightarrow\infty}c(a,e^a)=\left\{
    \begin{array}{cc}
       \text{sgn}(\lambda)\cdot\infty  &\text{, if }\ell+m>0\\
       \lambda&\text{, if }\ell=m=0
    \end{array}
    \right.,
\end{equation}
from which the conclusion follows.
\end{proof}
The corollaries below are then straightforward:
\begin{corollary}\label{coro:nonzeroforlarge}
    For any non-zero bivariate polynomial $c$ with coefficient in $\mathbb{Q}$, the value of  $c(a,e^a)$ is non-zero for sufficiently large positive $a$. Moreover, either $c(a,e^a)>0$ for sufficiently large positive $a$ or $c(a,e^a)<0$ for sufficiently large positive $a$.
\end{corollary}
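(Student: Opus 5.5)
The corollary follows directly from Proposition \ref{prop:non0}, and I will deduce both parts from that limit statement rather than re-examining the monomials. Let $c\in\mathbb{Q}[x_1,x_2]$ be non-zero. By Proposition \ref{prop:non0}, the limit $L=\lim_{a\rightarrow+\infty}c(a,e^a)$ exists in the extended sense, and it is $+\infty$, $-\infty$, or a non-zero constant $\lambda$.

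The argument splits into three cases according to $L$.
\begin{itemize}
\item If $L=+\infty$, the definition of the limit gives an $N>0$ such that $c(a,e^a)>1>0$ for all $a>N$.
\item If $L=-\infty$, there is likewise an $N>0$ such that $c(a,e^a)<-1<0$ for all $a>N$.
\item If $L=\lambda\neq0$, take $\varepsilon=|\lambda|/2$. There is an $N>0$ such that $|c(a,e^a)-\lambda|<|\lambda|/2$ for all $a>N$. Then $c(a,e^a)$ has the same sign as $\lambda$ on $(N,+\infty)$, so it is non-zero there.
\end{itemize}

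In every case there is an $N$ beyond which $c(a,e^a)$ is non-zero and has constant sign. This gives both the non-vanishing claim and the sign dichotomy: the sign is $\text{sgn}(\lambda)$, where $\lambda$ is the coefficient of the maximal monomial in the order used in the proof of Proposition \ref{prop:non0}.

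There is no real obstacle here, since all the analytic content is already in Proposition \ref{prop:non0}. The only point needing care is the constant-limit case, where one must use $\lambda\neq0$ to keep the function bounded away from zero. This holds because in that case $c$ is the non-zero constant $\lambda$ itself.
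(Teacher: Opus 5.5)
Your proposal is correct and takes the same route as the paper, which states that this corollary follows straightforwardly from Proposition~\ref{prop:non0}. Your three-case split on the limit, including the remark that a finite limit forces $c$ to be the non-zero constant $\lambda$, simply writes out that deduction in full.
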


\begin{corollary}
    For any bivariate polynomial $c$ with coefficient in $\mathbb{Q}$, $c$ is a zero polynomial iff the function $c(a,e^a)$ is identically zero.
\end{corollary}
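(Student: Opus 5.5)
The statement has two directions. The "only if" direction is trivial: if $c$ is the zero polynomial, then $c(a,e^a)=0$ for every $a$, so the function is identically zero.

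For the "if" direction I will argue by contraposition using Corollary \ref{coro:nonzeroforlarge}. Suppose $c\in\mathbb{Q}[x_1,x_2]$ is non-zero. By Corollary \ref{coro:nonzeroforlarge} there is a real number $N$ such that $c(a,e^a)\neq0$ for all $a>N$. In particular the function $c(a,e^a)$ is not identically zero, which proves the contrapositive.

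Equivalently, one can go back to Proposition \ref{prop:non0}. It says the limit of $c(a,e^a)$ as $a\to\infty$ is $\pm\infty$ or a non-zero constant $\lambda$, namely the coefficient of the lexicographically maximal monomial. An identically zero function has limit $0$, which contradicts this. The only point to check is that the maximal monomial in that proof is well defined, so that its coefficient $\lambda\neq 0$. This holds because $c\neq0$ has finitely many monomials with non-zero coefficients, and the order $\succ$ is a total order on exponent pairs $(\ell,m)$.

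I do not expect any real obstacle, since all the analytic content is already contained in Proposition \ref{prop:non0}.
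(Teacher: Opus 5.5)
Your proof is correct and follows the paper's approach: the paper gives no separate argument and calls this corollary a straightforward consequence of Proposition \ref{prop:non0} (equivalently Corollary \ref{coro:nonzeroforlarge}). Your contrapositive argument is exactly that deduction, and your check that the maximal monomial has a nonzero coefficient fills in the one detail the paper leaves implicit.
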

We call Proposition \ref{prop:non0} and Corollary \ref{coro:nonzeroforlarge} ``the terminal properties" since they describe the behavior of the functions when $a\rightarrow \infty$.

\section{Terminal Properties of Algebraic Exponential Polynomials}\label{节：解析代数指数多项式最终性质}
Let $r,\,s\in\Q[x_1,x_2,\ldots,x_5]$ be polynomials as in equations (\ref{equ:ri}). Set $\mathfrak{o}=x_4^2+x_5^2-1$, then
\begin{equation}
\left\{\begin{array}{cc}
    p= & \text{res}(\text{res}(r,\mathfrak{o},x_4),\text{res}(s,\mathfrak{o},x_4),x_5)\in\mathbb{Q}[x_1,x_2,x_3] \\
    q= & \text{res}(\text{res}(r,\mathfrak{o},x_4),\text{res}(s,\mathfrak{o},x_4),x_2)\in\mathbb{Q}[x_1,x_3,x_5]
\end{array}\right..
\end{equation}
 According to the properties of the resultant operation, both $p$ and $q$ belong to the ideal $\langle r,s,\mathfrak{o}\rangle$ generated in the ring $\mathbb{Q}[x_1,x_2,\ldots,x_5]$. Hence equations (\ref{equ:ri}) imply 
 \begin{equation}\label{equ:pq}
\left\{\begin{array}{lcl}
    p(a,b,e^a)&= & 0 \\
    q(a,e^a,\cos b)&= &0
\end{array}\right..
\end{equation}
This means every solution $(a,b)$ to equation (\ref{equ:ri}) is a solution to equation (\ref{equ:pq}).

The equations $p(a,b,e^a)=0$ and $q(a,e^a,\cos b)=0$ indicate, respectively, that $b$ and $\cos b$ can be ``seen" as algebraic functions about $a$ over the ring $\Q[a,e^a]$ in some sense. Therefore, in this section we study functions which are algebraic over the ring $\mathbb{Q}[a,e^a]$ of exponential polynomials. Before giving a formal definition of that, we need the concept of a terminal function:
\begin{definition}
    A function $\theta: S\rightarrow\R$ is called \emph{a terminal function} if $[M,\infty)$ is a subset of its domain $S$ for some $M>0$. The set of terminal functions is denoted by $\mathscr{T}$. The \emph{germ} of a terminal function $\theta$ is defined by 
    \[
        [\theta]=\{\vartheta\in\mathscr{T}\;|\;\exists M>0\;\forall a\geq M,\;\vartheta(a)=\theta(a)\}.
    \]
    
\end{definition}
Then we have the following definitions which will be used from time to time:
\begin{definition}\label{定义：消失多项式和代数指数多项式}
For a terminal function $\theta$, a polynomial $g(x_1,x_2,x_3)$ in $\Q[x_1,x_2,x_3]$ is called an \emph{annihilating polynomial} of $\theta$ if $g$ is  nonzero and $g(a,e^a,\theta(a))\equiv0$ in the domain of $\theta$. A terminal function with an annihilating polynomial is called \emph{an algebraic exponential polynomial}.
\end{definition}

\begin{definition}\label{def:order}
The order $x_2\succ x_1\succ x_3$ between the variables $x_2, x_1, x_3$ defines a pure lexicographic ordering $\mathscr{O}$ between all tri-variate monomials generated by $x_1, x_2$ and $x_3$, which is called \emph{the exp-leading order}. We know that $\mathscr{O}$ is a well ordering. A \emph{leading monomial} of a nonzero polynomial in $\Q[x_1,x_2,x_3]$ is then defined as its maximal monomial with respect to the ordering $\mathscr{O}$. An annihilating polynomial of an algebraic exponential polynomial $\theta$ is called minimal if it has the minimal leading monomial among those annihilating polynomials of $\theta$. 
\end{definition}

The proposition below describes the terminal property of an analytic algebraic exponential polynomial.
\begin{proposition}\label{prop:rootfunclarge}
    Suppose $\theta$ is an analytic algebraic exponential polynomial with $g\in\Q[x_1,x_2,x_3]\backslash\{0\}$ its annihilating polynomial. Then either $\theta(a)\equiv0$ or there is an $M>0$ so that $\theta(a)\neq0$ for every $a>M$.
\end{proposition}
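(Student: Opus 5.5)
Since $\theta$ is analytic on some interval $(M_0,\infty)$ (more precisely, on a domain containing $[M_0,\infty)$), the natural dichotomy is whether $\theta$ vanishes identically on that interval. Suppose it does not. The plan is to show that the zeros of $\theta$ on $(M_0,\infty)$ cannot accumulate at $+\infty$. Analyticity alone rules out accumulation at finite points, so it only controls each bounded subinterval. The annihilating polynomial $g$ will be used to control what happens near infinity.

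First I will reduce $g$ to a polynomial with nonzero constant term in $x_3$. Write $g(x_1,x_2,x_3)=x_3^k\tilde g(x_1,x_2,x_3)$ with $x_3\nmid\tilde g$, so that $\tilde g(x_1,x_2,0)=:c(x_1,x_2)$ is a nonzero polynomial. The identity $\theta(a)^k\tilde g(a,e^a,\theta(a))\equiv 0$ holds on the domain. Since $\theta\not\equiv0$ and $\theta$ is analytic, $\theta^k$ vanishes only on a discrete set. By continuity, $\tilde g(a,e^a,\theta(a))$ then vanishes on a dense subset of $(M_0,\infty)$, and hence everywhere there. Thus $\tilde g$ is also an annihilating polynomial of $\theta$ on $(M_0,\infty)$.

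Next I apply Corollary \ref{coro:nonzeroforlarge} to $c$: there is $M\ge M_0$ with $c(a,e^a)\neq0$ for all $a>M$. If $\theta(a_0)=0$ for some $a_0>M$, then
\[0=\tilde g(a_0,e^{a_0},\theta(a_0))=\tilde g(a_0,e^{a_0},0)=c(a_0,e^{a_0})\neq0,\]
which is a contradiction. Hence $\theta(a)\neq0$ for every $a>M$, which is the conclusion.

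The only delicate step is the passage from $g$ to $\tilde g$. This is where analyticity is essential: without it, $\theta$ could vanish on a large set while being nonzero elsewhere, and the identity for $\tilde g$ would only hold off the zero set of $\theta$. The argument handles this by using the discreteness of the zeros of a nonzero analytic function together with continuity of $\tilde g(a,e^a,\theta(a))$. One should also check that "$\theta\equiv0$" is meant on the connected interval $(M_0,\infty)$ where $\theta$ is analytic, so that the identity theorem applies there.
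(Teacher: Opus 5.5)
Your proof is correct and follows essentially the same route as the paper. Both factor $g=x_3^k\tilde g$ with $x_3\nmid\tilde g$, use analyticity to conclude that either $\theta\equiv0$ or $\tilde g$ annihilates $\theta$, and then apply Corollary~\ref{coro:nonzeroforlarge} to the nonzero polynomial $\tilde g(x_1,x_2,0)$. The differences are cosmetic: the paper splits into the cases $x_3\nmid g$ and $x_3\mid g$ and applies the identity theorem to whichever of the two zero sets accumulates, while you use discreteness of the zeros of $\theta$ together with continuity of $\tilde g(a,e^a,\theta(a))$ on a dense set.
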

\begin{proof}
\textcolor{blue}{Case 1:} when $x_3$ does not divide $g$.

Assume that $a_0\in\R$ is a root of $\theta$. Then $0=g(a_0,e^{a_0},\theta(a_0))=g(a_0,e^{a_0},0)$. Sine $x_3\not|g$, $g(x_1,x_2,0)$ is not zero polynomial. Corollary \ref{coro:nonzeroforlarge} indicates that there is an $M>0$ such that $g(a,e^{a},0)\neq0$ for every $a\geq M$. Therefore $a_0<M$. But $a_0$ is an arbitrary picked root of $\theta$, so $\theta(a)\neq0$ for every $a\geq M$.

\textcolor{blue}{Case 2:} when $x_3$ divides $g$.

We rewrite $g=x_3^\ell\cdot\hat{g}$ with $\ell\in\Z_{>0}$ and $x_3$ not dividing $\hat{g}$. Since $\theta$ is terminal, there is an $N>0$ so that $\theta(a)^\ell\hat{g}(a,e^a,\theta(a))=0$ for every $a\geq N$.
Define $A=\{a\geq N\;|\;\theta(a)=0\}$ and $B=\{a\geq N\;|\;\hat{g}(a,e^a,\theta(a))=0\}$, then $A\cup B=[N,\infty)$. Therefore, any $a_0> N$ is an accumulation point of $A$ or $B$. But both $\theta(a)$ and $\hat{g}(a,e^a,\theta(a))$ are analytic, the existence of an accumulation point of $A$ or $B$ implies $\theta(a)\equiv0$ or $\hat{g}(a,e^a,\theta(a))\equiv0$, respectively. The latter case where $\hat{g}(a,e^a,\theta(a))\equiv0$ goes back to \textcolor{blue}{Case 1} of the proof since $x_3$ does not divide $\hat{g}$.
\end{proof}
We then have the following corollary:
\begin{corollary}\label{coro:posIfx3notDiv}
    Suppose $\theta$ is an analytic algebraic exponential polynomial with $g\in\Q[x_1,x_2,x_3]\neq0$ its annihilating polynomial and $x_3$ does not divide $g$. Then, either there is an $M>0$ so that $\theta(a)>0$ for every $a>M$, or there is an $M>0$ so that $\theta(a)<0$ for every $a>M$.
\end{corollary}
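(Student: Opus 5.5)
The plan is to combine Case 1 of the proof of Proposition \ref{prop:rootfunclarge} with the intermediate value theorem; analyticity gives the continuity we need.

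First, since $x_3\nmid g$, the polynomial $g(x_1,x_2,0)\in\Q[x_1,x_2]$ is nonzero. By Corollary \ref{coro:nonzeroforlarge} there is an $M>0$ such that $g(a,e^a,0)\neq0$ for all $a\geq M$. Enlarging $M$ if necessary, we may assume $[M,\infty)$ lies in the domain of $\theta$, since $\theta$ is terminal. If $\theta(a_0)=0$ for some $a_0\geq M$, then $0=g(a_0,e^{a_0},\theta(a_0))=g(a_0,e^{a_0},0)$, which is a contradiction. Hence $\theta$ has no zero on $[M,\infty)$. Note that this step does not split into the two alternatives of Proposition \ref{prop:rootfunclarge}: the possibility $\theta\equiv0$ is excluded outright, because it would force $g(a,e^a,0)\equiv0$.

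Second, $\theta$ is analytic, hence continuous, on $[M,\infty)$. A continuous function with no zero on an interval has constant sign there, by the intermediate value theorem. So either $\theta(a)>0$ for every $a>M$, or $\theta(a)<0$ for every $a>M$.

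There is no real obstacle. The only point to check is that the zero-free interval supplied by Corollary \ref{coro:nonzeroforlarge} is connected and lies inside the domain of $\theta$. Both hold after enlarging $M$.
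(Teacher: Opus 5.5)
Your proof is correct and matches the argument the paper intends. The paper states the corollary without proof, as a consequence of Case 1 in the proof of Proposition \ref{prop:rootfunclarge}: that case gives a zero-free tail $[M,\infty)$ when $x_3\nmid g$, and continuity of $\theta$ plus the intermediate value theorem then fix the sign; your remarks that $\theta\equiv 0$ is ruled out directly and that only continuity is needed are both accurate.
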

\begin{definition}\label{def:rootfunctionfield}
The \emph{field of germs} of analytic algebraic exponential polynomials is defined by
\begin{equation}\label{equ:germField}
\mathscr{R}=\{[\theta]\;|\;\theta\text{ is an analytic algebraic exponential polynomial}\}.
\end{equation}
\end{definition}

The following proposition claims that the ``field" of germs of analytic algebraic exponential polynomials $\mathscr{R}$ is indeed a field.
\begin{proposition}\label{prop:isfield}
    The set $\mathscr{R}$ of germs  of analytic algebraic exponential polynomials in Definition \ref{def:rootfunctionfield} is a field.
\end{proposition}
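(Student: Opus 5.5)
Germs of terminal functions form a commutative ring under pointwise operations, and germs of analytic functions on some $(M,\infty)$ are closed under sums, products and negation. So the work is to show three things: that $\mathscr{R}$ contains $0$ and $1$, that it is closed under sums, differences and products (so it is a subring), and that every nonzero element has an inverse in $\mathscr{R}$. Wherever a sum or product has to be defined, restrict both functions to a common tail $[M,\infty)$ on which both are analytic. The constants $0$ and $1$ lie in $\mathscr{R}$, with annihilating polynomials $x_3$ and $x_3-1$.

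For closure under ring operations I would use a resultant argument in the spirit of the first subsection. Let $\theta_1,\theta_2$ have annihilating polynomials $g_1,g_2\in\Q[x_1,x_2,x_3]$. We may assume each $g_i$ has positive degree in $x_3$: if $g_i$ did not involve $x_3$, then $g_i(a,e^a)\equiv 0$, which would make $g_i$ the zero polynomial by the corollary to Proposition \ref{prop:non0}. Put $K=\Q(x_1,x_2)$. Substituting $a$ for $x_1$ and $e^a$ for $x_2$ gives an injective ring map $\Q[x_1,x_2]\to$ functions, again by that corollary, so this identification is harmless. Then $\theta_1(a)$ and $\theta_2(a)$ are, pointwise, roots of $g_1(a,e^a,\cdot)$ and $g_2(a,e^a,\cdot)$. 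For the sum I would take
\[
H(x_1,x_2,z)=\text{res}_{y}\big(g_1(x_1,x_2,y),\,g_2(x_1,x_2,z-y)\big),
\]
and for the product
\[
H(x_1,x_2,z)=\text{res}_{y}\big(g_1(x_1,x_2,y),\,y^{d}g_2(x_1,x_2,z/y)\big),\qquad d=\deg(g_2,x_3).
\]
At $y=\theta_1(a)$ both arguments vanish, so $H(a,e^a,\theta_1+\theta_2)=0$, respectively $H(a,e^a,\theta_1\theta_2)=0$. Negation is handled by $g_1(x_1,x_2,-x_3)$.

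The main obstacle is showing that these resultants are \emph{nonzero} polynomials. I would argue over the algebraic closure $\overline{K}$. Write $g_1=c_1\prod(y-\alpha_i)$ and $g_2=c_2\prod(x_3-\beta_j)$. Then the resultant for the sum is, up to a nonzero leading coefficient, $\prod_{i,j}(z-\alpha_i-\beta_j)$, which is a nonzero polynomial in $z$. For the product, first remove any factor $x_3$ from $g_2$; this is allowed because Proposition \ref{prop:rootfunclarge} lets us treat $\theta_2\equiv 0$ separately, and otherwise the cofactor $\hat g_2$ still annihilates $\theta_2$ by the analytic-continuation argument used in its proof. After that removal all $\beta_j\neq0$, and the resultant is a nonzero multiple of $\prod(z-\alpha_i\beta_j)$. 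Clearing denominators then gives a nonzero element of $\Q[x_1,x_2,z]$.

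For inverses, take $[\theta]\neq0$. By Proposition \ref{prop:rootfunclarge}, since $\theta\not\equiv0$, there is an $M$ with $\theta(a)\neq0$ for all $a>M$, so $1/\theta$ is analytic on $(M,\infty)$. If $g$ annihilates $\theta$, the case analysis in that proof lets us take $x_3\nmid g$. Then the reversed polynomial $x_3^{\deg(g,x_3)}g(x_1,x_2,1/x_3)$ is nonzero and annihilates $1/\theta$. Hence $[1/\theta]\in\mathscr{R}$ and is the inverse of $[\theta]$, which completes the proof that $\mathscr{R}$ is a field.
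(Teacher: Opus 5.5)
Your proof is correct. It shares the paper's core device: annihilating polynomials for $\theta_1+\theta_2$ and $\theta_1\theta_2$ are built as resultants, and their nonvanishing is read off from the product-over-roots formula. The route differs in two respects.

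First, to show the resultants are nonzero, the paper specializes $(x_1,x_2)$ to a complex point $(c_1,c_2)$ at which the leading coefficients and $h(c_1,c_2,0)$ do not vanish, and then picks $x_3=c_3$ outside a finite set. You instead factor over $\overline{\Q(x_1,x_2)}$. With $m=\deg(g_1,x_3)$, $d=\deg(g_2,x_3)$ and $c_i=\lc_{x_3}(g_i)$, you obtain the expressions $c_1^{d}c_2^{m}\prod_{i,j}(z-\alpha_i-\beta_j)$ and $c_1^{d}c_2^{m}\prod_{i,j}(z-\alpha_i\beta_j)$ directly, which avoids choosing a good specialization point.

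Second, the paper handles $\pm$, $\cdot$ and $/$ by three resultants under the hypothesis $h(x_1,x_2,0)\neq0$, and disposes of the remaining case via a minimal annihilating polynomial. You instead get inverses from the reciprocal polynomial $x_3^{n}g(x_1,x_2,1/x_3)$ and let quotients follow from products, which is shorter.

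One simplification is available: your preliminary stripping of $x_3$-factors (for the product, and for the inverse) is harmless but unnecessary. Since $y^{d}g_2(x_1,x_2,z/y)=c_2\prod_j(z-\beta_j y)$ holds as a polynomial identity, the product formula above is valid even when some $\alpha_i$ or $\beta_j$ is $0$. Likewise, the reciprocal polynomial has constant term $\lc_{x_3}(g)\neq0$, so it is nonzero whether or not $x_3\mid g$.
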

\begin{proof}

Suppose $\theta$ and $\vartheta$ are analytic algebraic exponential polynomials with minimal annihilating polynomial $g$ and $h$. Then both $g$ and $h$ contain $x_3$, otherwise $0\equiv g(a,e^a,\theta(a))\equiv g(a,e^a)$ for a nonzero polynomial $g$, contradicting Corollary \ref{coro:nonzeroforlarge}.

The germs of the constant functions $0$ and $1$ belong to $\mathscr{R}$ and they are the identities of the addition and the multiplication in $\mathscr{R}$ defined below:
\[\left\{\begin{array}{rcl}
     [\theta]\pm[\vartheta]&=&[\theta\pm\vartheta] \\
     
    [\theta]\cdot[\vartheta] & =&[\theta\cdot\vartheta]\\
    
    [\theta]/[\vartheta] & =&[\theta/\vartheta]\text{,\;\;\; if } [\vartheta]\neq[0]
\end{array}\right..
\]
These formulae are well-defined: the point-wise operations $\theta\pm\vartheta, \theta\cdot\vartheta$ and $\theta/\vartheta$ are executed over the intersection of their domains. Moreover, by Proposition \ref{prop:rootfunclarge}, any terminal function $\vartheta(a)$ such that $[\vartheta]\neq[0]$ is nonzero for sufficiently large positive $a$. Hence, the values of $\theta/\vartheta, \theta\pm\vartheta$ and $\theta\cdot\vartheta$ are all well-defined for sufficiently large positive $a$. This is enough for defining new germs.

The key is to find annihilating polynomials for $\theta\pm\vartheta, \theta\cdot\vartheta$ and $\theta/\vartheta$. Noting that $g$ and $h$ are annihilating polynomials for $\theta$ and $\vartheta$, we observes that 
\[\text{res}_w(g(x_1,x_2,x_3\mp w),h(x_1,x_2,w)),\]
\[\text{res}_w(w^{\text{deg}_{x_3}(g)}g(x_1,x_2,x_3/ w),h(x_1,x_2,w)),\]
and
\[\text{res}_w(g(x_1,x_2,x_3\cdot w),h(x_1,x_2,w))\]
are annihilating polynomials of the functions $\theta\pm\vartheta, \theta\cdot\vartheta$ and $\theta/\vartheta$:

It is well-known that they annihilate these functions, the key is to prove that they are nonzero.

Let $\diamond\in\{+,-,\cdot\}$ and $\text{res}_\diamond(x_1,x_2,x_3)=\text{res}_w(g(x_1,x_2,x_3\diamond w),h(x_1,x_2,w))$. Note that $\lc_{x_3}(g)\neq0$ and $\lc_{x_3}(h)\neq0$, if \textcolor{blue}{$h(x_1,x_2,0)\neq0$}, we can chose $x_1=c_1$, $x_2=c_2\in\C$ so that $\lc_{x_3}(g)(c_1,c_2)\neq0$, $\lc_{x_3}(h)(c_1,c_2)\neq0$ and $h(c_1,c_2,0)\neq0$. Set $R_g(c_1,c_2)=\{c\in\C\;|\;g(c_1,c_2,c)=0\}$ and $R_h(c_1,c_2)=\{c\in\C\;|\;h(c_1,c_2,c)=0\}$, then
\begin{equation}\label{equ:resroot}
\text{res}_\diamond(c_1,c_2,x_3)=s\prod_{c\in R_h(c_1,c_2)}g(c_1,c_2,x_3\diamond c)
\end{equation}
with $s=\pm(\lc_{x_3}(h)(c_1,c_2))^{\deg_{x_3}(g)}\neq0$. Since $h(c_1,c_2,0)\neq0$, $R_h(c_1,c_2)$ does not contain $0$. Define
\[
\diamond^{-1}=\left\{\begin{array}{rcl}
     -,& \text{ if }\diamond =+ \\
     +,& \text{ if }\diamond =- \\
     /,& \text{ if }\diamond =\cdot
\end{array}\right.
\]
Then we can chose the value of $x_3=c_3\in\C$ such that $c_3\not\in R_g(c_1,c_2)\diamond^{-1}R_h(c_1,c_2)$. The assignment $x_3=c_3$ then makes the right side of equation (\ref{equ:resroot}) nonzero, meaning that res$_\diamond(c_1,c_2,c_3)\neq0$. Therefore res$_\diamond$ is a nonzero polynomial.

For the multiplication case, we have

\[\text{res}_w(w^{\text{deg}_{x_3}(g)}g(c_1,c_2,x_3/ w),h(c_1,c_2,w))=s\prod_{c\in R_h(c_1,c_2)}(c^{\text{deg}_{x_3}(g)}g(c_1,c_2,x_3/c))\]
with the same $s\neq0$. We can then chose $x_3=c_3\in\C$ so that $c_3\not\in R_g(c_1,c_2)\cdot R_h(c_1,c_2)$, ensuring that $x_3=c_3$ makes the right side of the equation right above nonzero. Therefore, 
\[\text{res}_w(w^{\text{deg}_{c_3}(g)}g(c_1,c_2,c_3/ w),h(c_1,c_2,w))\neq0\] and
\[\text{res}_w(w^{\text{deg}_{x_3}(g)}g(x_1,x_2,x_3/ w),h(x_1,x_2,w))\] is nonzero polynomial.

In a nutshell, we proved that those annihilating polynomials are indeed nonzero when $h(x_1,x_2,0)\neq0$. In the case when \textcolor{blue}{$h(x_1,x_2,0)=0$}, we have $x_3|h$. That is, $h=x_3\hat{h}$ for nonzero polynomial $\hat{h}$, whose leading monomial (w.r.t. $\mathscr{O}$) is inferior to the leading monomial of $h$. We then have $0\equiv h(a,e^a,\vartheta(a))=\vartheta(a)\hat{h}(a,e^a,\vartheta(a))$. By using the same trick in the proof of Proposition \ref{prop:rootfunclarge}, we claim that $\vartheta(a)\equiv0$ or $\hat{h}(a,e^a,\vartheta(a))\equiv0$. But the latter contradicts the minimality of the leading monomial of $h$, thus $\vartheta(a)\equiv0$. Then the annihilating polynomials of the functions $\theta\pm\vartheta$ and $\theta\cdot\vartheta$ (we don't take care of $\theta/\vartheta$) can be just $g$ and $x_3$, which are both nonzero.

In a word, we proved that $[\theta\pm\vartheta], [\theta\cdot\vartheta]$ and $[\theta/\vartheta]$ are in $\mathscr{R}$. 
\end{proof}
\subsection{Algebraic Exponential Polynomials As Root-Functions}
The following theorem shows that two algebraic exponential polynomials can share one annihilating polynomial. In other words, a nonzero polynomial $g\in\mathbb{Q}[x_1,x_2,x_3]$ can annihilate many terminal functions.
\begin{theorem}\label{theorem:rootfunc}
    For any nonzero polynomial $g(x_1,x_2,x_3)$ with coefficient in $\mathbb{Q}$, if $\Delta(x_1,x_2)=\discrim(g,x_3)$ is nonzero polynomial, then there is an $M>0$ and there are some $k\geq0$ analytic functions $\theta_1<\cdots <\theta_k$ from $[M,\infty)$ into $\R$ such that 
\begin{equation}\label{equ:cuHeTuxiang}
    \{(a,y)\in\R^2\;|\;g(a,e^a,y)=0,a\geq M\}=\bigcup_{i=1}^k\Gamma(\theta_i),
\end{equation}
where $\Gamma(\theta_i)=\{(a,\theta_i(a))\;|\;a\geq M\}$ is the graph of $\theta_i$.
\end{theorem}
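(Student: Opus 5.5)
We view $g(a,e^a,y)$ as a polynomial in $y$ whose coefficients are exponential polynomials in $a$, and use the terminal properties of Section 2 to show that its real-root structure is constant for large $a$.

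\textbf{Step 1: make the relevant coefficients nonvanishing.} Write
\[
g=\sum_{j=0}^{d}c_j(x_1,x_2)x_3^j
\]
with $d=\deg(g,x_3)$. If $d=0$, then $g=c_0\neq 0$, and Corollary \ref{coro:nonzeroforlarge} shows that $g(a,e^a,y)\neq0$ for large $a$; the statement then holds with $k=0$. So assume $d>0$. Both $\lc_{x_3}(g)=c_d$ and $\Delta=\discrim(g,x_3)$ are nonzero polynomials in $\Q[x_1,x_2]$. By Corollary \ref{coro:nonzeroforlarge} there is an $M_0>0$ such that $c_d(a,e^a)\neq0$ and $\Delta(a,e^a)\neq0$ for all $a\geq M_0$.

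Hence, for every $a\geq M_0$, the univariate polynomial $y\mapsto g(a,e^a,y)$ has exact degree $d$ and only simple complex roots. The discriminant of the specialization equals the specialization of the discriminant, because the leading coefficient does not vanish.

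\textbf{Step 2: the number of real roots is constant.} On $[M_0,\infty)$ the coefficients $c_j(a,e^a)$ are real-analytic functions of $a$. Roots of a polynomial of fixed degree depend continuously on its coefficients. Since there are never repeated roots, a real root cannot collide with another root and leave the real line, because nonreal roots occur in conjugate pairs and could only enter or leave $\R$ through a double root. Therefore the number $k$ of real roots is constant on the connected interval $[M_0,\infty)$. Order them as $\theta_1(a)<\cdots<\theta_k(a)$.

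\textbf{Step 3: the root functions are analytic.} At each $a_0\geq M_0$ and each simple root $y_0=\theta_i(a_0)$ we have $\partial g/\partial x_3(a_0,e^{a_0},y_0)\neq0$. The real-analytic implicit function theorem, applied to the analytic function $F(a,y)=g(a,e^a,y)$, gives a local analytic solution $y=\eta(a)$ near $a_0$.

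By continuity of roots and simplicity, for $a$ near $a_0$ the local branches remain separated and real. So the $i$-th smallest real root coincides locally with the branch through $\theta_i(a_0)$, and $\theta_i$ is analytic near $a_0$. Since $a_0$ was arbitrary, each $\theta_i$ is analytic on $[M_0,\infty)$, where at the endpoint we use a small open neighbourhood.

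\textbf{Step 4: the graph decomposition.} Equation (\ref{equ:cuHeTuxiang}) holds with $M=M_0$. This is immediate, since for each $a\geq M$ the real zeros of $g(a,e^a,\cdot)$ are exactly $\theta_1(a),\dots,\theta_k(a)$.

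\textbf{Main obstacle.} The main difficulty is Step 2 combined with the identification in Step 3: we must argue rigorously that the ordered real roots form globally well-defined continuous functions with a constant count. I would handle this with a local argument around each $a_0$, using Rouché's theorem or the implicit function theorem applied to all $d$ complex roots. This yields $d$ continuous local root branches; real branches stay real because conjugation permutes the branches and they are distinct. A connectedness argument on $[M_0,\infty)$ then fixes $k$.
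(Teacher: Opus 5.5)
Your proof is correct and follows essentially the same route as the paper. The paper likewise uses Corollary \ref{coro:nonzeroforlarge} to make $\lc_{x_3}(g)(a,e^a)$ and $\Delta(a,e^a)$ nonvanishing for large $a$, then cites the delineability argument of \cite{mccallum1998improved} for the constant real-root count and analyticity; your Steps 2--3 spell that argument out via continuity of roots, conjugation symmetry and the implicit function theorem, and you also handle the edge case $\deg(g,x_3)=0$, which the paper leaves implicit.
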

\begin{proof}
Suppose $\deg(g,x_3)=n$ and $c_n(x_1,x_2)$ is the coefficient of $x_3^n$ in $g$, which is nonzero since $g$ is. Then $c_n(a,e^a)$ is non-zero for sufficiently large positive $a$ by Corollary \ref{coro:nonzeroforlarge}. On the other hand, $\Delta(x_1,x_2)\neq0$ implies $\Delta(a,e^a)>0$ for sufficiently large positive $a$ or $\Delta(a,e^a)<0$ for sufficiently large positive $a$, again by Corollary \ref{coro:nonzeroforlarge}. The rest of the proof is essentially the same to the one of Theorem 2 of  \cite{mccallum1998improved} (see also \cite{chen2024isolating} Proposition 3.2). The key is the fact that $g(a,e^a,y)$ is a polynomial with respect to $y$, although it is not with respect to $a$.
\end{proof}

\begin{definition}\label{def:rootFunctionStart}
The analytic functions $\theta_i$'s are called the \emph{root-functions} of the equation $g(a,e^a,y)=0$, or simply the \emph{root-functions} of the tri-variate polynomial $g$, if there is no ambiguity. The number $M$ is called the \emph{start-point} of those functions.
\end{definition}
\begin{remark}
Clearly, each root-function is an analytic algebraic exponential polynomial annihilated by the nonzero polynomial $g$. Therefore, the terminal properties in Proposition \ref{prop:rootfunclarge} and Corollary \ref{coro:posIfx3notDiv} hold for root-functions.
\end{remark}
The proposition below claims that a continuous terminal function annihilated by a polynomial is exactly one of its root-functions.
\begin{proposition}\label{prop:lianxuTuigenhanshu}
Set $g\in\mathbb{Q}[x_1,x_2,x_3]$ with $\Delta(x_1,x_2)=\discrim(g,x_3)$ nonzero and $\theta_1<\cdots<\theta_k$ its root-functions. If a continuous terminal function $\vartheta$ is annihilated by $g$, then $[\vartheta]=[\theta_i]$ for some $1\leq i\leq k$.
\end{proposition}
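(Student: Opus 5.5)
We use Theorem \ref{theorem:rootfunc}, which gives a start-point $M$ and root-functions $\theta_1<\cdots<\theta_k$ on $[M,\infty)$ such that the zero set $\{(a,y)\mid g(a,e^a,y)=0,\ a\geq M\}$ is exactly $\bigcup_{i}\Gamma(\theta_i)$. The plan is to show that the graph of $\vartheta$, restricted to a large enough half-line, lies inside this union. Continuity of $\vartheta$ and the disjointness of the graphs should then force $\vartheta$ to stay on a single $\Gamma(\theta_i)$.

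First fix $N\geq M$ such that $[N,\infty)$ lies in the domain of $\vartheta$, $\vartheta$ is continuous there, and $g(a,e^a,\vartheta(a))=0$ for all $a\geq N$. Such an $N$ exists because $\vartheta$ is terminal and annihilated by $g$. By equation (\ref{equ:cuHeTuxiang}), every $a\geq N$ then satisfies $\vartheta(a)\in\{\theta_1(a),\dots,\theta_k(a)\}$. In particular $k\geq1$, since otherwise the zero set would be empty for $a\geq M$, which contradicts $\vartheta$ existing.

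Next, for each $i$ define $A_i=\{a\geq N\mid \vartheta(a)=\theta_i(a)\}$. Each $A_i$ is closed in $[N,\infty)$, because $\vartheta-\theta_i$ is continuous. The $A_i$ cover $[N,\infty)$, and they are pairwise disjoint because $\theta_1(a)<\cdots<\theta_k(a)$ strictly at every point. This makes $[N,\infty)$ a finite disjoint union of closed sets. A finite union of closed sets is closed, so each $A_i$ is also open in $[N,\infty)$, being the complement of the union of the others. Since $[N,\infty)$ is connected, exactly one $A_i$ is nonempty, and it equals all of $[N,\infty)$. Therefore $\vartheta(a)=\theta_i(a)$ for all $a\geq N$, which means $[\vartheta]=[\theta_i]$.

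The main point needing care is the pointwise strict ordering of the root-functions on all of $[M,\infty)$, since this is what makes the $A_i$ disjoint. Theorem \ref{theorem:rootfunc} states $\theta_1<\cdots<\theta_k$ as functions on $[M,\infty)$, and we read this as a pointwise strict inequality (as in the McCallum delineability argument). If only a weaker reading were available, we would instead use a gap argument: the distinct roots are separated by a positive continuous margin locally, and then we argue directly that $\vartheta$ cannot jump between root-functions. Everything else is routine.
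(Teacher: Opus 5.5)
Your proof is correct and is essentially the paper's argument. Both rest on the graph decomposition (\ref{equ:cuHeTuxiang}), the pointwise strict ordering of the root-functions, and connectedness of a half-line: the paper applies the intermediate value theorem to $\vartheta-(\theta_i+\theta_{i+1})/2$ to produce a zero of $g$ that lies on no graph, while you partition $[N,\infty)$ into the finitely many disjoint closed sets $A_i$. Your explicit choice of $N$ with $[N,\infty)$ inside the domain of $\vartheta$ is a small improvement, since the paper applies the intermediate value theorem between two points of the domain without checking that the whole interval between them lies in it.
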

\begin{proof}
According to equation (\ref{equ:cuHeTuxiang}), the existence of $\vartheta$ implies that $k\geq1$. 
    
    Suppose that the start-point of these root-functions is $M$. Assume that there are two points on the graph of $\vartheta$ such that they come from the graphs of different root-functions. That is, there are $a_2, a_1\geq M$ both in the domain of $\vartheta$ such that $\vartheta(a_1)=\theta_i(a_1)$ and $\vartheta(a_2)=\theta_j(a_2)$ for some $1\leq i< j\leq k$. Define $\gamma=\vartheta-(\theta_i+\theta_{i+1})/2$, then $\gamma(a_1)=\theta_i(a_1)-(\theta_i(a_1)+\theta_{i+1}(a_1))/2<0$ and \[\gamma(a_2)=\theta_j(a_2)-(\theta_i(a_2)+\theta_{i+1}(a_2))/2\geq\theta_{i+1}(a_2)-(\theta_i(a_2)+\theta_{i+1}(a_2))/2>0.\] Since $\gamma$ is continuous, there is $a_3$ between $a_1$ and $a_2$ such that $\gamma(a_3)=0$, \emph{i.e.}, $\vartheta(a_3)=(\theta_i(a_3)+\theta_{i+1}(a_3))/2$. The point $(a_3,(\theta_i(a_3)+\theta_{i+1}(a_3))/2)$ on the graph of $\gamma$ belongs to the left-hand side of equation (\ref{equ:cuHeTuxiang}) since $\gamma$ is annihilated by $g$, but it does not belong to the right-hand side of that equation because $\theta_i<(\theta_i+\theta_{i+1})/2<\theta_{i+1}$, which is a contradiction.

    Therefore, all points on the graph of $\vartheta$ with $a\geq M$ come from the graph of the same root-function, which means $[\vartheta]=[\theta_i]$ for some $1\leq i\leq k$.
\end{proof}
The following lemma is useful for proving some finer conclusions later.
\begin{lemma}\label{lemma:huzhiTuiGonggongjieYoujie}
 For any two nonzero co-prime polynomials $p$, $q\in\mathbb{Q}[x_1,x_2,x_3]$ and any terminal function $\vartheta$, the solution set of the following equation 
    \[
    \left\{
    \begin{array}{lcr}
         p(a,e^a,\vartheta(a))&=&0 \\
         q(a,e^a,\vartheta(a))&=&0
    \end{array}
    \right.
    \]
    is bounded from above.
\end{lemma}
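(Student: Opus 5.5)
The plan is to eliminate the unknown value $\vartheta(a)$ by taking a resultant with respect to $x_3$. This turns the system into a single condition on an exponential polynomial in $a$ alone, and Corollary~\ref{coro:nonzeroforlarge} then finishes the argument. Nothing about $\vartheta$ beyond being a terminal function will be used; no continuity or analyticity is needed, because the point $a$ and the value $\vartheta(a)$ enter only through a polynomial identity.

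\textbf{Case 1: at least one of $p,q$ does not contain $x_3$.} Say $\deg(p,x_3)=0$, so $p\in\mathbb{Q}[x_1,x_2]$ is nonzero. Any solution $a$ satisfies $p(a,e^a)=0$. By Corollary~\ref{coro:nonzeroforlarge} there is an $M>0$ with $p(a,e^a)\neq0$ for all $a\geq M$. Hence every solution lies below $M$. The case $\deg(q,x_3)=0$ is symmetric, and if neither polynomial contains $x_3$ the same argument applies to either one.

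\textbf{Case 2: both $p$ and $q$ contain $x_3$.} Put $r(x_1,x_2)=\text{res}_{x_3}(p,q)\in\mathbb{Q}[x_1,x_2]$. By Proposition~\ref{prop:bukeyueJieshiFeiling}, the co-primality of $p$ and $q$ gives $r\neq0$. The standard property of resultants supplies polynomials $A,B\in\mathbb{Q}[x_1,x_2,x_3]$ with $Ap+Bq=r$. Now let $a$ be any solution of the system in the domain of $\vartheta$. Substituting $(x_1,x_2,x_3)=(a,e^a,\vartheta(a))$ gives
\[
r(a,e^a)=A(a,e^a,\vartheta(a))\,p(a,e^a,\vartheta(a))+B(a,e^a,\vartheta(a))\,q(a,e^a,\vartheta(a))=0 .
\]
Applying Corollary~\ref{coro:nonzeroforlarge} to the nonzero polynomial $r$ yields an $M>0$ with $r(a,e^a)\neq0$ for $a\geq M$. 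So all solutions satisfy $a<M$, and the solution set is bounded from above.

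There is no serious obstacle here. The only point needing care is the case split. Proposition~\ref{prop:bukeyueJieshiFeiling} requires both polynomials to have positive degree in $x_3$; otherwise the resultant could degenerate, for instance to a power of a constant-in-$x_3$ polynomial, and we would have no guarantee that $r\neq 0$. Case 1 is designed to cover exactly the situations that Proposition~\ref{prop:bukeyueJieshiFeiling} excludes.
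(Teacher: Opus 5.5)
Your proof is correct and follows the paper's argument: the same case split on whether $x_3$ occurs, then elimination of $\vartheta(a)$ through $\text{res}_{x_3}(p,q)\neq 0$ (Proposition~\ref{prop:bukeyueJieshiFeiling}) and an application of Corollary~\ref{coro:nonzeroforlarge}. The only difference is presentational. You evaluate the identity $Ap+Bq=\text{res}_{x_3}(p,q)$ directly instead of arguing by contradiction with a sequence $a_n\to+\infty$, which also sidesteps the question of whether the resultant commutes with specializing $(x_1,x_2)=(a,e^a)$.
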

\begin{proof}
If one of the polynomials $p$ and $q$, say $p$, does not contain $x_3$, then the solution of the equation $p(a,e^a)=0$ is already bounded from above according to Proposition \ref{coro:nonzeroforlarge}.

    Otherwise we have $\deg(p,x_3)>0$ and $\deg(q,x_3)>0$. Assume on the contrary  that there is a sequence $\{a_n\}$ tending to $+\infty$ which is a subset of the solution set of the above equation. Then the resultant 
    \[\text{res}_{x_3}(p(a_n,e^{a_n},x_3),q(a_n,e^{a_n},x_3))=0\] for every $n$ since $p(a_n,e^{a_n},x_3)$ and $q(a_n,e^{a_n},x_3)$ share a common root $x_3=a_n$. This means that the exponential polynomial 
        \[\text{res}_{x_3}(p(a,e^{a},x_3),q(a,e^{a},x_3))\]
        has a sequence of roots $\{a_n\}$ that tends to $+\infty$. By Proposition \ref{coro:nonzeroforlarge}, we claim that 
                \[\text{res}_{x_3}(p(x_1,x_2,x_3),q(x_1,x_2,x_3))=0.\]
    But this indicates that $p$ and $q$ have a common factor containing $x_3$, contradicting the assumption that $p$ and $q$ are co-prime. 
\end{proof}
\begin{proposition}\label{prop:jiexiDaishuzhishuduoxiangshiShiMouyinziGenhanshu}
    Suppose $\vartheta$ is an analytic algebraic exponential polynomials with annihilating polynomial $g\in\mathbb{Q}[x_1,x_2,x_3]$, then $\vartheta$ have the same germ with a root-function of some irreducible factor of $g$ that contains $x_3$.
\end{proposition}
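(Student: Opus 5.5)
We factor $g$ into irreducibles over $\mathbb{Q}$ as
\[g=p_1^{n_1}\cdots p_k^{n_k}\,q_1^{m_1}\cdots q_m^{m_m},\]
where the $p_i\in\mathbb{Q}[x_1,x_2]$ do not contain $x_3$ and each $q_j$ satisfies $\deg(q_j,x_3)>0$. Because $\vartheta$ is terminal, $g(a,e^a,\vartheta(a))=0$ holds for all $a\geq N$, for some $N>0$. By Corollary \ref{coro:nonzeroforlarge}, each $p_i(a,e^a)$ is nonzero for all sufficiently large $a$. Enlarging $N$, we therefore get
\[\prod_j q_j(a,e^a,\vartheta(a))^{m_j}=0\quad\text{for all } a\geq N.\]
In particular $m\geq1$, since otherwise the product would be a nonzero exponential polynomial vanishing identically on $[N,\infty)$.

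We next isolate a single factor $q_j$ that annihilates $\vartheta$. Set $A_j=\{a\geq N\mid q_j(a,e^a,\vartheta(a))=0\}$. These finitely many sets cover $[N,\infty)$, so some $A_j$ is uncountable and hence has an accumulation point in $[N,\infty)$. The function $a\mapsto q_j(a,e^a,\vartheta(a))$ is analytic on the domain of $\vartheta$, which contains a neighbourhood of $[N,\infty)$. By the identity theorem, exactly as in Case 2 of Proposition \ref{prop:rootfunclarge}, it vanishes identically on $[N,\infty)$. Thus $q_j$ is an annihilating polynomial of $\vartheta$ that is irreducible and contains $x_3$. An alternative route is Lemma \ref{lemma:huzhiTuiGonggongjieYoujie}: distinct $q_j$ are pairwise co-prime, so the sets $A_j$ overlap only on a bounded set, but the analyticity argument is shorter.

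Finally, since $q_j$ is irreducible with $\deg(q_j,x_3)>0$, Proposition \ref{prop:bukeyuePanbieshiFeiling} gives $\discrim(q_j,x_3)\neq0$. Hence Theorem \ref{theorem:rootfunc} applies and $q_j$ has root-functions $\theta_1<\cdots<\theta_k$ with some start-point. Since $\vartheta$ is analytic, it is continuous, and it is a terminal function annihilated by $q_j$. Proposition \ref{prop:lianxuTuigenhanshu} then yields $[\vartheta]=[\theta_i]$ for some $i$, which is the claim.

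The only delicate point is the identity-theorem step. It requires the domain of $\vartheta$ to contain a connected interval $[N,\infty)$ on which $\vartheta$ is analytic, which is exactly how analyticity of terminal functions is used in Proposition \ref{prop:rootfunclarge}. It also needs a pigeonhole argument to produce an accumulation point in one of the $A_j$, which works because a finite union of countable sets cannot cover an interval. Everything else is a direct invocation of earlier results.
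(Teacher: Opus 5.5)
Your proof is correct and follows essentially the same route as the paper: factor $g$, then use a pigeonhole/accumulation-point argument with the identity theorem to find one irreducible factor containing $x_3$ that annihilates $\vartheta$, and conclude via Proposition~\ref{prop:bukeyuePanbieshiFeiling} and Proposition~\ref{prop:lianxuTuigenhanshu}. The differences are cosmetic. The paper first invokes Lemma~\ref{lemma:huzhiTuiGonggongjieYoujie} to make the cover by zero sets disjoint, which, as you observe, is not actually needed. It also rules out the $x_3$-free factors at the end rather than at the start, using Corollary~\ref{coro:nonzeroforlarge} just as you do.
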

\begin{proof}
    Suppose $p_1,\ldots,p_m$ are all the different irreducible factors of $g$. If we show that the function $\vartheta(a)$, restricted to some interval $(M,+\infty]$, is annihilated by some $p_i$ with $\deg(p_i,x_3)>0$, then Proposition \ref{prop:bukeyuePanbieshiFeiling} guarantees that $\discrim(p_i,x_3)\neq0$ and Proposition \ref{prop:lianxuTuigenhanshu} helps draw the conclusion.
    
    Note that any two different factors $p_i$ and $p_j$ are co-prime. Applying Lemma \ref{lemma:huzhiTuiGonggongjieYoujie} to $p_i$ and $p_j$ we claim that the solution set of the following equation 
    \[
    \left\{
    \begin{array}{lcr}
         p_i(a,e^a,\vartheta(a))&=&0 \\
         p_j(a,e^a,\vartheta(a))&=&0
    \end{array}
    \right.
    \]
    is bounded from above.

    Therefore, there is $M>0$ such that for every $a>M$, there is one and only one factor $p_{\tau(a)}$ among $p_1,\ldots,p_m$ such that $p_{\tau(a)}(a,e^{a},\vartheta(a))=0$. Define 
    \[
    \mathscr{Z}_i=\{a> M\,|\,\tau(a)=i\}
    \]
    for $1\leq i\leq m.$ Then $(M,+\infty)=\cup_{i=1}^m\mathscr{Z}_i$ and this is a pairwise disjoint union. Since $m$ is finite, any sequence in $(M,+\infty)$ tends to some $a\geq M$ has a subsequence contained in some set $\mathscr{Z}_i$ which also tends to $a\geq M$. This means the subset  $\mathscr{Z}_i$ of the root set of the function $p_i(a,e^{a},\vartheta(a))$ has a limit point. Since $\vartheta(a)$ is analytic over $(M,+\infty)$, so is $p_i(a,e^{a},\vartheta(a))$. Thus, $p_i(a,e^{a},\vartheta(a))\equiv0$ over $(M,+\infty)$.

The last thing we need to prove is that $\deg(p_i,x_3)>0$:  If $\deg(p_i,x_3)=0$, then $p_i\in\mathbb{Q}[x_1,x_2]$ and $p_i(a,e^a)=p_i(a,e^a,\vartheta(a))\equiv0$ over $(M,+\infty)$ with $p_i$ nonzero, contradicting Corollary \ref{coro:nonzeroforlarge}.
\end{proof}
The following theorem gives a decomposition for the set of root-functions of a polynomial according to its irreducible factors:
\begin{theorem}\label{thm:genhanshuFenjie}
    Suppose $g\in\mathbb{Q}[x_1,x_2,x_3]$ satisfies $\deg(g,x_3)>0$ and  $\discrim(g,x_3)\neq0$. By Proposition \ref{prop:panbieshiFeilingChongyao}, we have     \[
    g=p_1^{n_1}(x_1,x_2)\cdots p_k^{n_k}(x_1,x_2)q_1(x_1,x_2,x_3)q_2(x_1,x_2,x_3)\cdots q_m(x_1,x_2,x_3)
    \]
     with $k\geq0$, $m\geq1$, $n_i\geq0$ and $p_i$, $q_j$ different irreducible factors satisfying $\deg(q_j,x_3)>0$ for $1\leq j\leq m$. Then the set of the germs of the root-functions of $g$ is equals to the union of the sets of the germs of the root-functions of the factors $q_j$, $1\leq j\leq m$. Moreover, this union is pairwise disjoint.
\end{theorem}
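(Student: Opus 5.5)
The plan is to prove the two inclusions between the germ sets and then prove disjointness. The tools are Theorem \ref{theorem:rootfunc}, Proposition \ref{prop:lianxuTuigenhanshu} and Lemma \ref{lemma:huzhiTuiGonggongjieYoujie}.

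First, note that the root-functions of each $q_j$ exist. Since $q_j$ is irreducible with $\deg(q_j,x_3)>0$, Proposition \ref{prop:bukeyuePanbieshiFeiling} gives $\discrim(q_j,x_3)\neq0$, so Theorem \ref{theorem:rootfunc} applies to $q_j$. The hypothesis $\discrim(g,x_3)\neq0$ does the same for $g$. We may choose a common start-point $M$ that is at least as large as all the individual start-points, since restricting root-functions to a smaller tail does not change their germs.

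\emph{Inclusion $\supseteq$.} Let $\theta$ be a root-function of some $q_j$. Then $q_j(a,e^a,\theta(a))\equiv0$, so $g(a,e^a,\theta(a))\equiv0$ because $q_j$ divides $g$. Thus $\theta$ is a continuous terminal function annihilated by $g$, and Proposition \ref{prop:lianxuTuigenhanshu} gives $[\theta]=[\theta_i]$ for some root-function $\theta_i$ of $g$.

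\emph{Inclusion $\subseteq$.} Let $\theta_i$ be a root-function of $g$. It is analytic and annihilated by $g$, so Proposition \ref{prop:jiexiDaishuzhishuduoxiangshiShiMouyinziGenhanshu} shows that $[\theta_i]$ equals the germ of a root-function of some irreducible factor of $g$ that contains $x_3$. The only such factors are the $q_j$, because the $p_i$ do not involve $x_3$. Alternatively one can argue directly: on a tail the functions $p_i(a,e^a)$ have no zeros by Corollary \ref{coro:nonzeroforlarge}, so $\prod_j q_j(a,e^a,\theta_i(a))=0$ there. Analyticity together with a pigeonhole argument on the zero sets then forces $q_j(a,e^a,\theta_i(a))\equiv0$ for some $j$, and Proposition \ref{prop:lianxuTuigenhanshu} finishes.

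\emph{Disjointness.} Suppose one germ $[\theta]$ is a root-function germ of both $q_j$ and $q_l$ with $j\neq l$. Then on some tail $[M',\infty)$ we have both $q_j(a,e^a,\theta(a))=0$ and $q_l(a,e^a,\theta(a))=0$. Since $q_j$ and $q_l$ are distinct irreducibles, they are co-prime, so Lemma \ref{lemma:huzhiTuiGonggongjieYoujie} says this common solution set is bounded above. An unbounded tail contradicts this.

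The step needing the most care is the bookkeeping of start-points and germs. One must make sure that "equal germ" in the two inclusions gives equality of the germ sets, and not just a correspondence between functions. Specifically, distinct root-functions of $g$ have distinct germs, because they are strictly ordered on $[M,\infty)$. Everything else follows directly from the cited results.
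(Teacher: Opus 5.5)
Your proposal is correct and follows the paper's proof essentially step for step. It gets the inclusion of $g$'s root-function germs into those of the $q_j$ from Proposition~\ref{prop:jiexiDaishuzhishuduoxiangshiShiMouyinziGenhanshu}, the reverse inclusion from Proposition~\ref{prop:lianxuTuigenhanshu}, and disjointness from Lemma~\ref{lemma:huzhiTuiGonggongjieYoujie} applied to the co-prime factors; your extra checks (existence of root-functions of each $q_j$ via Proposition~\ref{prop:bukeyuePanbieshiFeiling}, and the common start-point bookkeeping) are details the paper leaves implicit.
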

\begin{proof}
    Each root-function of $g$ is an analytic algebraic exponential polynomial annihilated by $g$. Applying Proposition \ref{prop:jiexiDaishuzhishuduoxiangshiShiMouyinziGenhanshu} we claim that each root-function of $g$ has the same germ with one of the root-functions of the factors of $g$ that contains $x_3$.

    For $1\leq j\leq m$, each root-function of the factor $q_j$ is clearly continuous and annihilated by $g$. Applying Proposition \ref{prop:lianxuTuigenhanshu}, we claim that this root-function of the factor $q_j$ has the same germ with one of the root-functions of $g$.

    Lastly, we need to show that for two different factors $q_i$ and $q_j$ with $1\leq i <j \leq m$, if $\theta$ is a root-function of $q_i$ and $\vartheta$ is a root-function of $q_j$, then $[\theta]\neq[\vartheta]$. This follows from Lemma \ref{lemma:huzhiTuiGonggongjieYoujie} since $q_i$ and $q_j$ are co-prime.
\end{proof}

\subsection{Terminal Properties of the Derivative of an Algebraic Exponential Polynomial}
We show first that the derivative of an analytic algebraic exponential polynomial is also an analytic algebraic exponential polynomial.
\begin{proposition}\label{prop:derivativeClose}
    The field of germs of analytic algebraic exponential polynomials is closed under differentiation. That is, $[\theta]\in\mathscr{R}\Rightarrow[\theta']\in\mathscr{R}$ with $\mathscr{R}$ in equation (\ref{equ:germField}).
\end{proposition}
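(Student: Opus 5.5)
Take $[\theta]\in\mathscr{R}$. By Proposition \ref{prop:jiexiDaishuzhishuduoxiangshiShiMouyinziGenhanshu}, after shrinking the domain we may assume $\theta$ is annihilated on some $(M,\infty)$ by an irreducible $g\in\mathbb{Q}[x_1,x_2,x_3]$ with $\deg(g,x_3)>0$. Then $\discrim(g,x_3)\neq0$ by Proposition \ref{prop:bukeyuePanbieshiFeiling}, so $g$ and $\partial g/\partial x_3$ are co-prime. The derivative $\theta'$ is analytic wherever $\theta$ is, so it suffices to exhibit a nonzero annihilating polynomial for $\theta'$. This is the usual implicit-differentiation argument, with $e^a$ handled through $\frac{d}{da}e^a=e^a$.

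Differentiate the identity $g(a,e^a,\theta(a))\equiv 0$ with respect to $a$:
\[
g_{x_1}(a,e^a,\theta(a))+e^a\,g_{x_2}(a,e^a,\theta(a))+g_{x_3}(a,e^a,\theta(a))\,\theta'(a)\equiv0 .
\]
Set $u(x_1,x_2,x_3)=g_{x_1}+x_2g_{x_2}$ and $v=g_{x_3}$, so that $v(a,e^a,\theta(a))\,\theta'(a)+u(a,e^a,\theta(a))\equiv0$.

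\textbf{Claim:} $v(a,e^a,\theta(a))$ is nonzero for large $a$. Since $g$ and $v$ are co-prime, Lemma \ref{lemma:huzhiTuiGonggongjieYoujie} shows that the common zeros of $g(a,e^a,\theta(a))$ and $v(a,e^a,\theta(a))$ are bounded above. The first function vanishes identically, so the zeros of $v(a,e^a,\theta(a))$ are bounded above. Enlarging $M$, we get $\theta'=-u(a,e^a,\theta)/v(a,e^a,\theta)$ on $(M,\infty)$.

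To eliminate $\theta$, introduce a new variable $w$ for $\theta$ and take
\[
R(x_1,x_2,x_3)=\text{res}_w\big(g(x_1,x_2,w),\;x_3\,v(x_1,x_2,w)+u(x_1,x_2,w)\big).
\]
At each $a>M$ the two polynomials in $w$ share the root $w=\theta(a)$ once $x_3=\theta'(a)$ is substituted. Hence $R(a,e^a,\theta'(a))=0$, provided the leading coefficient $\lc_{w}(g)(a,e^a)$ does not vanish; by Corollary \ref{coro:nonzeroforlarge} this holds for large $a$. Thus $R$ annihilates $\theta'$.

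The main obstacle is showing $R\neq0$. Suppose $R=0$. Then $g(x_1,x_2,w)$ and $x_3v+u$ have a common factor of positive degree in $w$. Since $g$ is irreducible (as a polynomial in $x_1,x_2,w$, not involving $x_3$), this forces $g\mid x_3v+u$ in $\mathbb{Q}[x_1,x_2,x_3,w]$. Comparing coefficients of $x_3$ gives $g\mid v=\partial g/\partial w$, which contradicts $\deg(v,w)<\deg(g,w)$ and $v\neq0$.

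Alternatively, one can avoid resultants altogether. $\mathscr{R}$ is a field by Proposition \ref{prop:isfield} and contains the germs of $a$, $e^a$ and $\theta$. The functions $u(a,e^a,\theta)$ and $v(a,e^a,\theta)$ are field combinations of these, hence lie in $\mathscr{R}$, so their quotient $\theta'$ lies in $\mathscr{R}$ as well. This route needs only the Claim that $[v(a,e^a,\theta)]\neq[0]$, and I would present it as the main argument, with the resultant as the explicit annihilating polynomial.
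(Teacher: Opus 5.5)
Your main (field-closure) argument is essentially the paper's proof: differentiate $g(a,e^a,\theta(a))\equiv0$, show $\partial g/\partial x_3\,(a,e^a,\theta(a))$ is nonzero on a tail, and write $\theta'$ as a quotient in the field $\mathscr{R}$ via Proposition \ref{prop:isfield}. The differences are minor. The paper gets the nonvanishing by choosing an annihilator of minimal $x_3$-degree and invoking Proposition \ref{prop:rootfunclarge}, whereas you choose an irreducible annihilator and apply Lemma \ref{lemma:huzhiTuiGonggongjieYoujie} to $g$ and the co-prime $\partial g/\partial x_3$; your explicit resultant annihilator $R$, whose nonvanishing argument is correct, is an extra that the paper does not give.
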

\begin{proof}
Suppose $\theta$ is an analytic algebraic exponential polynomial with $g(x_1,x_2,x_3)$ an annihilating polynomial having minimal degree in $x_3$. If ${\partial g}/{\partial x_3}=0$, then $g\in\mathbb{Q}[x_1,x_2]$ and $g(a,e^a)=g(a,e^a,\theta(a))\equiv0$ with $g$ nonzero, contradicting Corollary \ref{coro:nonzeroforlarge}. Then $g(a,e^a,\theta(a))\equiv0$ implies that
\[\frac{\partial g}{\partial x_1}(a,e^a,\theta(a))+e^a\frac{\partial g}{\partial x_2}(a,e^a,\theta(a))+\theta'(a)\frac{\partial g}{\partial x_3}(a,e^a,\theta(a))\equiv0,\]
with ${\partial g}/{\partial x_3}\neq0$ and $\deg(\frac{\partial g}{\partial x_3},x_3)<\deg(g,x_3)$. Since $g$ is an annihilating polynomial with the minimal degree in $x_3$, the function $\frac{\partial g}{\partial x_3}(a,e^a,\theta(a))$ is not identically zero. By Proposition \ref{prop:isfield}, that function is an analytic algebraic exponential polynomial. By Proposition \ref{prop:rootfunclarge}, there is an $M>0$ so that $\frac{\partial g}{\partial x_3}(a,e^a,\theta(a))\neq0$ for every $a>M$. Thus, we have
\[
\theta'(a)=\frac{-\frac{\partial g}{\partial x_1}(a,e^a,\theta(a))-e^a\frac{\partial g}{\partial x_2}(a,e^a,\theta(a))}{\frac{\partial g}{\partial x_3}(a,e^a,\theta(a))}
\]
for every $a>M$. The right-hand side of the above equation is an analytic algebraic exponential polynomial by Proposition \ref{prop:isfield}, so $[\theta']\in\mathscr{R}$.
\end{proof}
The proposition below describes a terminal property of the derivative of an analytic algebraic exponential polynomial.
\begin{proposition}\label{prop:rootfunclargederi}
    Suppose $\theta$ is an analytic algebraic exponential polynomial. Then either $\theta(a)\equiv c\in\R$ or there is an $M>0$ so that $\frac{\mathrm{d} \theta}{\mathrm{d} a}(a)\neq0$ for every $a>M$.
\end{proposition}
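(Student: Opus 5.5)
The plan is to reduce the statement to Proposition \ref{prop:rootfunclarge} applied to the derivative. By Proposition \ref{prop:derivativeClose}, $[\theta']\in\mathscr{R}$. So there is an analytic algebraic exponential polynomial $\vartheta$ with $[\vartheta]=[\theta']$, and we may take $\vartheta$ to be $\theta'$ itself, restricted to some interval $(M_0,+\infty)$. On this interval $\theta'$ is analytic, because $\theta$ is, and it is annihilated by the nonzero polynomial constructed in the proof of Proposition \ref{prop:derivativeClose}.

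Next I apply Proposition \ref{prop:rootfunclarge} to $\theta'$ on $(M_0,+\infty)$. This gives a dichotomy. In the first case, $\theta'(a)\equiv 0$ on $(M_0,+\infty)$. In the second case, there is an $M>0$ such that $\theta'(a)\neq0$ for every $a>M$, which is exactly the second alternative of the statement. In the first case, $\theta$ is constant, say $\theta\equiv c\in\R$, on $(M_0,+\infty)$. If the domain of $\theta$ is an interval on which $\theta$ is analytic, the identity theorem propagates this constancy to the whole domain, giving the first alternative. Otherwise the conclusion $\theta\equiv c$ is to be read at the level of germs, $[\theta]=[c]$, consistent with how the paper treats terminal functions.

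The one point needing care is the applicability of Proposition \ref{prop:rootfunclarge}, which requires an analytic function with an annihilating polynomial on a single interval. Proposition \ref{prop:derivativeClose} supplies only a germ, so I will unpack its proof. For $a>M_0$, where $\frac{\partial g}{\partial x_3}(a,e^a,\theta(a))\neq0$, $\theta'$ equals a rational expression in $a$, $e^a$ and $\theta(a)$. The resultant constructions of Proposition \ref{prop:isfield} then yield a nonzero polynomial annihilating that expression, hence annihilating $\theta'$, on $(M_0,+\infty)$.

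The case split inside Proposition \ref{prop:rootfunclarge} uses analyticity via accumulation points. That argument works verbatim on $(M_0,+\infty)$, so the only real obstacle is this bookkeeping of domains, not any new estimate.
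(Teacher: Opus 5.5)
Your proposal is correct and follows the paper's proof. Proposition \ref{prop:derivativeClose} makes $\theta'$ an analytic algebraic exponential polynomial, and Proposition \ref{prop:rootfunclarge} applied to $\theta'$ then gives either $\theta'\equiv0$ (so $\theta$ is constant) or $\theta'(a)\neq0$ for all large $a$. Your extra bookkeeping — restricting $\theta'$ to one interval $(M_0,+\infty)$ and reading the constancy conclusion at the level of germs — only makes explicit what the paper leaves implicit.
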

\begin{proof}
By Proposition \ref{prop:derivativeClose}, $\theta'(a)$ is also an analytic algebraic exponential polynomial. Applying Proposition \ref{prop:rootfunclarge} to $\theta'(a)$, we conclude that either $\theta'(a)\equiv0$ or there is $M>0$ so that $\theta'(a)\neq0$ for every $a>M$. In the former case we have $\theta(a)\equiv c\in\mathbb{R}$.
\end{proof}

Proposition \ref{prop:rootfunclargederi} describes the zero-set of the derivative of an analytic algebraic exponential polynomial. The following corollary describe the ``$\lambda$-set" for most of the real numbers $\lambda$.
\begin{corollary}\label{coro:rootfunclargederilambda}
Suppose $\theta$ is an analytic algebraic exponential polynomial with $g\in\Q[x_1,x_2,x_3]$ an annihilating polynomial. Suppose $g(x_1,x_2,x_3+\lambda x_1)\neq0$ for some $\lambda\in\Q$, then either $\theta(a)\equiv\lambda a$ or there is an $M>0$ so that $\frac{\mathrm{d} \theta}{\mathrm{d} a}(a)\neq\lambda$ for every $a>M$.
\end{corollary}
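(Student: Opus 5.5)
Set $\psi(a)=\theta(a)-\lambda a$ and apply Proposition \ref{prop:rootfunclargederi} to $\psi$. Then either $\psi\equiv c$, or $\psi'(a)=\theta'(a)-\lambda\neq0$ for all sufficiently large $a$. The second alternative is exactly the desired conclusion. The proof therefore has two steps: show that $\psi$ is an analytic algebraic exponential polynomial, and then handle the constant case.

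\emph{Step 1: $\psi$ is an analytic algebraic exponential polynomial.} It is analytic, being the difference of two analytic functions on a common terminal interval. For an annihilating polynomial, take
\[\tilde g(x_1,x_2,x_3)=g(x_1,x_2,x_3+\lambda x_1).\]
This is a polynomial with rational coefficients because $\lambda\in\Q$. It is nonzero by hypothesis. It annihilates $\psi$, since
\[\tilde g(a,e^a,\psi(a))=g(a,e^a,\theta(a))\equiv0.\]
(Alternatively, one could cite Proposition \ref{prop:isfield}, since $[\lambda a]\in\mathscr{R}$ with annihilating polynomial $x_3-\lambda x_1$. The explicit $\tilde g$ is more direct, and it is exactly where the hypothesis $g(x_1,x_2,x_3+\lambda x_1)\neq0$ is used.)

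\emph{Step 2: the constant case.} Proposition \ref{prop:rootfunclargederi} now gives two alternatives. If $\psi'(a)\neq0$ for all $a>M$, then $\theta'(a)\neq\lambda$ there, and we are done. Otherwise $\psi\equiv c$ for some $c\in\R$, which means $\theta(a)=\lambda a+c$.

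This is the main obstacle, because the statement claims $\theta\equiv\lambda a$, whereas the proposition only yields $\theta\equiv\lambda a+c$. Both possibilities have $\theta'\equiv\lambda$, so the real content of the dichotomy is ``$\theta'\equiv\lambda$ or $\theta'\neq\lambda$ terminally''. I would first try to show that $c$ must vanish, or at least that $c$ is rational, using algebraicity.

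Since $\tilde g(a,e^a,c)\equiv0$, the polynomial $\tilde g(x_1,x_2,c)\in\R[x_1,x_2]$ vanishes identically. This holds because the monomials $a^\ell e^{ma}$ are linearly independent over $\R$, which is the argument of Proposition \ref{prop:non0} applied with real coefficients. Hence $c$ is a common root of all the rational coefficient polynomials of $\tilde g$ viewed in $x_3$, so $c$ is algebraic. This does not force $c=0$ in general.

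So I would state the conclusion honestly in the form ``$\theta(a)\equiv\lambda a+c$ for some constant $c$, i.e.\ $\theta'\equiv\lambda$''. That is what the proof actually delivers, and I read the claimed form $\theta(a)\equiv\lambda a$ as this statement up to an additive constant.
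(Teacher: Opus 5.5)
Your argument is the paper's own proof. It too notes that the nonzero polynomial $g(x_1,x_2,x_3+\lambda x_1)$ annihilates $\theta(a)-\lambda a$ and applies Proposition~\ref{prop:rootfunclargederi}. It then quotes the first alternative of that proposition as $\theta(a)-\lambda a\equiv0$ rather than $\equiv c$, which is exactly the slip you flagged.

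Your caution is justified, because the statement as written is false. Take $\theta(a)=a+1$, $g=x_3-x_1-1$ and $\lambda=1$. Then $g(x_1,x_2,x_3+x_1)=x_3-1\neq0$, yet $\theta\not\equiv a$ and $\theta'\equiv1=\lambda$. A simpler example is $\theta\equiv1$, $g=x_3-1$, $\lambda=0$.

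The correct dichotomy is the one you prove: either $\theta(a)\equiv\lambda a+c$ for some $c\in\R$, or $\theta'(a)\neq\lambda$ for all large $a$. Your two steps establish it completely; the algebraicity of $c$ is true but not needed.

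This correction matters where the corollary is applied in Theorem~\ref{thm:farbehave}. There, excluding $\theta\equiv\pm\delta a$ via $g(x_1,x_2,\pm\delta x_1)\neq0$ must be replaced by excluding every line $\pm\delta a+c$. That still rules out only finitely many $\delta$, since each such line yields a factor $x_3\mp\delta x_1-c$ of $g$ over $\R$. For slope $0$, however, the condition $g(x_1,x_2,0)\neq0$ does not exclude constant root-functions such as $\theta\equiv1$ for $g=x_3-1$.
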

\begin{proof}
    The nonzero polynomial $g(x_1,x_2,x_3+\lambda x_1)$ annihilates the analytic terminal function $\theta(a)-\lambda a$. By Proposition \ref{prop:rootfunclargederi}, either $\theta(a)-\lambda a\equiv0$ or  there is an $M>0$ such that $\frac{\mathrm{d} (\theta(a)-\lambda a)}{\mathrm{d} a}\neq0$ for every $a>M$. These are exactly what we want.
\end{proof}
We conclude from the following proposition that almost all $\lambda\in\Q$ makes $g(x_1,x_2,x_3+\lambda x_1)\neq0$.
\begin{proposition}\label{prop:almostalllambda}
  Let $g\in\Q[x_1,x_2,x_3]$ be a nonzero polynomial, then  $g(x_1,x_2,x_3+\lambda x_1)=0$ for at most finitely many rarional numbers $\lambda\in\Q$.
\end{proposition}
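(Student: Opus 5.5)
In fact I expect to prove something stronger: for a nonzero $g\in\Q[x_1,x_2,x_3]$, the polynomial $g(x_1,x_2,x_3+\lambda x_1)$ is nonzero for \emph{every} $\lambda\in\Q$. The set of exceptional $\lambda$ is then empty, which is certainly finite.

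The plan is to regard the substitution as a ring endomorphism and show it is invertible. For fixed $\lambda\in\Q$, let $\sigma_\lambda:\Q[x_1,x_2,x_3]\to\Q[x_1,x_2,x_3]$ be the $\Q$-algebra homomorphism determined by $x_1\mapsto x_1$, $x_2\mapsto x_2$, $x_3\mapsto x_3+\lambda x_1$. It is well defined by the universal property of polynomial rings, and $\sigma_\lambda(g)=g(x_1,x_2,x_3+\lambda x_1)$.

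The key step is to check that $\sigma_{-\lambda}\circ\sigma_\lambda$ fixes each generator:
\[
x_3\mapsto x_3+\lambda x_1\mapsto (x_3-\lambda x_1)+\lambda x_1=x_3,
\]
while $x_1$ and $x_2$ are fixed by both maps. Since a $\Q$-algebra homomorphism is determined by its values on generators, $\sigma_{-\lambda}\circ\sigma_\lambda$ is the identity. Hence $\sigma_\lambda$ is injective, so $\sigma_\lambda(g)=0$ forces $g=0$. As $g$ is nonzero, $g(x_1,x_2,x_3+\lambda x_1)\neq0$ for all $\lambda\in\Q$.

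As a cross-check independent of the automorphism language, I would also verify the claim through leading terms. Write $g=\sum_{j=0}^{n}c_j(x_1,x_2)x_3^j$ with $c_n\neq0$. Expanding $(x_3+\lambda x_1)^j$ shows that $g(x_1,x_2,x_3+\lambda x_1)$ is again of degree $n$ in $x_3$, with leading coefficient $c_n(x_1,x_2)\neq0$. The only point that needs a moment's care is that lower-order terms cannot cancel this leading coefficient, and they cannot, since they contribute only to powers $x_3^{j}$ with $j<n$.

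I expect no genuine obstacle here. The one thing to make explicit, to avoid any appearance of circularity with Corollary~\ref{coro:rootfunclargederilambda}, is that the conclusion is purely algebraic and does not use any terminal property of exponential polynomials.
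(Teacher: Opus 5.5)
Your proof is correct, and it proves more than the statement asks. The substitution $x_3\mapsto x_3+\lambda x_1$ (fixing $x_1,x_2$) is an invertible $\Q$-algebra endomorphism, with inverse $x_3\mapsto x_3-\lambda x_1$. So for nonzero $g$ the exceptional set of $\lambda$ is \emph{empty}, and your leading-coefficient check (the coefficient of $x_3^{n}$ stays $c_n(x_1,x_2)$) confirms this directly.

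The paper argues differently. It regards each coefficient $c_{ijk}(\lambda)$ of $g(x_1,x_2,x_3+\lambda x_1)$ as a univariate polynomial in $\lambda$. At $\lambda=0$ these are the coefficients of $g$, so some $c_{ijk}$ is a nonzero polynomial in $\lambda$. Hence the common zero set of the $c_{ijk}$ is finite.

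That argument yields only finiteness, but it never uses invertibility of the substitution. This is why the paper can reuse it (``the same trick'') in the proof of Theorem~\ref{thm:farbehave} for the specialization $g(x_1,x_2,\lambda x_1)$. There $x_3\mapsto\lambda x_1$ is not injective and exceptional values genuinely occur, e.g.\ $g=x_3-x_1$ with $\lambda=1$. In that setting the hypothesis $g(x_1,x_2,0)\neq0$ plays the role of non-vanishing at $\lambda=0$, and your automorphism argument would not apply.

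Conversely, your sharper conclusion shows that the ``except finitely many $\delta$'' caveat in that proof is needed only for the polynomials $g(x_1,x_2,\pm\delta x_1)$. It is not needed for $g(x_1,x_2,x_3\pm\delta x_1)$.
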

\begin{proof}
    Let $c_{ijk}(\lambda)$ be the coefficient of the polynomial $g(x_1,x_2,x_3+\lambda x_1)$ with respect to the monomial $x_1^jx_2^jx_3^k$, it is a polynomial about $\lambda$. Then $g(x_1,x_2,x_3+\lambda x_1)$ is zero polynomial iff $\lambda$ is a solution to the equations \[\bigwedge_{ijk}c_{ijk}(\lambda)=0.\]
    Since $g\neq0$, $\lambda=0$ is not a solution. Therefore the equations have at most finitely many rational roots.
\end{proof}
\section{Finer Terminal Properties of Root-Functions}\label{节：根函数的精细最终性质}
The following definition is crucial in this section.
\begin{definition}\label{def:OXiashouXishu}
    Set $g\in\mathbb{Q}[x_1,x_2,x_3]$ to be nonzero. Suppose $x_2^{i_0}x_1^{j_0}x_3^{k_0}$ is the maximal monomial of $g$ with respect to the lexicographical order $x_2\succ x_1\succ x_3$, we call the coefficient with respect to the monomial $x_2^{i_0}x_1^{j_0}$ in $g$ \emph{the exp-leading coefficient} of $g$ and denote it by $\lc_{x_2,x_1}(g)$.
\end{definition}
\begin{example}\label{例：指数首系数}
Set \[g(x_1,x_2,x_3)=2x_1x^2_2x_3-7x_1^3x_2+x_1x_2x_3^4+8,\] then the maximal monomial is $x^2_2x_1x_3$ and the exp-leading coefficient $\lc_{x_2,x_1}(g)=2x_3$.

    Set \[
    \begin{array}{rcl}
    \vspace{1mm}
         h(x_1,x_2,x_3)& =&3x_1x_2^2+x_1x^2_2x_3-7x_1x_2+x_1x_2x_3^4+8 \\\vspace{1mm}
         &=&(3+x_3)x_1x_2^2-7x_1x_2+x_1x_2x_3^4+8 
    \end{array}
    ,\] then $x^2_2x_1x_3$ is maximal and the exp-leading coefficient of $h$ is $3+x_3$, that is, the coefficient with respect to $x_2^2x_1$ in $h$.
\end{example}
The exp-leading coefficient of $g$ is a polynomial in $\mathbb{Q}[x_3]$ by definition.

The theorem below shows that the value of a root-function tends to a real constant or to infinity.
\begin{proposition}\label{prop:firstbound}

Let $g\in\Q[x_1,x_2,x_3]$ be a nonzero polynomial with $\lc_{x_2,x_1}(g)$ the exp-leading coefficient and suppose that $i_0$ and $j_0$ are the exponents of $x_2$ and $x_1$, respectively, in the maximal monomial of $g$ with respect to the lexicographical ordering $x_2\succ x_1\succ x_3$. Let $\delta >0$ and let $U \subset \R$ be a bounded set. 

If $|\lc_{x_2,x_1}(g)(y)|>\delta$ for any $y\in U$, then there is an integer $N>0$ such that $a>N$ implies $\forall y\in U,\; g(a,e^a,y)\neq0$.
\end{proposition}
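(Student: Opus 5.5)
We write $g(a,e^a,y)$ as an exponential polynomial in $a$ whose coefficients are polynomials in $y$, and show that the exp-leading term dominates all other terms uniformly for $y\in U$. Write
\[
g(x_1,x_2,x_3)=\sum_{(i,j)} c_{ij}(x_3)\,x_2^{i}x_1^{j},
\]
with $c_{ij}\in\Q[x_3]$. By Definition \ref{def:OXiashouXishu}, $(i_0,j_0)$ is the lexicographically largest pair (with $i$ compared first) for which $c_{ij}\neq0$, and $c_{i_0j_0}=\lc_{x_2,x_1}(g)$. Indeed, the maximal monomial $x_2^{i_0}x_1^{j_0}x_3^{k_0}$ under $x_2\succ x_1\succ x_3$ is determined first by the exponent of $x_2$ and then by that of $x_1$. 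Then
\[
g(a,e^a,y)=e^{i_0a}a^{j_0}\Big(\lc_{x_2,x_1}(g)(y)+\sum_{(i,j)\neq(i_0,j_0)} c_{ij}(y)\,e^{(i-i_0)a}a^{j-j_0}\Big).
\]

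The key steps are as follows.

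\textbf{Step 1.} Since $U$ is bounded, say $U\subset[-R,R]$, and there are only finitely many polynomials $c_{ij}$, there is a constant $C>0$ with $|c_{ij}(y)|\le C$ for all $y\in U$ and all $(i,j)$.

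\textbf{Step 2.} For every pair $(i,j)\neq(i_0,j_0)$ with $c_{ij}\neq0$, either $i<i_0$, or $i=i_0$ and $j<j_0$. In both cases $e^{(i-i_0)a}a^{j-j_0}\to0$ as $a\to+\infty$. In the first case the exponential decay beats any polynomial growth in $a$; in the second case the term is $a^{j-j_0}$ with a negative exponent.

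\textbf{Step 3.} Let $K$ be the (finite) number of such pairs. Choose an integer $N>0$ such that for all $a>N$ each of these factors is smaller than $\delta/(2CK)$. Then for all $a>N$ and all $y\in U$ the sum inside the parentheses has absolute value at most $\delta/2$. By the triangle inequality, the bracket is then at least $\delta-\delta/2>0$ in absolute value. Since $e^{i_0a}a^{j_0}>0$ for $a>0$, we get $g(a,e^a,y)\neq0$.

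The only point needing care is uniformity in $y$. This is exactly where the boundedness of $U$ (giving the uniform bound $C$) and the uniform lower bound $\delta$ on $|\lc_{x_2,x_1}(g)|$ enter. The rest is the elementary asymptotic comparison of Step 2, in the same spirit as Proposition \ref{prop:non0}. If $g$ has only the single term $(i_0,j_0)$, the sum is empty and the conclusion is immediate.
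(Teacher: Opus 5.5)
Your proof is correct and is essentially the paper's argument. Both write $g=\sum c_{ij}(x_3)x_2^{i}x_1^{j}$, bound each $c_{ij}$ uniformly on the bounded set $U$, and use the lexicographic maximality of $(i_0,j_0)$ so that the term $\lc_{x_2,x_1}(g)(y)\,e^{i_0a}a^{j_0}$ dominates the rest by the triangle inequality; the only cosmetic difference is that you factor out $e^{i_0a}a^{j_0}$ and control each normalized ratio by $\delta/(2CK)$, while the paper compares $\delta|e^{i_0a}a^{j_0}|$ directly with $\sum B_{ij}|e^{ia}a^{j}|$.
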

\begin{proof}
Arranging the terms of $g$ decreasingly with respect to the ordering $\mathscr{O}$ and summing up those terms with the same exponents of $x_2$ and $x_1$, we obtain $g=\sum_{ij}c_{ij}(x_3)x_2^ix_1^j$, where $c_{ij}(x_3)\in\mathbb{Q}[x_3]$, and in particular, $c_{i_0j_0}(x_3)=\lc_{x_2,x_1}(g)$. Since $U\subset\R$ is bounded, it is a subset of some bounded closed interval, on which the continuous function $|c_{ij}(y)|$ clearly has an upper bound. Thus there is some $B_{ij}>0$ such that $|c_{ij}(y)|<B_{ij}$ for every $y\in U$. 
Since $i_0$ and $j_0$ are the exponents of the maximal monomial with respect to $\mathscr{O}$, there is an integer $N>0$ such that \[\delta |(e^a)^{i_0}a^{j_0}|>\sum_{i\neq i_0\vee j\neq j_0}B_{ij}|(e^a)^{i}a^{j}|\] for every $a>N$.
These inequalities, together with the assumption that $|\lc_{x_2,x_1}(g)(y)|>\delta\, (\forall y\in U)$ and the triangle inequality, indicate that, for all $a>N$ and all $y \in U$, we have
\[|c_{i_0j_0}(y)(e^a)^{i_0}a^{j_0}|>\delta|(e^a)^{i_0}a^{j_0}|>\sum_{i\neq i_0\vee j\neq j_0}B_{ij}|(e^a)^{i}a^{j}|\geq\left|\sum_{i\neq i_0\vee j\neq j_0}c_{ij}(y)(e^a)^{i}a^{j}\right|.\]
Thus, $a>N$ implies that $g(a,e^a,y)\neq0$ for any $y\in U$.
\end{proof}
By Proposition \ref{prop:firstbound}, every value $y$ such that $g(a, e^a, y)=0$ for some large $a$ ($> N$) necessarily lies in the complementary set of $U$. In
order to locate such pair $(a,y)$, we need to choose a good $U$ in Proposition \ref{prop:firstbound}.
Equivalently, we may choose its complementary set. The following definition
of a potential periodic interval set gives a complementary set we want.
\begin{definition}\label{def:potential-periodic-interval}
For any $g\in\Q[x_1,x_2,x_3]$ containing $x_3$,
let $I$ be a finite list of open intervals $[(a_0,b_0),\ldots,(a_{s+1},b_{s+1})]$, 
where $s\in\Z_{\ge0}$, $a_0=-\infty$, $b_{s+1}=+\infty$, $b_0<0$, $a_{s+1}>0$, $a_j\in \Q~(1\le j\le s+1)$ and $b_j\in \Q~(0\le j\le s)$, such that
\begin{enumerate}
\item  for each $j~(1\le j\le s)$, $a_j< b_j$ and $b_j \le a_{j+1}$;

\item for each real root of $\lc_{x_2,x_1}(g)$, there exists an interval $(a_j,b_j)~(1\le j\le s)$ containing it $($there is no root of $\lc_{x_2,x_1}(g)$ in $(a_0,b_0)$ and $(a_{s+1},b_{s+1}))$; and

\item \label{item:potential-periodic-interval-3} each interval $(a_j,b_j)~(1\le j\le s)$ has exactly one real root of $\lc_{x_2,x_1}(g)$.
\end{enumerate}
We call $I$ an \emph{exponential potential periodic interval set} of the equation \[g(a,e^a,y)=0,\] or, of the polynomial $g$. Generally, if we replace ``exactly" in condition \ref{item:potential-periodic-interval-3} by ``at most", we call $I$ a \emph{general exponential potential periodic interval set} of the equation $g(a,e^a,y)=0$.
\end{definition}
\begin{remark}\label{remark:why?infty}
The intervals $(a_j,b_j)\;(1\leq j\leq s)$ in a (general) exponential potential periodic interval set cover all the roots of $\lc_{x_2,x_1}(g)$, ensuring that the absolute value $|\lc_{x_2,x_1}(g)|$ is positive outside the set $\bigcup_{j=1}^{s}(a_i,b_i)$. In addition, the unbounded intervals $(a_0,b_0)$ and $(a_{s+1},b_{s+1})$ make sure that the complementary set of $\bigcup_{j=0}^{s+1}(a_i,b_i)$ is bounded.
\end{remark}
 Based on the remark above, the corollary below gives a good $U$ in Proposition \ref{prop:firstbound}.
\begin{corollary}\label{cor:main_case}
Let $g\in\Q[x_1,x_2,x_3]$ containing $x_3$, $[(a_0,b_0),\ldots,(a_{s+1},b_{s+1})]$ be a general exponential potential periodic interval set of $g$, and
\begin{align}
    U:=\R\setminus\bigcup_{j=0}^{s+1}(a_i,b_i).
\end{align}
Then, there exists a positive integer $N$ such that 
$$
\forall x\in \R ~\forall y\in U~(a>N \Rightarrow g(a,e^a,y)\neq 0).
$$
\end{corollary}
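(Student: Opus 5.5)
The plan is to reduce Corollary \ref{cor:main_case} directly to Proposition \ref{prop:firstbound}. That proposition needs two things: a bounded set $U$, and a constant $\delta>0$ with $|\lc_{x_2,x_1}(g)(y)|>\delta$ for all $y\in U$. So I will check that the set $U=\R\setminus\bigcup_{j=0}^{s+1}(a_j,b_j)$ satisfies both, and then read off the integer $N$. The quantifier over $x$ in the statement is vacuous, since $x$ does not appear in the conclusion.

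\textbf{Boundedness.} Since $(a_0,b_0)=(-\infty,b_0)$ and $(a_{s+1},b_{s+1})=(a_{s+1},+\infty)$ are removed, $U\subseteq[b_0,a_{s+1}]$. This is exactly the observation in Remark \ref{remark:why?infty}.

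\textbf{Closedness.} $U$ is the complement of a finite union of open intervals, so it is closed. Together with boundedness, $U$ is compact.

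\textbf{The lower bound $\delta$.} By definition $\lc_{x_2,x_1}(g)$ is a polynomial in $\mathbb{Q}[x_3]$, so it is continuous. I must make sure it is not the zero polynomial: it is the coefficient of $x_2^{i_0}x_1^{j_0}$ that contains the maximal monomial of the nonzero $g$, so it is nonzero. By condition 2 of Definition \ref{def:potential-periodic-interval}, every real root of $\lc_{x_2,x_1}(g)$ lies in some $(a_j,b_j)$ with $1\le j\le s$, so $\lc_{x_2,x_1}(g)$ has no root on $U$. The general version of the definition only weakens condition 3, so this still holds.

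Hence $|\lc_{x_2,x_1}(g)|$ is a continuous, strictly positive function on the compact set $U$. If $U$ is nonempty, it attains a positive minimum $m$, and I take $\delta=m/2$. If $U$ is empty, the conclusion is vacuous, and any $N$ works.

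\textbf{Conclusion.} Applying Proposition \ref{prop:firstbound} with this $U$ and $\delta$ gives an integer $N>0$ such that $a>N$ implies $g(a,e^a,y)\neq 0$ for all $y\in U$, which is the claim.

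The only step needing any care is obtaining the uniform lower bound $\delta$. The proposition requires a strict uniform bound, not mere pointwise nonvanishing, so the compactness of $U$ (closed as well as bounded) is what makes the argument work.
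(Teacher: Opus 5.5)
Your proof is correct and follows the paper's argument: removing the two unbounded intervals makes $U$ closed and bounded, condition 2 of the definition keeps every real root of $\lc_{x_2,x_1}(g)$ out of $U$, compactness then gives a positive lower bound, and Proposition \ref{prop:firstbound} yields $N$. Your extra details fill in points the paper leaves implicit: taking $\delta=m/2$ so the inequality is strict, checking that the exp-leading coefficient is nonzero, handling an empty $U$, and noting that the quantifier over $x$ is vacuous.
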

\begin{proof}
Recall the definition of a general exponential potential periodic interval set (see Definition \ref{def:potential-periodic-interval}).
We have $a_0=-\infty$, $b_{s+1}=+\infty$, and the set $U$ contains no real root of $\lc_{x_2,x_1}(g)$.
So, $U$ is not only closed but also bounded and $|\lc_{x_2,x_1}(g)|$ is positive over $U$. Hence, $|\lc_{x_2,x_1}(g)|$ has a positive lower bound over the compact set $U$.
Then, the conclusion follows from Proposition \ref{prop:firstbound}.
\end{proof}
\begin{proposition}\label{thm:location-of-the-roots}
Let $g\in\Q[x_1,x_2,x_3]$ be a nonzero polynomial  with $\Delta(x_1,x_2)=\discrim_{x_3}(g)$ nonzero, 
$[(a_0,b_0),\ldots,(a_{s+1},b_{s+1})]$ be a general exponential potential periodic interval set of $g$, $M$ an upper bound of the start-point of the root-functions of $g$ and all real roots of the exponential polynomial
$$\prod_{j=0}^{s}g(a,e^a,b_j)g(a,e^a,a_{j+1}),$$
and $\theta_1(a),\ldots,\theta_r(a)$ be the root-functions of $g$.
Then, for each $i~(1\le i \le r)$, there exists
$j_i~(0\le j_i \le s+1)$ such that $\theta_i(a)\in (a_{j_i},b_{j_i})$ for all $a\in (M,+\infty)$.
\end{proposition}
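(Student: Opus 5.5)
The plan is to combine Corollary \ref{cor:main_case} with continuity and the intermediate value theorem. I take $M$ large enough, which the hypothesis allows since $M$ is only required to be an upper bound, so that it is at least the integer $N$ given by Corollary \ref{cor:main_case} for the set $U=\R\setminus\bigcup_{j=0}^{s+1}(a_j,b_j)$. Then for every $a>M$ the value $\theta_i(a)$ cannot lie in $U$, because $g(a,e^a,\theta_i(a))=0$. Hence $\theta_i(a)\in\bigcup_{j=0}^{s+1}(a_j,b_j)$ for every $a>M$.

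One subtlety is that consecutive intervals may touch, since condition 1 of Definition \ref{def:potential-periodic-interval} only gives $b_j\le a_{j+1}$. The union of the intervals is therefore not obviously a disjoint union of separated open sets, and a continuous curve could in principle pass from $(a_j,b_j)$ to $(a_{j+1},b_{j+1})$ through the common endpoint $b_j=a_{j+1}$. The hypothesis on $M$ rules this out. Every endpoint $b_j$ ($0\le j\le s$) and $a_{j+1}$ is not attained by $\theta_i$ on $(M,\infty)$: if $\theta_i(a)=b_j$ for some $a>M$, then $g(a,e^a,b_j)=0$, so $a$ is a real root of $\prod_j g(a,e^a,b_j)g(a,e^a,a_{j+1})$, contradicting $a>M$. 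Points that lie in no interval are already excluded by the previous paragraph. So on $(M,\infty)$ the function $\theta_i$ avoids every point of $\R\setminus\bigcup_j(a_j,b_j)$, including the shared endpoints.

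For the connectedness step, I fix $a^*>M$ and let $j_i$ be the index with $\theta_i(a^*)\in(a_{j_i},b_{j_i})$. Suppose some $a'>M$ has $\theta_i(a')\notin(a_{j_i},b_{j_i})$. Then $\theta_i(a')$ is at least $b_{j_i}$ or at most $a_{j_i}$, where at least one of these endpoints is finite in the relevant direction. Since $\theta_i$ is continuous (indeed analytic, by Theorem \ref{theorem:rootfunc}) on $[M,\infty)$, the intermediate value theorem gives a point between $a^*$ and $a'$ where $\theta_i$ equals $b_{j_i}$ or $a_{j_i}$. This contradicts the previous paragraph, so $\theta_i(a)\in(a_{j_i},b_{j_i})$ for all $a\in(M,\infty)$.

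The step I expect to be most delicate is checking that the product exponential polynomial is not identically zero, so that "all real roots" is bounded and $M$ exists. This holds when no $b_j$ or $a_{j+1}$ is a root of $g(x_1,x_2,y)$ as a polynomial identity, and that follows because these endpoints lie in $U$ or on its boundary, where $\lc_{x_2,x_1}(g)$ does not vanish. Corollary \ref{coro:nonzeroforlarge} then makes the set of real roots bounded above. I will also note that endpoint values are in fact excluded by Corollary \ref{cor:main_case} already, since every finite endpoint lies in the closed set $U$. This gives a shortcut that avoids the separate endpoint argument once $M\ge N$.
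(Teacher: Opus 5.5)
There is a genuine gap in how you handle $M$. In the statement $M$ is \emph{given}. It is an arbitrary upper bound of the start-point and of the real roots of $\prod_{j=0}^{s}g(a,e^a,b_j)g(a,e^a,a_{j+1})$, and the conclusion must hold on $(M,+\infty)$ for that $M$. You may not enlarge it past the $N$ of Corollary~\ref{cor:main_case}. Doing so proves the conclusion only on $(\max\{M,N\},+\infty)$. For $a\in(M,N]$ your first paragraph says nothing, so as far as your argument goes $\theta_i$ could lie in a gap $(b_j,a_{j+1})$ there. This is not a technicality. The point of the hypothesis on the product is that its root bound is computable, and the paper's algorithm uses that root bound, not $N$, when deciding which interval a root-function terminally belongs to. Your own closing remark, that the endpoint argument becomes superfluous once $M\ge N$, shows the problem: in your version the key hypothesis is never actually needed.

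The repair uses only ingredients you already have. Your second paragraph shows, for the original $M$, that $\theta_i$ attains none of $b_0,a_1,b_1,\ldots,b_s,a_{s+1}$ on $(M,+\infty)$. In the connectedness step, take the reference point $a^*>\max\{M,N\}$ rather than an arbitrary $a^*>M$. Corollary~\ref{cor:main_case} then supplies $j_i$ with $\theta_i(a^*)\in(a_{j_i},b_{j_i})$. For an arbitrary $a'>M$, your intermediate value argument produces a point in the closed interval between $a^*$ and $a'$, hence in $(M,+\infty)$, at which $\theta_i$ equals $a_{j_i}$ or $b_{j_i}$, a contradiction. This is exactly the paper's proof. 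Endpoint avoidance plus continuity confine $\theta_i((M,+\infty))$ to one component of $\R\setminus\{b_0,a_1,\ldots,b_s,a_{s+1}\}$, that is, to some $(a_j,b_j)$ or some gap $(b_j,a_{j+1})$. Corollary~\ref{cor:main_case} is then invoked only at large $a$ to exclude the gaps. Your check that $g(x_1,x_2,b_j)$ and $g(x_1,x_2,a_{j+1})$ are nonzero polynomials is correct: their coefficient of $x_2^{i_0}x_1^{j_0}$ is $\lc_{x_2,x_1}(g)$ evaluated at an endpoint, which lies outside the open intervals containing its roots. The statement, however, already presupposes that such an $M$ exists.
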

\begin{proof}
    Note that $g(a,e^a,a_j)~(1\le j \le s+1)$ and $g(a,e^a,b_j)~(0\le j \le s)$ have no roots in $(M,+\infty)$.
Then, for every $i~(1\le i\le r)$ and for any $x\in(M,+\infty)$, $\theta_i(x)\not\in\{b_0,a_1,b_1,\ldots,b_s,a_{s+1}\}$.
Now that $\theta_i(x)$ is continuous, the intermediate value theorem indicates that there exists $j_i~(0\le j_i\le s+1)$ such that $\theta_i((M,+\infty))\subseteq (a_{j_i},b_{j_i})$ or 
there exists $j_i~(0\le j_i\le s)$ such that $\theta_i((M,+\infty))\subseteq(b_{j_i},a_{j_i+1})$.
By Corollary \ref{cor:main_case}, for sufficiently large $a$, $\theta_i(a)\in \cup_{j=0}^{s+1}(a_j,b_j)$. Therefore, $\theta_i((M,+\infty))\subseteq (a_{j_i},b_{j_i})$ for some $0\leq j_i\leq s+1$.
\end{proof}
Note that the general exponential potential periodic interval set in Proposition \ref{thm:location-of-the-roots} is arbitrary (the intervals can be arbitrarily small), the corollary below follows from that proposition:
\begin{corollary}\label{coro:limits}
    $($The same notation used here as in Proposition \ref{thm:location-of-the-roots}.$)$ If $\lambda$ is a real root of $\lc_{x_2,x_1}(g)$ and $\lambda\in(a_{j_i},b_{j_i})$ with $j_i\notin\{0,s+1\}$, then $\lim\limits_{a\rightarrow+\infty}\theta_{i}(a)=\lambda$. If $j_i=0$ or $j_i=s+1$, then $\lim\limits_{a\rightarrow+\infty}\theta_{i}(a)=-\infty$ or $+\infty$, respectively.
\end{corollary}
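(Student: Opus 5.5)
The plan is to prove both limits by shrinking the intervals of the general exponential potential periodic interval set. Proposition \ref{thm:location-of-the-roots} locates each root-function in one fixed interval for large $a$. The only thing to check is that the interval index stays consistent as we refine the set.

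\textbf{Case $j_i\notin\{0,s+1\}$.} Let $\lambda\in(a_{j_i},b_{j_i})$ be the real root of $\lc_{x_2,x_1}(g)$, and fix $\varepsilon>0$. First we build a refined general exponential potential periodic interval set $I'$ from $I$, in three moves.
\begin{enumerate}
\item Keep $(a_0,b_0)$ and $(a_{s+1},b_{s+1})$ unchanged.
\item Replace $(a_{j_i},b_{j_i})$ by a subinterval $(a',b')$ with rational endpoints, $|a'-\lambda|<\varepsilon$ and $|b'-\lambda|<\varepsilon$, so that $(a',b')\subseteq(\lambda-\varepsilon,\lambda+\varepsilon)$ and it still contains $\lambda$.
\item Shrink or split the other bounded intervals so that each contains at most one root of $\lc_{x_2,x_1}(g)$. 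Roots in $(a_{j_i},b_{j_i})\setminus(a',b')$ get their own small disjoint intervals.
\end{enumerate}
Since $\lc_{x_2,x_1}(g)\in\Q[x_3]$ is nonzero, it has finitely many roots, so this is always possible. Conditions 1--3 of Definition \ref{def:potential-periodic-interval} (general version) then hold by construction.

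Next, apply Proposition \ref{thm:location-of-the-roots} to $I'$ with a larger $M'$. We get that $\theta_i(a)$ lies in a single interval of $I'$ for all $a>M'$.

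\textbf{Identifying the interval.} This is the one delicate point. For $a>\max(M,M')$, $\theta_i(a)$ lies both in $(a_{j_i},b_{j_i})$ and in some interval $J$ of $I'$. By construction every interval of $I'$ is either contained in some interval of $I$ or is one of the unbounded ones. So $J$ is one of the intervals of $I'$ inside $(a_{j_i},b_{j_i})$, which means either $(a',b')$ or a small interval around another root $\mu\neq\lambda$ of $\lc_{x_2,x_1}(g)$ in $(a_{j_i},b_{j_i})$.

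The statement is really about the root in the interval (item 3 of the definition), so I would first assume $I$ is an exponential potential periodic interval set. Then $(a_{j_i},b_{j_i})$ contains exactly one root, namely $\lambda$. In that case no such $\mu$ exists and the only possible $J$ is $(a',b')$. Hence $|\theta_i(a)-\lambda|<\varepsilon$ for all large $a$, and since $\varepsilon$ was arbitrary, $\theta_i(a)\to\lambda$.

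In the general version, the same argument shows the limit exists and equals the unique root lying in the interval chosen for all $\varepsilon$. This uses nestedness: if we take a decreasing sequence $\varepsilon_n\to0$ with nested refinements, the containing small intervals are forced to be nested around one fixed root.

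\textbf{Case $j_i=s+1$.} Given any rational $K>a_{s+1}$, take a new set $I'$ that replaces $(a_{s+1},+\infty)$ by $(K,+\infty)$. It inserts interval(s) covering any roots of $\lc_{x_2,x_1}(g)$ in $[a_{s+1},K]$; since $a_{s+1}$ exceeds all roots, there are none. So $I'$ is simply $I$ with the last interval shrunk, and it is still a valid set because $(a_{s+1},K]$ contains no roots.

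Proposition \ref{thm:location-of-the-roots} for $I'$ puts $\theta_i(a)$ eventually in one interval of $I'$. That interval must lie inside $(a_{s+1},\infty)$, and the only such interval is $(K,\infty)$. Hence $\theta_i(a)>K$ eventually, which gives $\theta_i(a)\to+\infty$.

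\textbf{Case $j_i=0$.} This is symmetric to the previous case and gives $\theta_i(a)\to-\infty$.

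The main obstacle is the bookkeeping that matches the interval index between $I$ and its refinement $I'$. I handle it with the containment argument above together with the eventual coincidence of the two locations.
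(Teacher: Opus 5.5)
Your proof is correct and takes essentially the paper's route: the paper only notes that the general exponential potential periodic interval set in Proposition \ref{thm:location-of-the-roots} can be chosen arbitrarily fine and then invokes that proposition. Your containment argument, which pins down which interval of the refined set $I'$ eventually holds $\theta_i(a)$, supplies the bookkeeping the paper leaves implicit.

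One small correction to your reading of the definition: the general version only relaxes ``exactly one'' to ``at most one'' root per bounded interval. So $\lambda$ is automatically the unique root in $(a_{j_i},b_{j_i})$, your clean argument applies verbatim, and the nested-refinement detour and the splitting in step 3 are unnecessary.
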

We combine several results above to conclude the following theorem.
\begin{theorem}\label{thm:farbehave}
    Let $g\in\Q[x_1,x_2,x_3]$ be a nonzero polynomial  with $\Delta(x_1,x_2)=\discrim_{x_3}(g)$ and $g(x_1,x_2,0)$ nonzero, 
$[(a_0,b_0),\ldots,(a_{s+1},b_{s+1})]$ be a general exponential potential periodic interval set of $g$ and $\theta_1,\theta_2,\ldots,\theta_r$ be the root-functions of $g$. For every $i$ $(1\leq i\leq r)$, denote by $j_i$ the index such that $\theta_i(a)\in(a_{j_i},b_{j_i})$ for sufficiently large value $a$.  Then, for all $\delta\in\R_{>0}$ $($except finitely many real positive values$)$, there is an $M>0$ such that
\begin{enumerate}
    \item\label{smallderi} $\forall a>M, 0<\theta_i'(a)<\delta$ or $\;\forall a>M, -\delta<\theta_i'(a)<0$\; $($if\; $0<j_i<s+1)$;
    \item\label{pinf} $\forall a>M, \theta_i'(a)>0$\;$($if\; $j_i=s+1)$; 
    \item\label{minf} $\forall a>M, \theta_i'(a)<0$\;$($if\; $j_i=0)$. 
\end{enumerate}
\end{theorem}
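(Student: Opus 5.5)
The plan is to reduce everything to the eventual sign behaviour of $\theta_i'$ and of $\theta_i'\mp\delta$, using Proposition \ref{prop:rootfunclargederi} and Corollary \ref{coro:rootfunclargederilambda}, and then to read off which sign occurs from the limits in Corollary \ref{coro:limits}. Each root-function $\theta_i$ is an analytic algebraic exponential polynomial annihilated by $g$. Its derivative $\theta_i'$ is again analytic, hence continuous, on $(M,+\infty)$.

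\emph{Step 1 (sign of $\theta_i'$).} By Proposition \ref{prop:rootfunclargederi}, either $\theta_i$ is constant or $\theta_i'(a)\neq0$ for all large $a$. In the second case continuity and the intermediate value theorem give a constant sign of $\theta_i'$ on some $(M_i,+\infty)$.
\begin{itemize}
\item If $j_i=s+1$, Corollary \ref{coro:limits} gives $\theta_i\to+\infty$. A function that is eventually decreasing cannot tend to $+\infty$, so $\theta_i'>0$ eventually; this is item \ref{pinf}.
\item If $j_i=0$, symmetrically $\theta_i\to-\infty$, so $\theta_i'<0$ eventually; this is item \ref{minf}.
\end{itemize}
In both unbounded cases $\theta_i$ is not constant, so Proposition \ref{prop:rootfunclargederi} really does yield a nonvanishing derivative there.

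\emph{Step 2 (the bound $|\theta_i'|<\delta$ when $0<j_i<s+1$).} Here $\theta_i\to\lambda$ for a finite root $\lambda$ of $\lc_{x_2,x_1}(g)$. Proposition \ref{prop:almostalllambda} is stated for $\lambda\in\Q$, but its proof only uses that the coefficients $c_{ijk}(\lambda)$ are polynomials in $\lambda$ that do not all vanish. Hence it holds verbatim for real $\lambda$: $g(x_1,x_2,x_3\pm\delta x_1)=0$ for only finitely many $\delta$. Likewise the proof of Corollary \ref{coro:rootfunclargederilambda} works for real $\lambda$, since germs are closed under subtracting $\lambda a$ by the resultant argument of Proposition \ref{prop:isfield}.

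Exclude those finitely many $\delta$. Then for each $\delta$ that remains:
\begin{itemize}
\item $\theta_i\equiv\pm\delta a$ is impossible, because $\theta_i$ is bounded;
\item hence $\theta_i'\neq\pm\delta$ for all large $a$;
\item by continuity, $\theta_i'-\delta$ and $\theta_i'+\delta$ each have constant sign eventually;
\item $\theta_i'>\delta$ eventually would force $\theta_i\to+\infty$, and $\theta_i'<-\delta$ eventually would force $\theta_i\to-\infty$, both contradicting the finite limit.
\end{itemize}
So $-\delta<\theta_i'<\delta$ eventually. Combined with the constant sign from Step 1, this gives item \ref{smallderi}. Finally, take $M$ to be the maximum of the finitely many thresholds obtained over all $i$ and over both $\pm\delta$.

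\emph{Main obstacle.} The delicate point is the possibility that a bounded root-function is constant, $\theta_i\equiv c$. Then $\theta_i'\equiv0$ and the strict inequalities in item \ref{smallderi} fail. The hypothesis $g(x_1,x_2,0)\neq0$ rules out only $c=0$: by Proposition \ref{prop:rootfunclarge} and its proof, $\theta_i$ is then eventually nonzero. It does not obviously rule out a constant $c\neq0$, since $x_3-c$ could divide $g$. I would first check whether such a constant factor is excluded by the intended setting, for instance by working in variables where constant root-functions have been shifted to $0$ so that the hypothesis removes them. If it is not excluded, I would record the constant root-functions as a separate case with $\theta_i'\equiv0$, and the argument above goes through unchanged for all non-constant $\theta_i$.
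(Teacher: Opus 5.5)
Your route is essentially the paper's. Both get eventual sign constancy of $\theta_i'$ and of $\theta_i'\mp\delta$ from Proposition~\ref{prop:rootfunclargederi} and Corollary~\ref{coro:rootfunclargederilambda} plus the intermediate value theorem, and both read off items 2--3 from the limits in Corollary~\ref{coro:limits}. For the bound in item 1, the paper applies the mean value theorem on a window of length $\ell/\delta$ to produce one point with $|\theta_i'|<\delta$. You instead observe that $\theta_i'>\delta$ (or $<-\delta$) eventually would make $\theta_i$ unbounded; the two are interchangeable.

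One justification needs repair. The paper only treats $\delta\in\Q_{>0}$, and your extension to real $\lambda$ is not available inside its framework. For transcendental $\lambda$ the function $\lambda a$ has no annihilating polynomial over $\Q$, so $\theta_i-\lambda a\notin\mathscr{R}$. You can avoid this by running the argument with a rational $\delta'\in(0,\delta)$, which gives $|\theta_i'|<\delta'<\delta$. Also, Proposition~\ref{prop:almostalllambda} excludes nothing, because $x_3\mapsto x_3+\lambda x_1$ is invertible. So for bounded $\theta_i$ no value of $\delta$ actually needs to be discarded.

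Your ``main obstacle'' is real, and it resolves negatively: the statement is false as written, and the paper's proof has exactly this hole. Take $g=x_3^2-1$:
\begin{itemize}
\item it is nonzero, $\discrim_{x_3}(g)=4\neq0$, $g(x_1,x_2,0)=-1\neq0$, and $\lc_{x_2,x_1}(g)=x_3^2-1$;
\item with the interval set $[(-\infty,-2),(-\tfrac32,-\tfrac12),(\tfrac12,\tfrac32),(2,+\infty)]$, the root-functions $\theta_1\equiv-1$ and $\theta_2\equiv1$ have $j_1=1$ and $j_2=2$;
\item yet $\theta_i'\equiv0$, so item 1 fails for every $\delta$.
\end{itemize}

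The paper misses this because the proof of Corollary~\ref{coro:rootfunclargederilambda} applies Proposition~\ref{prop:rootfunclargederi} to $\theta-\lambda a$ and silently replaces the conclusion $\theta-\lambda a\equiv c$ by $\theta-\lambda a\equiv0$. The correct dichotomy is: $\theta\equiv\lambda a+c$, or $\theta'\neq\lambda$ eventually. For $\lambda=\pm\delta$ this is harmless for bounded $\theta_i$, as in your Step 2. For $\lambda=0$, however, the hypothesis $g(x_1,x_2,0)\neq0$ removes only $c=0$, so the paper's assertion $\theta_i'(a)\notin\{\pm\delta,0\}$ fails for constant $\theta_i\not\equiv0$.

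Constant root-functions $\theta\equiv c$ occur precisely when $g(x_1,x_2,c)=0$. Equivalently, $c$ is a real root of the gcd in $\Q[x_3]$ of the coefficients of $g$ viewed as a polynomial in $x_1,x_2$. So there is no ``intended setting'' to check. What your argument establishes is the correct version: items 2 and 3 as stated, and item 1 for every non-constant root-function and every $\delta>0$. The hypothesis $g(x_1,x_2,0)\neq0$ should be replaced by requiring that this gcd have no real roots, or item 1 should allow $\theta_i'\equiv0$.
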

\begin{proof}
By Proposition \ref{prop:almostalllambda}, for all $\delta\in\Q_{>0}$ except some $($finitely many$)$ values, the polynomials $g(x_1,x_2,x_3+\lambda x_1)$, where $\lambda\in\{\pm\delta,0\}$, are nonzero polynomials. On the other hand, $g(x_1,x_2,0)\neq0$ means $\theta(a)\equiv0$ is not a root-function of $g$. Moreover, using the same trick as in the proof of Proposition \ref{prop:almostalllambda}, one claims that $g(x_1,x_2,\lambda x_1)\neq0$ for all $\lambda\in\Q$ except some $($finitely many$)$ values. This indicates that for all $\delta\in\Q_{>0}$ except some $($finitely many$)$ values, the polynomials $g(x_1,x_2,\pm\delta x_1)$ are nonzero and, therefore, the functions $\theta(a)=\pm\delta a$ are not root-functions of $g$. 

Suppose $\delta\in\Q_{>0}$ is a proper number such that non of the polynomials $g(x_1,x_2,x_3\pm\delta x_1)$ and $g(x_1,x_2,\pm\delta x_1)$ is zero. Then, by combining the last paragraph and Corollary \ref{coro:rootfunclargederilambda}, we observe that there is $M>0$ such that for every $1\leq i\leq r$ and every $a>M$, we have $\theta_i'(a)\notin\{\pm\delta,0\}$.  Since $\theta_i'$ is continuous, we have $\theta_i'((M,\infty))\subseteq(-\infty,-\delta)$, $(-\delta,0)$, $(0,\delta)$ or $(\delta,\infty)$.

For those $1\leq i\leq r$ such that $0<j_i<s+1$, we have $\theta(a)\in(a_{j_i},b_{j_i})$ for sufficiently large positive $a$ by Proposition \ref{thm:location-of-the-roots}. The length $\ell=b_{j_i}-a_{j_i}$ is a finite number. Chose $a_1\in\R$ sufficiently large such that $a_2=a_1+\ell/\delta>a_1>M$ and $\theta_i(a_1),\theta_i(a_2)\in(a_{j_i},b_{j_i})$. Then, by Lagrange's mean value theorem, there is $a_3\in(a_1,a_2)$ so that $|\theta_i'(a_3)|=|\frac{\theta_i(a_2)-\theta_i(a_1)}{a_2-a_1}|<\frac{\ell}{\ell/\delta}=\delta$. Note that $a_3>a_1>M$, the inequality $|\theta_i'(a_3)|<\delta$ indicates that $\theta_i'((M,\infty))\subseteq(-\delta,0)$ or $(0,\delta)$. This proves conclusion (\ref{smallderi}).

 We already know $\theta_i'((M,\infty))\subseteq(-\infty,0)$ or $(0,\infty)$ for every $1\leq i\leq r$. This is in particular true for those $i$ such that $j_i=0$ or $s+1$. By Corollary \ref{coro:limits}, we have $\lim\limits_{a\rightarrow\infty}\theta_i(a)=\infty$ if $j_i=s+1$ and $\lim\limits_{a\rightarrow\infty}\theta_i(a)=-\infty$ if $j_i=0$. Noting that it is impossible for an increasing function to tend to $-\infty$ or for a decreasing function to tend to $\infty$, one claims $\theta_i'((M,\infty))\subseteq(-\infty,0)$ when $j_i=0$ and $\theta_i'((M,\infty))\subseteq(0,\infty)$ when $j_i=s+1$. These prove observations  (\ref{pinf}) and (\ref{minf}).
\end{proof}
The following proposition is then clear:
\begin{proposition}\label{命题：有界则导数趋零，无界则导数保号}
   If $\theta$ is a bounded root-function, then \[\lim\limits_{a\rightarrow+\infty}\theta'(a)=0.\] 
   
   If $\theta$ is an  unbounded root-function, then it tends to $\pm\infty$. What's more, $\theta'(a)>0$ for sufficiently large $a$ if $\lim\limits_{a\rightarrow+\infty}\theta(a)=+\infty$, and $\theta'(a)<0$ for sufficiently large $a$ if $\lim\limits_{a\rightarrow+\infty}\theta(a)=-\infty$.
\end{proposition}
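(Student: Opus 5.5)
The plan is to read this proposition off Corollary \ref{coro:limits} and Theorem \ref{thm:farbehave}, after handling the standing hypothesis $g(x_1,x_2,0)\neq0$ of that theorem.

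Let $\theta$ be a root-function of $g$, where $\discrim(g,x_3)\neq0$. Fix a general exponential potential periodic interval set of $g$; one exists, since $\lc_{x_2,x_1}(g)\in\Q[x_3]$ has finitely many real roots, each of which can be enclosed in a small rational interval. By Proposition \ref{thm:location-of-the-roots}, $\theta$ eventually stays in a single interval $(a_{j},b_{j})$.

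\textbf{Unbounded case.} If $j\in\{1,\dots,s\}$, then $\theta$ is eventually bounded. So an unbounded $\theta$ must have $j=0$ or $j=s+1$, and Corollary \ref{coro:limits} gives $\theta\to-\infty$ or $\theta\to+\infty$ respectively.

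\textbf{Bounded case.} Conversely, a bounded $\theta$ has $0<j<s+1$, and Corollary \ref{coro:limits} gives $\lim\theta=\lambda$, a real root of $\lc_{x_2,x_1}(g)$. For the derivative:
\begin{itemize}
\item Part \ref{smallderi} of Theorem \ref{thm:farbehave} gives $|\theta'(a)|<\delta$ for $a>M_\delta$.
\item This holds for all $\delta>0$ outside a finite exceptional set. That set cannot contain a neighbourhood of $0$, so we can choose arbitrarily small admissible $\delta$ (rational, as the proof of that theorem uses).
\item Hence $\theta'(a)\to0$.
\end{itemize}
The sign claims in the unbounded case are exactly parts \ref{pinf} and \ref{minf} of Theorem \ref{thm:farbehave}.

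\textbf{Removing the hypothesis $g(x_1,x_2,0)\neq0$.} This is the main obstacle. If $x_3\mid g$, then since $\discrim\neq0$, Proposition \ref{prop:panbieshiFeilingChongyao} gives $g=x_3\hat g$ with $x_3\nmid\hat g$.
\begin{itemize}
\item The root-functions of $g$ are those of $x_3$ (namely $\theta\equiv0$) together with those of $\hat g$, by Theorem \ref{thm:genhanshuFenjie}.
\item $\hat g$ satisfies $\discrim(\hat g,x_3)\neq0$, again by Proposition \ref{prop:panbieshiFeilingChongyao}.
\item The zero root-function trivially satisfies the bounded-case claim. The others are root-functions of $\hat g$, to which Theorem \ref{thm:farbehave} applies.
\end{itemize}

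\textbf{Direct fallback.} If one prefers not to depend on the theorem's hypotheses, there is a self-contained route.
\begin{itemize}
\item $\theta'$ is an analytic algebraic exponential polynomial by Proposition \ref{prop:derivativeClose}.
\item By Corollary \ref{coro:rootfunclargederilambda} and Proposition \ref{prop:almostalllambda}, $\theta'-\delta$ eventually has constant sign for all but finitely many $\delta$.
\item If $\theta'>\delta$ held eventually, $\theta$ would be unbounded; likewise $\theta'<-\delta$ is impossible.
\item Hence $|\theta'|<\delta$ eventually, and $\theta'\to0$.
\item For unbounded $\theta$, $\theta'$ eventually has a fixed sign by Proposition \ref{prop:rootfunclargederi}. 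That sign must agree with the direction in which $\theta$ tends to infinity, so $\theta'>0$ when $\theta\to+\infty$ and $\theta'<0$ when $\theta\to-\infty$.
\end{itemize}
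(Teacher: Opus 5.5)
Your proposal is correct, and your main route is essentially the paper's. The paper likewise reduces to a polynomial with nonzero discriminant and $g(x_1,x_2,0)\neq0$, treats $\theta\equiv0$ separately, and reads everything else off Corollary~\ref{coro:limits} and Theorem~\ref{thm:farbehave}; the only difference is that it passes to the irreducible factor annihilating $\theta$ (Propositions~\ref{prop:jiexiDaishuzhishuduoxiangshiShiMouyinziGenhanshu} and~\ref{prop:bukeyuePanbieshiFeiling}) instead of splitting off $x_3$ via Theorem~\ref{thm:genhanshuFenjie}.

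Your direct fallback is also valid, and it is actually the more robust argument. Conclusion~\ref{smallderi} of Theorem~\ref{thm:farbehave} fails as stated for nonzero constant root-functions such as that of $g=x_3-1$, because Corollary~\ref{coro:rootfunclargederilambda} drops the constant in $\theta-\lambda a\equiv c$. This is harmless for the present statement, since then $\theta'\equiv0$. The fallback never meets this issue: it needs only Proposition~\ref{prop:rootfunclargederi} applied to $\theta\mp\delta a$, plus the observation that $\theta'\equiv\delta$ can hold for at most one $\delta$.
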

\begin{proof}
Suppose $g$ is the irreducible polynomial annihilating $\theta$ according to Proposition \ref{prop:jiexiDaishuzhishuduoxiangshiShiMouyinziGenhanshu}, then $\deg(g,x_3)>0$.

If $\theta\equiv0$ then $\lim\limits_{a\rightarrow+\infty}\theta'(a)=0$ is clear. 

If $\theta$ is not identically zero, then $g(x_1,x_2,0)\neq0$. By Proposition \ref{prop:bukeyuePanbieshiFeiling} and Theorem \ref{thm:farbehave} Conclusion \ref{smallderi}, we have $\lim\limits_{a\rightarrow+\infty}\theta'(a)=0$ if $\theta$ is bounded. If $\theta$ is unbounded, it tends to $\pm\infty$ according to Corollary \ref{coro:limits}. The conclusions follow from Theorem \ref{thm:farbehave} Conclusions \ref{pinf}-\ref{minf}.
\end{proof}

The following proposition gives a sufficient condition for the derivatives of those $\theta_i$'s, with $j_i=s+1$, to be greater than some $\epsilon>0$ for sufficiently large positive $a$:
\begin{proposition}\label{prop:DaoshuJixian}
    Suppose $g\in\mathbb{Q}[x_1,x_2,x_3]$ satisfies $\deg(g,x_3)>0$ and $\discrim(g,x_3)\neq0$. Let $\theta$ be a root-function of $g$ tending to $+\infty$. Define $h(x_1,x_2,x_3)=g(x_1,x_2,x_1x_3)$. Then $\discrim(h,x_3)\neq0$ and if $0\leq \alpha_1<\alpha_2<\cdots<\alpha_k$ are all non-negative roots of $\lc_{x_2,x_1}(h)$, then 
    \[
    \lim_{a\rightarrow+\infty}\theta'(a)\in\{\alpha_1,\alpha_2,\cdots,\alpha_k,+\infty\}.
    \]
    Moreover, the terminal function $\theta(a)/a$ has the same germ with one of the root-functions of $h$, and 
    \[\lim_{a\rightarrow+\infty}\theta(a)/a=\lim_{a\rightarrow+\infty}\theta'(a).\]
\end{proposition}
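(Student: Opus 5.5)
The plan has four steps: the discriminant claim, identifying $\theta(a)/a$ as a root-function of $h$, the limit of $\theta(a)/a$, and transferring that limit to $\theta'$.

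\emph{Discriminant of $h$.} The claim $\discrim(h,x_3)\neq0$ is exactly Proposition \ref{prop:x1x3Baochix3wupingfang}, so it needs no further work.

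\emph{Identifying $\theta(a)/a$.} Set $\vartheta(a)=\theta(a)/a$ for $a>\max(M,0)$. It is analytic there, and $h(a,e^a,\vartheta(a))=g(a,e^a,\theta(a))\equiv0$. Since $h$ is nonzero (indeed $\deg(h,x_3)>0$) and $\discrim(h,x_3)\neq0$, Theorem \ref{theorem:rootfunc} applies to $h$. Proposition \ref{prop:lianxuTuigenhanshu} then shows that $[\vartheta]=[\eta]$ for some root-function $\eta$ of $h$.

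\emph{Limit of $\theta(a)/a$.} By Corollary \ref{coro:limits} applied to $h$ and $\eta$ (with arbitrarily fine general exponential potential periodic interval sets), $\lim_{a\to+\infty}\vartheta(a)$ is either $-\infty$, $+\infty$, or a real root of $\lc_{x_2,x_1}(h)$. Because $\theta\to+\infty$, we have $\vartheta(a)>0$ for large $a$, so the limit is not $-\infty$. A finite limit is then $\geq0$, i.e.\ it is one of the $\alpha_i$. Hence
\[L:=\lim_{a\to+\infty}\theta(a)/a\in\{\alpha_1,\ldots,\alpha_k,+\infty\}.\]

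\emph{Transferring the limit to $\theta'$.} This is the main obstacle: showing that $\lim\theta'(a)$ exists and equals $L$. I would first establish existence. By Proposition \ref{prop:derivativeClose}, $[\theta']\in\mathscr{R}$, so $\theta'$ is an analytic algebraic exponential polynomial. By Proposition \ref{prop:jiexiDaishuzhishuduoxiangshiShiMouyinziGenhanshu}, $[\theta']$ equals the germ of a root-function of some irreducible polynomial; irreducibility gives a nonzero discriminant by Proposition \ref{prop:bukeyuePanbieshiFeiling}. Corollary \ref{coro:limits} then shows that $\lim\theta'(a)=:L'$ exists in $[-\infty,+\infty]$. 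Next, L'H\^opital's rule applies to $\theta(a)/a$, since $a\to+\infty$ and $\theta'(a)/1$ has the limit $L'$. It gives $L=\lim\theta(a)/a=\lim\theta'(a)=L'$, which handles both the finite and infinite cases.

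If one prefers to avoid L'H\^opital, the same conclusion follows from a Ces\`aro-type argument. Writing $\theta(a)=\theta(a_0)+\int_{a_0}^a\theta'$ shows directly that the limit of $\theta'$, once known to exist, must equal the limit of $\theta(a)/a$.

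Combining the last two steps gives $\lim\theta'(a)=\lim\theta(a)/a\in\{\alpha_1,\ldots,\alpha_k,+\infty\}$.
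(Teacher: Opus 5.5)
Your proof is correct and follows essentially the paper's route. You use the same chain (Proposition \ref{prop:derivativeClose}, Proposition \ref{prop:jiexiDaishuzhishuduoxiangshiShiMouyinziGenhanshu}, Corollary \ref{coro:limits}) to get existence of $\lim\theta'$, then L'H\^opital's rule, then Proposition \ref{prop:lianxuTuigenhanshu} with Corollary \ref{coro:limits} applied to $h$. The differences are cosmetic: the steps come in a different order, and you get non-negativity of the limit from $\theta(a)/a>0$ rather than from $\theta'(a)>0$ for large $a$, which works equally well.
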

\begin{proof}
Proposition \ref{prop:x1x3Baochix3wupingfang} shows that $\deg(h,x_3)>0$ and $\discrim(h,x_3)\neq0$. Now we show that $\lim\limits_{a\rightarrow+\infty}\theta'(a)$ is a finite real number or $\pm\infty$: Note that $\theta$ is an analytic algebraic exponential polynomial. By Proposition \ref{prop:derivativeClose}, its derivative $\theta'$ is also an analytic algebraic exponential polynomial. Then, by applying Proposition \ref{prop:jiexiDaishuzhishuduoxiangshiShiMouyinziGenhanshu} to $\theta'$, we claim that $\theta'$ has the same germ with a root-function of some irreducible polynomial. By Corollary \ref{coro:limits}, we conclude that $\lim\limits_{a\rightarrow+\infty}\theta'(a)$ is a finite real number or $\pm\infty$.

    From Proposition \ref{命题：有界则导数趋零，无界则导数保号} we have From$\lim\limits_{a\rightarrow+\infty}\theta'(a)\geq0$. Since $\theta\rightarrow+\infty$ as $a\rightarrow+\infty$, L'Hôpital's rule claims that 
 \begin{equation}\label{equ:luobida}\lim_{a\rightarrow+\infty}\theta(a)/a=\lim_{a\rightarrow+\infty}\theta'(a)\geq 0.\end{equation}
Note that
 \[
h(a,e^a,\theta(a)/a)=g(a,e^a,a\cdot\theta(a)/a)=g(a,e^a,\theta(a))=0.
 \]
 Hence $\theta(a)/a$ has the same germ with one of the root-functions of $h$ by Proposition \ref{prop:lianxuTuigenhanshu}. Combining this with Corollary (\ref{coro:limits}) and condition \ref{equ:luobida} we claim that 
     \[
    \lim_{a\rightarrow+\infty}\theta'(a)\in\{\alpha_1,\alpha_2,\cdots,\alpha_k,+\infty\}.
    \]
\end{proof}

\section{Locating the semi-periodic solutions}\label{sec:semiper}
In this section we locate the semi-periodic solutions of the BETP equations (\ref{equ:ri}) on the graphs of finite many analytic algebraic exponential polynomial tending to infinity. The approach mainly combines the techniques of resultant and the terminal properties of analytic algebraic exponential polynomials. The key concept of semi-periodic solutions is described in Theorems \ref{thm:semiperbunch}, \ref{定理:恰一根} and \ref{定理:右上分布}
\begin{proposition}\label{prop:rootdecom}
    Suppose $r,s$ and $\mathfrak{o}=x_4^2+x_5^2-1\in\Q[x_1,x_2,\ldots,x_5]$ are polynomials. Consider the equations
\begin{equation}\left\{
\begin{array}{rcl}
     r(a,b,e^a,\sin b, \cos b)&=&0  \\
     s(a,b,e^a,\sin b, \cos b)&=&0
\end{array},
\right.
\end{equation}
with $a,b\in\R$ real variables. Define
\begin{equation}\label{方程:pqt}
\left\{\begin{array}{cc}
    p= & \text{res}(\text{res}(r,\mathfrak{o},x_4),\text{res}(s,\mathfrak{o},x_4),x_5)\in\mathbb{Q}[x_1,x_2,x_3] \\
    q= & \text{res}(\text{res}(r,\mathfrak{o},x_4),\text{res}(s,\mathfrak{o},x_4),x_2)\in\mathbb{Q}[x_1,x_3,x_5]\\
    t= & \text{res}(\text{res}(r,\mathfrak{o},x_5),\text{res}(s,\mathfrak{o},x_5),x_2)\in\mathbb{Q}[x_1,x_3,x_4]
\end{array}\right.,
\end{equation}
then there is a polynomial matrix $\mathcal{M}\in(\Q[x_1,x_2,\ldots,x_5])^{3\times3}$ such that \begin{equation}\label{equ:idealrela}
(p,q,t)^T=\mathcal{M}\cdot(r,s,\mathfrak{o})^T.
\end{equation}
Denote by $\{\phi_i\}$, $\{\theta_j\}$ and $\{\vartheta_k\}$ the sets of root-functions of the equations $p(a,y,e^a)=0$, $q(a,e^a,y)=0$ and $t(a,e^a,y)=0$, respectively. Suppose these functions are well-defined over $(N,\infty)$ for some $N>0$. For any triple of root-functions  $(\phi_i,\theta_j,\vartheta_k)$, we have: 

\noindent{}(1) If there is $a_{ijk}>N$ so that  \begin{equation}\label{equ:detFeiling}
\det(\mathcal{M})(a_{ijk},\phi_i(a_{ijk}),e^{a_{ijk}},\vartheta_k(a_{ijk}),\theta_j(a_{ijk}))\neq0,
\end{equation}
then there is $N_{ijk}>N$ such that $\forall a>N_{ijk},$
\begin{equation}\label{equ:detnonzero}
    \det(\mathcal{M})(a,\phi_i(a),e^a,\vartheta_k(a),\theta_j(a))\neq0.
\end{equation}
Moreover, a point $(a,b)=(a_*,\phi_i(a_*))$ with $a_*>N_{ijk}$ is a solution to equations (\ref{equ:ri}) whenever 
\begin{equation}\label{equ:sincosgood}
\left\{\begin{array}{rcl}
     \vartheta_k(a_*)&=& \sin\phi_i(a_*) \\
     \theta_j(a_*)&=& \cos\phi_i(a_*)
\end{array}\right..
\end{equation}
\noindent{}(2) If there is $a_{ijk}>N$ so that
\begin{equation}\label{equ:rsoFeiling}\left\{
\begin{array}{rcl}
     r(a_{ijk},\phi_i(a_{ijk}),e^{a_{ijk}},\vartheta_k(a_{ijk}),\theta_j(a_{ijk}))&\neq&0,\text{\;\;or}\\
     s(a_{ijk},\phi_i(a_{ijk}),e^{a_{ijk}},\vartheta_k(a_{ijk}),\theta_j(a_{ijk}))&\neq&0\text{,\;\;or}\\
     (\vartheta_k(a_{ijk}))^2+(\theta_j(a_{ijk}))^2-1&\neq&0,
\end{array}\right.\end{equation}
then there is $N_{ijk}>N$ such that any point $(a_*,\phi_i(a_*))$ with $a>N_{ijk}$ is not a solution to equations (\ref{equ:ri}) whenever equations (\ref{equ:sincosgood}) hold.
\vspace{1.5mm}

\noindent{}(3) If for every $\;a>N$,
 \begin{equation}\label{equ:detHengling}
\det(\mathcal{M})(a,\phi_i(a),e^{a},\vartheta_k(a),\theta_j(a))=0,
\end{equation}
and
\begin{equation}\label{equ:rso=0}
\left\{
\begin{array}{rcl}
     r(a,\phi_i(a),e^a,\vartheta_k(a),\theta_j(a))&=&0\\
     s(a,\phi_i(a),e^a,\vartheta_k(a),\theta_j(a))&=&0\\
     (\vartheta_k(a))^2+(\theta_j(a))^2-1&=&0
\end{array}\right.,
\end{equation}
then a point $(a,b)=(a_*,\phi_i(a_*))$ with $a_*>N$ is a solution to equations (\ref{equ:ri}) whenever equations (\ref{equ:sincosgood}) hold.
\end{proposition}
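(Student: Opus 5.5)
The plan is to first obtain the matrix $\mathcal{M}$ from the standard fact that a resultant lies in the ideal of its two arguments, with explicit cofactors. Write $R_4=\text{res}(r,\mathfrak{o},x_4)=u_1r+v_1\mathfrak{o}$ and $S_4=\text{res}(s,\mathfrak{o},x_4)=u_2s+v_2\mathfrak{o}$. Then $p=\text{res}(R_4,S_4,x_5)=w_1R_4+w_2S_4$ and $q=\text{res}(R_4,S_4,x_2)=w_3R_4+w_4S_4$. Substituting, $p$ and $q$ become explicit $\Q[x_1,\dots,x_5]$-combinations of $r,s,\mathfrak{o}$. The same construction with $x_5$ eliminated first gives $t$. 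Collecting coefficients gives the rows of $\mathcal{M}$, so (\ref{equ:idealrela}) holds by construction.

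For (1), substitute the five functions $a,\phi_i(a),e^a,\vartheta_k(a),\theta_j(a)$ into $\det(\mathcal{M})$. Each of $\phi_i,\theta_j,\vartheta_k$ is an analytic algebraic exponential polynomial, because root-functions are annihilated by $p,q,t$ via Theorem \ref{theorem:rootfunc}; here the role of $x_2,x_3$ in $p$ is swapped, which changes nothing. By Proposition \ref{prop:isfield}, ring operations on these germs stay in $\mathscr{R}$, and $e^a$ and $a$ are themselves in $\mathscr{R}$. So $D(a):=\det(\mathcal{M})(\dots)$ is an analytic algebraic exponential polynomial on $(N,\infty)$. Proposition \ref{prop:rootfunclarge} then says $D\equiv0$ or $D$ is eventually nonzero. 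The hypothesis $D(a_{ijk})\neq0$ excludes $D\equiv 0$, since analyticity on the connected interval $(N,\infty)$ lets us speak of identical vanishing on the whole interval rather than only at the germ level. This yields $N_{ijk}$.

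Next, for $a_*>N_{ijk}$ with (\ref{equ:sincosgood}), evaluate (\ref{equ:idealrela}) at the point $P=(a_*,\phi_i(a_*),e^{a_*},\sin\phi_i(a_*),\cos\phi_i(a_*))$. The left side is $(p,q,t)(P)=(0,0,0)$, because $\phi_i,\theta_j,\vartheta_k$ are roots of $p,q,t$ there, using (\ref{equ:sincosgood}) to identify the arguments. Since $\mathcal{M}(P)$ is invertible, $(r,s,\mathfrak{o})(P)=0$, so $(a_*,\phi_i(a_*))$ solves (\ref{equ:ri}).

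For (2), apply the same reasoning to each of the three functions $r(\dots),s(\dots),\vartheta_k^2+\theta_j^2-1$. Each is in $\mathscr{R}$ and not identically zero by hypothesis, so by Proposition \ref{prop:rootfunclarge} it is nonzero for $a>N_{ijk}$. Under (\ref{equ:sincosgood}), the nonvanishing one, say $r(a_*,\phi_i(a_*),e^{a_*},\sin\phi_i(a_*),\cos\phi_i(a_*))$, is nonzero, so the point is not a solution. If it is the third function, it says $\sin^2+\cos^2\neq1$, which is impossible, so (\ref{equ:sincosgood}) simply never holds there.

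Part (3) is immediate: substituting (\ref{equ:sincosgood}) into (\ref{equ:rso=0}) at $a_*$ gives $r=s=0$ at the point, so no determinant argument is needed.

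The main obstacle is the identity-theorem step in (1). It requires $D$ to be analytic on all of $(N,\infty)$, not merely as a germ, which in turn requires the root-functions to be analytic there. Theorem \ref{theorem:rootfunc} provides exactly this on their common domain.
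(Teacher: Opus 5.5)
Your proposal is correct and follows essentially the same route as the paper: $\mathcal{M}$ comes from the ideal membership of resultants, and parts (1) and (2) apply Propositions \ref{prop:isfield} and \ref{prop:rootfunclarge} to the substituted determinant and to whichever of $r$, $s$, $x_4^2+x_5^2-1$ is nonzero at $a_{ijk}$ (only that one is known not to vanish identically, not each of the three), while (3) is direct substitution. You also make explicit a step the paper leaves implicit, namely using the identity theorem on the connected interval $(N,\infty)$ to pass from $D(a_{ijk})\neq0$ to a nonzero germ.
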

\begin{proof}
By the property of the resultant, the polynomials $p,q,t$ are all in the idea generated by $r, s$ and $\mathfrak{o}$ in the ring $\Q[x_1,x_2,\ldots,x_5]$. Thus the matrix $\mathcal{M}$ exists (if fact in can be computed).
\vspace{1.5mm}

\noindent{}\textcolor{blue}{Proof of case (1)}: By Proposition \ref{prop:isfield}, the function 
\[\det(\mathcal{M})(a,\phi_i(a),e^a,\vartheta_j(a),\theta_k(a))\] is an analytic algebraic exponential polynomial. Therefore, the existence of the number $N_{ijk}$ is guaranteed by Proposition \ref{prop:rootfunclarge} and condition (\ref{equ:detFeiling}).

Combining equation (\ref{equ:idealrela}) and inequality (\ref{equ:detnonzero}), one observes that for all $a>N_{ijk}$,
\begin{equation}\label{equ:zeroequizero}
    \left\{\begin{array}{rcl}
     p(a,\phi_i(a),e^a)&=&0\\
     q(a,e^a,\theta_j(a))&=&0\\
     t(a,e^a,\vartheta_k(a))&=&0
\end{array}\right.\Longleftrightarrow\left\{
\begin{array}{rcl}
     r(a,\phi_i(a),e^a,\vartheta_k(a),\theta_j(a))&=&0\\
     s(a,\phi_i(a),e^a,\vartheta_k(a),\theta_j(a))&=&0\\
     (\vartheta_k(a))^2+(\theta_j(a))^2&=&1
\end{array}\right..
\end{equation}

Note that $\phi_i$, $\theta_j$ and $\vartheta_k$ are root-functions of $p,\,q$ and $t$, the equations in the left column hold for all $a>N_{ijk}$. Therefore, the equations in the right column also hold for every $a>N_{ijk}$.  Thus, if $(a_*,\phi_i(a_*))$ satisfies condition (\ref{equ:sincosgood}), it is a solution to equations (\ref{equ:ri}).
\vspace{1.5mm}

\noindent{}\textcolor{blue}{Proof of case (3)}:

Now that equations (\ref{equ:rso=0}) hold for every $a>N$, the conclusion is clear.
\vspace{1.5mm}

\noindent{}\textcolor{blue}{Proof of case (2)}: Without loss of generality, we suppose that
\begin{equation}\label{equ:raijkBudengLing}
r(a_{ijk},\phi_i(a_{ijk}),e^{a_{ijk}},\vartheta_k(a_{ijk}),\theta_j(a_{ijk}))\neq0.
\end{equation}
Again, by Proposition \ref{prop:isfield}, any one of the functions on the left-hand sides of equations (\ref{equ:rso=0}), say the function
\[r(a,\phi_i(a),e^a,\vartheta_k(a),\theta_j(a)),\]
is an analytic algebraic exponential polynomial. Therefore, by Proposition \ref{prop:rootfunclarge} and condition (\ref{equ:raijkBudengLing}), there is a number $N_{ijk}>N$ such that 
\[r(a,\phi_i(a),e^a,\vartheta_k(a),\theta_j(a))\neq0,\]
for any $a>N_{ijk}$. Hence, any point $(a_*,\phi_i(a_*))$ with $a_*>N_{ijk}$ is not a solution to equations (\ref{equ:ri})  when equations (\ref{equ:sincosgood}) hold.
\end{proof}
\begin{remark}
    Conditions (\ref{equ:detFeiling}) and (\ref{equ:rsoFeiling}) can be verified by interval arithmetic techniques, while conditions (\ref{equ:detHengling}) and (\ref{equ:rso=0}) cannot. The interval arithmetic techniques only ``suggest" that the latter two conditions hold when they fail to verify the first two.
\end{remark}
\begin{definition}\label{定义:根线组合情形分类}
    In Proposition \ref{prop:rootdecom}, we say the triple of root-functions $(\phi_i,\theta_j,\vartheta_k)$ is of the \emph{non-degenerate/degenerate case} if it is satisfies case (1)/(3) therein. We say it is of the \emph{no-solution case} if it satisfies case (2).
\end{definition}
\begin{definition}\label{定义:三分界}
    For every triple $(\phi_i,\theta_j,\vartheta_k)$ that is of the nondegenerate case or the no-solution case, the boundary $N_{ijk}$ is given in Proposition \ref{prop:rootdecom}. For every $(\phi_i,\theta_j,\vartheta_k)$ that is of the degenerate case, we define $N_{ijk}=N$, with $N$ being the common stat-point of those root-functions in the sets $\{\phi_i\}$, $\{\theta_j\}$ and $\{\vartheta_k\}$. Then, the number $N^*=\max\limits_{i,j,k}N_{ijk}$ is called \emph{the tripartite boundary} of equations (\ref{equ:ri}).
\end{definition}
\begin{remark}\label{评论:退化非退化满足}
    From the proof of Proposition \ref{prop:rootdecom}, a triple $(\phi_i,\theta_j,\vartheta_k)$ that is of degenerate or non-degenerate case satisfies equations (\ref{equ:rso=0}) for all $a>N^*$.
\end{remark}
Then following corollary is straightforward:
\begin{corollary}\label{引理:根按ijk分解}
We use the same notation as in Proposition \ref{prop:rootdecom}. Set $N^*=\max\limits_{i,j,k}N_{ijk}$ to be the tripartite boundary, then a point $(a_*,b_*)$ with $a_*>N^*$ is a solution to equations (\ref{equ:ri}) if and only if there is a triple $(\phi_i,\theta_j,\vartheta_k)$ of root-functions, which is of the degenerate or the non-degenerate case, such that
\begin{equation}\label{方程:跟函数分别等于b正余弦}
\left\{\begin{array}{rcl}
     \phi_i(a_*) &=&b_* \\
     \vartheta_k(a_*)&=& \sin\phi_i(a_*) \\
     \theta_j(a_*)&=& \cos\phi_i(a_*)
\end{array}\right..
\end{equation}
\end{corollary}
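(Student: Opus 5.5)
We prove the two directions separately. Both rest on the definition of the tripartite boundary $N^*$ (Definition \ref{定义:三分界}) together with Proposition \ref{prop:rootdecom} and Remark \ref{评论:退化非退化满足}. Throughout, fix $a_*>N^*$. Since $N^*\geq N_{ijk}\geq N$ for every triple, all the root-functions $\phi_i,\theta_j,\vartheta_k$ are defined at $a_*$.

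\emph{Sufficiency.} Suppose some triple $(\phi_i,\theta_j,\vartheta_k)$ of degenerate or non-degenerate case satisfies equations (\ref{方程:跟函数分别等于b正余弦}) at $a_*$. By Remark \ref{评论:退化非退化满足}, equations (\ref{equ:rso=0}) hold at $a=a_*$. Substituting $\phi_i(a_*)=b_*$, $\vartheta_k(a_*)=\sin b_*$ and $\theta_j(a_*)=\cos b_*$ into them gives $r(a_*,b_*,e^{a_*},\sin b_*,\cos b_*)=0$ and $s(a_*,b_*,e^{a_*},\sin b_*,\cos b_*)=0$. Hence $(a_*,b_*)$ solves (\ref{equ:ri}). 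Equivalently, one can cite case (1) or case (3) of Proposition \ref{prop:rootdecom} directly, since $a_*>N_{ijk}$.

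\emph{Necessity.} Let $(a_*,b_*)$ solve (\ref{equ:ri}). The point $(a_*,b_*,e^{a_*},\sin b_*,\cos b_*)$ annihilates $r$, $s$ and $\mathfrak{o}$. By (\ref{equ:idealrela}) it therefore annihilates $p$, $q$ and $t$, so
\[
p(a_*,b_*,e^{a_*})=0,\qquad q(a_*,e^{a_*},\cos b_*)=0,\qquad t(a_*,e^{a_*},\sin b_*)=0 .
\]
Because $a_*$ exceeds the common start-point $N$, the graph decomposition (\ref{equ:cuHeTuxiang}) of Theorem \ref{theorem:rootfunc}, applied to $p$, $q$ and $t$, produces indices $i,j,k$ with $\phi_i(a_*)=b_*$, $\theta_j(a_*)=\cos b_*$ and $\vartheta_k(a_*)=\sin b_*$. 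So equations (\ref{方程:跟函数分别等于b正余弦}) hold for this triple.

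It remains to rule out that the triple is of the no-solution case. If it were, then $a_*>N^*\geq N_{ijk}$ together with case (2) of Proposition \ref{prop:rootdecom} says that $(a_*,\phi_i(a_*))$ is not a solution whenever (\ref{equ:sincosgood}) holds. But (\ref{equ:sincosgood}) does hold here, contradicting that $(a_*,b_*)$ is a solution. Hence the triple is degenerate or non-degenerate.

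\emph{Main obstacle.} The delicate point is the implicit assumption that every triple falls into exactly one of cases (1), (2) and (3). Suppose a triple is not in case (1), so $\det(\mathcal{M})$ vanishes at every $a>N$, and not in case (2), so $r$, $s$ and $\vartheta_k^2+\theta_j^2-1$ vanish at every $a>N$. Then it satisfies (\ref{equ:detHengling}) and (\ref{equ:rso=0}) for all $a>N$, which is exactly case (3). So the trichotomy is exhaustive, and $N^*$ is well defined as a finite maximum.

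A second subtlety is applying Theorem \ref{theorem:rootfunc} to $p$, $q$ and $t$. This requires their discriminants to be nonzero, which is implicit in the standing assumption that the root-functions are well defined over $(N,\infty)$. I would state this hypothesis explicitly at that step.
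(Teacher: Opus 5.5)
Your proof is correct and follows essentially the same route as the paper: sufficiency holds because equations (\ref{equ:rso=0}) are satisfied for all $a>N^*$ on degenerate and non-degenerate triples, and necessity comes from passing to $p,q,t$ via (\ref{equ:idealrela}), picking root-functions through Theorem~\ref{theorem:rootfunc}, and ruling out the no-solution case with case (2) of Proposition~\ref{prop:rootdecom}. You also check explicitly that cases (1)--(3) exhaust all triples, so that ``not no-solution'' really means ``degenerate or non-degenerate'' and $N^*$ is a finite maximum; the paper leaves this implicit, and it is a worthwhile addition.
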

\begin{proof}
    The ``if" part of the proof is clear, since equations (\ref{equ:rso=0}) hold for every $a=a_*>N^*$ in both the degenerate and the non-degenerate cases.

    Now we prove the ``only if" part: Suppose $(a,b)=(a_*,b_*)$ is a solution to equations (\ref{equ:ri}). From equation (\ref{equ:idealrela}), we conclude that it is also a solution to the equations
    \[
    \left\{
    \begin{array}{rcl}
         p(a,b,e^a)&=& 0 \\
          q(a,e^a,\cos b)&=&0  \\
         t(a,e^a,\sin b)&=&0
    \end{array}
    \right..
    \]
    According to Theorem \ref{theorem:rootfunc}, there are root-functions $\phi_i,\theta_j$ and $\vartheta_k$ of these equations, such that 
    \[
\left\{\begin{array}{rcl}
     \phi_i(a_*)&=&b_*  \\
     \vartheta_k(a_*)&=& \sin b_* \\
     \theta_j(a_*)&=& \cos b_*
\end{array}\right.,
\]which are just equations (\ref{方程:跟函数分别等于b正余弦}). Now we have a solution $(a_*,\phi_i(a_*))$ to equations (\ref{equ:ri}) with equations (\ref{equ:sincosgood}) hold, contradicting the conclusion of the second part of Proposition \ref{prop:rootdecom}. This means that the triple $(\phi_i,\theta_j,\vartheta_k)$ is of the degenerate or the non-degenerate case.
\end{proof}
\begin{remark}
    For every triple $(\phi_i,\theta_j,\vartheta_k)$, we use interval arithmetic techniques to recognize the non-degenerate case and the no-solution case, then we find those $(a_*,b_*)$ satisfying equations (\ref{方程:跟函数分别等于b正余弦}) to obtain the solutions with $a_*>N^*$.
\end{remark}

The proposition below  shows how to decide the condition (\ref{方程:跟函数分别等于b正余弦}) for a specific triple $(\phi_i,\theta_j,\vartheta_k)$.
\begin{proposition}\label{prop:2+sign}
    For any triple $(\phi_i,\theta_j,\vartheta_k)$ of the degenerate or the non-degenerate case with $\phi_i$ not a constant, there is $N^s>N^*$ such that for all $a>N^s$, the signs $\mathfrak{b}=\text{\emph{sgn}}(\phi'_i(a))\neq0$ and $\mathfrak{s}=\text{\emph{sgn}}(\vartheta_k(a))$ are both invariant. Then, for any $a_*>N^s$, equations (\ref{方程:跟函数分别等于b正余弦}) hold iff its first and last equations and the condition 
    \begin{equation}\label{equ:sgncondi}
\text{\emph{{sgn}}}\big(\frac{\mathrm{d}\cos\phi_i(a)}{\mathrm{d}a}\big|_{a=a_*}\big)=-\mathfrak{bs} 
    \end{equation}hold.
\end{proposition}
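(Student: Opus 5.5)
The plan has two parts. First I establish the eventual sign invariance. Then I reduce the equivalence to a comparison of signs, using the identity $\vartheta_k^2+\theta_j^2=1$, which holds on the tail for triples of the degenerate or non-degenerate case.

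\emph{Sign invariance.} Since $\phi_i$ is a root-function, it is an analytic algebraic exponential polynomial, and by hypothesis it is not constant.
\begin{itemize}
\item Proposition \ref{prop:rootfunclargederi} gives an $M_1$ such that $\phi_i'(a)\neq0$ for all $a>M_1$. Since $\phi_i'$ is continuous, the intermediate value theorem then forces $\mathfrak{b}=\text{sgn}(\phi_i'(a))$ to be a constant nonzero value on $(M_1,\infty)$.
\item For $\vartheta_k$, Proposition \ref{prop:rootfunclarge} gives two possibilities. Either $\vartheta_k\equiv0$, in which case $\mathfrak{s}=0$ is trivially invariant. Or $\vartheta_k(a)\neq0$ for $a>M_2$, and continuity again gives a constant sign $\mathfrak{s}\in\{\pm1\}$.
\end{itemize}
Set $N^s=\max\{N^*,M_1,M_2\}+1$.

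\emph{Equivalence.} Fix $a_*>N^s$. The direction ``equations (\ref{方程:跟函数分别等于b正余弦}) $\Rightarrow$ condition (\ref{equ:sgncondi})'' and the converse both reduce to the same computation. By the chain rule,
\[
\frac{\mathrm{d}\cos\phi_i(a)}{\mathrm{d}a}\Big|_{a=a_*}=-\sin\phi_i(a_*)\,\phi_i'(a_*),
\]
so the left side of (\ref{equ:sgncondi}) equals $-\mathfrak{b}\,\text{sgn}(\sin\phi_i(a_*))$. Because $\mathfrak{b}\neq0$, condition (\ref{equ:sgncondi}) is equivalent to $\text{sgn}(\sin\phi_i(a_*))=\mathfrak{s}=\text{sgn}(\vartheta_k(a_*))$.

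It remains to show that, assuming the first and last equations of (\ref{方程:跟函数分别等于b正余弦}), the middle equation $\vartheta_k(a_*)=\sin\phi_i(a_*)$ is equivalent to this sign equality.
\begin{itemize}
\item By Remark \ref{评论:退化非退化满足}, the triple satisfies $(\vartheta_k(a))^2+(\theta_j(a))^2=1$ for all $a>N^*$.
\item Combining this with $\theta_j(a_*)=\cos\phi_i(a_*)$ gives $\sin^2\phi_i(a_*)=1-\theta_j(a_*)^2=\vartheta_k(a_*)^2$, hence $\sin\phi_i(a_*)=\pm\vartheta_k(a_*)$.
\item Two real numbers with equal absolute values are equal if and only if they have the same sign. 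This includes the case where both are zero, so the case $\mathfrak{s}=0$ is covered too.
\end{itemize}
Therefore, under the first and last equations, the middle equation holds if and only if (\ref{equ:sgncondi}) holds, which is the claimed equivalence.

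The main point requiring care is not analytic difficulty. It is getting the algebraic identity $\vartheta_k^2+\theta_j^2=1$ on the whole tail, and this is exactly where the degenerate/non-degenerate hypothesis (via Remark \ref{评论:退化非退化满足}) is used. The other point of care is the zero-sign case $\vartheta_k\equiv0$, which the sign-comparison argument above handles uniformly.
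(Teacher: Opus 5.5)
Your proof is correct and takes essentially the same route as the paper. You get sign invariance from Propositions \ref{prop:rootfunclarge} and \ref{prop:rootfunclargederi} plus continuity, then combine the identity $\vartheta_k^2+\theta_j^2=1$ on $(N^*,\infty)$ with $\frac{\mathrm{d}\cos\phi_i}{\mathrm{d}a}=-\phi_i'\sin\phi_i$ to match signs. Your ``equal absolute values iff equal signs'' step proves both directions and the case $\mathfrak{s}=0$ at once, where the paper leaves the forward direction as ``clear''.
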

\begin{proof}
Note that $\phi_i'$ is an analytic algebraic exponential polynomial (Proposition \ref{prop:derivativeClose}), so is $\vartheta_k$. Hence $N^s$ exists according to Proposition \ref{prop:rootfunclarge} and Proposition \ref{prop:rootfunclargederi}, and the continuity of $\phi_i'$ and $\vartheta_k$.

Suppose that $\text{{sgn}}\big(\frac{\mathrm{d}\cos\phi_i(a)}{\mathrm{d}a}\big|_{a=a_*}\big)=-\mathfrak{bs}$ and $a_*>N^s$ is such that $\theta_j(a_*)=\cos\phi_i(a_*)$. Since the triple $(\phi_i,\theta_j,\vartheta_k)$ is of the degenerate or the non-degenerate case, equations (\ref{equ:rso=0}) hold for any $a=a_*>N^s>N^*$. We  have
\begin{equation}\label{equ:vartheta=pmsin}
\begin{array}{rcl}
    \vartheta_k(a_*) &= &\pm\sqrt{1-(\theta_j(a_*))^2}  \\
     & =&\pm\sqrt{1-(\cos\phi_i(a_*))^2}\\
     & =&\pm\sin\phi_i(a_*).
\end{array}
\end{equation}
On the other hand, 
\begin{equation}\label{方程:cosphi求导}
    \frac{\mathrm{d}\cos\phi_i(a)}{\mathrm{d}a}\big|_{a=a_*}=-\phi_i'(a_*)\sin\phi_i(a_*)
\end{equation}
implying that $\text{{sgn}}(\sin\phi_i(a_*))=-\,\text{{sgn}}\big(\phi_i'(a_*)\frac{\mathrm{d}\cos\phi_i(a)}{\mathrm{d}a}\big|_{a=a_*}\big)=-(\mathfrak{b}\cdot(-\mathfrak{b}\mathfrak{s}))=\mathfrak{s}=\text{{sgn}}(\vartheta_k(a_*))$. Therefore, $\vartheta_k(a_*)=\sin\phi_i(a_*)$ follows from equation (\ref{equ:vartheta=pmsin}). 

The reverse implication is clear according to equation (\ref{方程:cosphi求导}).
\end{proof}
\begin{remark}\label{rem:intersection}
    In equations (\ref{方程:跟函数分别等于b正余弦}), there are 3 formulae. Proposition \ref{prop:2+sign} reduces them to the sign-condition (\ref{equ:sgncondi}) and other 2 formulae, meaning two curves intersect each other in geometry, which is much simpler.
\end{remark}

\begin{theorem}\label{thm:semiperbunch}
Set $(\phi_i,\theta_j,\vartheta_k)$ to be a triple of degenerate of non-degenerate case. Suppose that $\lim\limits_{a\rightarrow+\infty}\phi_i(a)=+\infty$ and the number $N^s$ is as in Proposition \ref{prop:2+sign}, then $\mathfrak{b}=\emph{\text{sgn}}(\phi'_i((N^s,\infty)))=1$ and $\{\theta_j(a),\vartheta_k(a)\}\subset[-1,1]$ for every $a>N^s$. Moreover,
 
 (1) if $\mathfrak{s}=1$, and $M> N^s$ is any maximal point of the function $y=\cos \phi_i(a)$, $m>M$ is the smallest minimal point of the function $y=\cos \phi_i(a)$ exceeding $M$, then there are at least one but at most finitely many $a_*\in(M,m)$ satisfying equations (\ref{equ:sincosgood}). Thus $(a_*,\phi_i(a_*))$ is a solution to equations (\ref{equ:ri}). Moreover, there is no such $a_*$ in any closed increasing intervals of $y=\cos \phi_i(a)$;

    (2) if $\mathfrak{s}=-1$, and $m> N^s$ is any minimal point of the function $y=\cos\phi_i(a)$, $M>m$ is the smallest maximal point of the function $y=\cos\phi_i(a)$ exceeding $m$, then there are at least one but at most finitely many $a_*\in(m,M)$ satisfying equations (\ref{equ:sincosgood}). Thus $(a_*,\phi_i(a_*))$ is a solution to equations (\ref{equ:ri}). Moreover, there is no such $a_*$ in any closed decreasing intervals of $y=\cos \phi_i(a)$;

    (3) if $\mathfrak{s}=0$, those maximal (minimal) points of the function $y=\cos\phi_i(a)$ exceeding $N^s$ are the only values of $a_*>N^s$ satisfying equations (\ref{equ:sincosgood}) when $\theta_i(a)\equiv1$ $(\theta_i(a)\equiv-1)$.

In all cases, there are countably many real solutions to equations (\ref{equ:ri}) on the graph of the function $\phi_i$ restricted to $(N^s,\infty)$.
\end{theorem}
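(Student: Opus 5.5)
First I will settle the preliminary claims. Since $\phi_i\to+\infty$ and $\phi_i$ is not constant, Proposition \ref{命题：有界则导数趋零，无界则导数保号} gives $\phi_i'(a)>0$ for large $a$. The sign $\mathfrak{b}$ is invariant on $(N^s,\infty)$, so $\mathfrak{b}=1$. Because the triple is of degenerate or non-degenerate type, Remark \ref{评论:退化非退化满足} gives $\vartheta_k(a)^2+\theta_j(a)^2=1$ for all $a>N^*$. Hence both $\theta_j(a)$ and $\vartheta_k(a)$ lie in $[-1,1]$ for $a>N^s$.

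By Proposition \ref{prop:2+sign}, for $a_*>N^s$, equations (\ref{equ:sincosgood}) are equivalent to the two conditions $\theta_j(a_*)=\cos\phi_i(a_*)$ and $\text{sgn}\big(\frac{\mathrm{d}}{\mathrm{d}a}\cos\phi_i(a)|_{a_*}\big)=-\mathfrak{s}$. Since $\phi_i$ is strictly increasing to $+\infty$, the function $C(a)=\cos\phi_i(a)$ is well understood. Its critical points are exactly the $a$ with $\phi_i(a)\in\pi\mathbb{Z}$. The maxima of $C$ (value $1$) and minima (value $-1$) alternate, and on each interval between a maximum and the next minimum $C$ is strictly decreasing from $1$ to $-1$; symmetrically for increasing intervals. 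There are infinitely many such intervals.

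Case (1), $\mathfrak{s}=1$. On $(M,m)$ we have $C'<0$, matching the required sign $-\mathfrak{s}=-1$. Consider $F(a)=C(a)-\theta_j(a)$. Then $F(M)=1-\theta_j(M)\ge0$ and $F(m)=-1-\theta_j(m)\le0$. We need strict signs to obtain a zero in the open interval. Here $\mathfrak{s}=1$ means $\vartheta_k>0$, so $\theta_j^2<1$ and $|\theta_j|<1$ on $(N^s,\infty)$. Thus $F(M)>0>F(m)$, and the intermediate value theorem gives a zero $a_*\in(M,m)$. That zero satisfies equations (\ref{equ:sincosgood}), and then Corollary \ref{引理:根按ijk分解} shows $(a_*,\phi_i(a_*))$ solves (\ref{equ:ri}). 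On closed increasing intervals, $C'\ge0$ everywhere, so the sign condition $-1$ fails. At the endpoints $C'=0$ also fails the condition, so no $a_*$ exists there. Case (2) is the mirror argument, with $\vartheta_k<0$ and the roles of increasing and decreasing swapped. For case (3), $\vartheta_k\equiv0$ forces $\theta_j^2\equiv1$, so $\theta_j\equiv\pm1$ by continuity. The sign condition then requires $C'(a_*)=0$, and $C(a_*)=\pm1$ picks out exactly the maxima or the minima.

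The main obstacle is finiteness of the zeros in $(M,m)$. The plan is to reparametrize by $u=\phi_i(a)$, which is legitimate since $\phi_i$ is a strictly increasing analytic bijection onto its image. The zeros of $F$ on $[M,m]$ then correspond to the zeros of $\cos u-\theta_j(\phi_i^{-1}(u))$ on the compact interval $[k\pi,(k+1)\pi]$. Both $F$ and this reparametrized function are real analytic on a neighbourhood of the compact interval $[M,m]$, so infinitely many zeros would force an accumulation point and hence $F\equiv0$ near that interval. I will rule out $F\equiv0$ separately. If it held, then $\theta_j=\cos\phi_i$ identically on $(N^s,\infty)$ by analyticity. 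But $\theta_j$ is a root-function and therefore has a limit by Corollary \ref{coro:limits}, whereas $\cos\phi_i$ oscillates between $\pm1$ forever, a contradiction. Countability follows because there are countably many monotone intervals, each containing finitely many solutions, and in case (3) the extrema are countable. Infinitude follows from the at-least-one-per-interval statement in cases (1) and (2), and from the infinitely many extrema in case (3).
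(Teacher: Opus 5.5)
Your proposal is correct and follows essentially the same route as the paper. You get $\mathfrak{b}=1$ from sign invariance and $\phi_i\to+\infty$, get $\theta_j,\vartheta_k\in[-1,1]$ from $\theta_j^2+\vartheta_k^2=1$, obtain existence by the intermediate value theorem plus the sign criterion of Proposition \ref{prop:2+sign} (with $|\theta_j|<1$ when $\mathfrak{s}=\pm1$), and get finiteness from analyticity of $\cos\phi_i-\theta_j$, contrasting the limit of $\theta_j$ with the oscillation of $\cos\phi_i$. The reparametrization by $u=\phi_i(a)$ is an unnecessary detour, since $F$ is already analytic on a neighbourhood of $[M,m]$.
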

\begin{proof}
According to Proposition \ref{prop:2+sign} and Remark \ref{rem:intersection}, we consider the intersection of the curves $y=\theta_j(a)$ and $y=\cos\phi_i(a)$. 

The sign of $\phi'_i(a)$ is invariant and nonzero for all $a>N^s$ by Proposition \ref{prop:2+sign}. Now that $\lim\limits_{a\rightarrow+\infty}\phi_i(a)=+\infty$, $\mathfrak{b}=\text{sgn}(\phi'_i(a))=1$ for all $a>N^s$. Therefore, $\phi_i(a)$ is monotonously increasing in $(N^s,\infty)$. The curve of $y=\cos\phi_i(a)$ is similar to the one of the cosine function: oscillating from $-1$ to 1 and from 1 back to -1 again and again for countably many times. Since $N^s>N^*$, $\theta_j^2(a)+\vartheta_k^2(a)=1$ for all $a>N^s$. Thus $\theta_j((N^s,\infty))\subseteq[-1,1]$ and $\vartheta_j((N^s,\infty))\subseteq[-1,1]$. Every time the curve of the function $y=\cos\phi_i(a)$ increases from $-1$ to 1, it intersects the curve of the function $y=\theta_j(a)$ at least once. The same thing happens when the curve of the function $y=\cos\phi_i(a)$ decreases from 1 back to $-1$.

Suppose $\mathfrak{s}=\text{sgn}(\vartheta_k((N^s,\infty)))=1$, then equation (\ref{equ:vartheta=pmsin}) indicates that  the function $\theta_j(a)\notin\{-1,1\}$ for all $a>N^s$. Note that $-\mathfrak{b}\mathfrak{s}=-1$, by Proposition \ref{prop:2+sign} and condition (\ref{equ:sgncondi}), whenever the curve of the function $y=\cos\phi_i(a)$ decreases from $1$ back to $-1$ and intersects the curve of the function $y=\theta_j(a)$ at some $a=a_*$ (which satisfies $\cos\phi_i(a_*)\not\in\{\pm1\}$), the point $(a_*,b_*)=(a_*,\phi_i(a_*))$ satisfies equations (\ref{方程:跟函数分别等于b正余弦}). Thus $(a_*,b_*)$ is a solution to equations (\ref{equ:ri}) by Proposition \ref{prop:rootdecom}. While the intersections obtained when the curve of the function $y=\cos\phi_i(a)$ increases from $-1$ to 1 do not satisfy equations (\ref{方程:跟函数分别等于b正余弦}). Thus those $(a_*,b_*)$ are the only solutions to equations (\ref{方程:跟函数分别等于b正余弦}) if  $a_*>N^s$ is required.

Suppose $(M,m)$ is a decreasing interval of $y=\cos\phi_i(a)$ with $M$ a maximal point ($\cos\phi_i(M)=1$) and $m$ a minimal point ($\cos\phi_i(m)=-1$). We show that there are finitely many $a_*\in(M,m)$ satisfying equations (\ref{equ:sincosgood}). If there are infinitely many such points, then they are zeros of the analytic function $y=\cos\phi_i(a)-\theta_j(a)$ and they have a limit point in $[M,m]$. This means $\cos\phi_i(a)\equiv\theta_j(a)$. But $\lim\limits_{a\rightarrow+\infty}\theta_j(a)$ exists while $\lim\limits_{a\rightarrow+\infty}(\cos\phi_i(a))$ does not, which is a contradiction.

The situation when $\mathfrak{s}=-1$ is opposite: the intersections obtained while the curve of the function $y=\cos\phi_i(a)$ increases from $-1$ to 1 yield at least countably many solutions to equations (\ref{equ:ri}).

If $\mathfrak{s}=0$, $\vartheta_k(a)\equiv0$. Then $\theta_j(a)\equiv1$ or $\theta_j(a)\equiv-1$ by equation (\ref{equ:vartheta=pmsin}). The $a$-coordinates of the intersections of the curves $y=\cos\phi_i(a)$ and $y=\theta_j(a)$ are those maxima or those minima of the function  $y=\cos\phi_i(a)$, respectively. In this case the condition (\ref{equ:sgncondi}) holds. Thus those maxima or minima yield countably many solutions to equations (\ref{equ:ri}).
\end{proof}
\begin{theorem}\label{定理:导数正极限推出恰一根}
    We use the same notation and assumptions as in Theorem \ref{thm:semiperbunch}. Suppose in addition that the root-function $\phi_i(a)/a$ of the equation $p(a,ay,e^a)=0$ has a positive limit (either finite or not), then there is $N^{\times}>N^s$ such that: 

    (1) If $\mathfrak{s}=1$, and $M> N^{\times}$ is any maximal point of the function $y=\cos \phi_i(a)$, $m>M$ is the smallest minimal point of the function $y=\cos \phi_i(a)$ exceeding $M$, then there is a unique $a_*\in(M,m)$ satisfying equations (\ref{equ:sincosgood}). Thus $(a_*,\phi_i(a_*))$ is a solution to equations (\ref{equ:ri}).

    (2) If $\mathfrak{s}=-1$, and $m> N^{\times}$ is any minimal point of the function $y=\cos\phi_i(a)$, $M>m$ is the smallest maximal point of the function $y=\cos\phi_i(a)$ exceeding $m$, then there is a unique $a_*\in(m,M)$ satisfying equations (\ref{equ:sincosgood}). Thus $(a_*,\phi_i(a_*))$ is a solution to equations (\ref{equ:ri}).

    (3) If $\mathfrak{s}=0$, those maximal (minimal) points of the function $y=\cos\phi_i(a)$ exceeding $N^s$ are the only values of $a_*$ satisfying equations (\ref{equ:sincosgood}) when $\theta_i(a)\equiv1$ $(\theta_i(a)\equiv-1)$.

    In all cases, there are exactly countably many solutions to equations (\ref{equ:ri}).
\end{theorem}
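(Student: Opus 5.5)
The plan is to turn the intersection condition into a level-set condition for a single strictly increasing function. In cases (1) and (2) the triple lies on the unit circle, and the angle of the point $(\theta_j(a),\vartheta_k(a))$ varies slowly while $\phi_i$ grows at a definite linear rate.

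First, fix the asymptotic rates. By Proposition \ref{prop:DaoshuJixian}, applied to $g=p$ (after the harmless reordering of variables), we have $\lim_{a\to+\infty}\phi_i'(a)=\lim_{a\to+\infty}\phi_i(a)/a$. By hypothesis this limit is positive, finite or $+\infty$. Hence there are $c>0$ and $N_1>N^s$ with $\phi_i'(a)>c$ for all $a>N_1$.

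Next, look at the other two functions. By Theorem \ref{thm:semiperbunch}, $\theta_j$ and $\vartheta_k$ take values in $[-1,1]$ on $(N^s,\infty)$, so they are bounded root-functions. By Proposition \ref{命题：有界则导数趋零，无界则导数保号}, $\theta_j'(a)\to0$ and $\vartheta_k'(a)\to0$.

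Take case (1), $\mathfrak{s}=1$. Then $\vartheta_k>0$ and $\theta_j^2+\vartheta_k^2=1$ on $(N^s,\infty)$, so $\psi(a)=\arccos\theta_j(a)\in(0,\pi)$ is a well-defined analytic function with $\theta_j=\cos\psi$ and $\vartheta_k=\sin\psi$. Differentiating $\theta_j=\cos\psi$ and $\vartheta_k=\sin\psi$ gives
\[\psi'=\theta_j\vartheta_k'-\vartheta_k\theta_j',\]
and this tends to $0$ because both derivatives tend to $0$ and $\theta_j,\vartheta_k$ are bounded. Set $F=\phi_i-\psi$. Then there is $N^{\times}>N_1$ with $F'(a)>c/2>0$ for all $a>N^{\times}$, so $F$ is strictly increasing there.

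Now the system (\ref{equ:sincosgood}) at $a_*$ says exactly that $(\cos\phi_i(a_*),\sin\phi_i(a_*))=(\cos\psi(a_*),\sin\psi(a_*))$, i.e.\ $F(a_*)\in2\pi\mathbb{Z}$. Let $(M,m)$ be a decreasing interval of $\cos\phi_i$ with $M>N^{\times}$. Since $\phi_i$ is increasing, $\phi_i((M,m))=(2\pi n,2\pi n+\pi)$ for some $n$. As $\psi\in(0,\pi)$, we get $F((M,m))\subset(2\pi n-\pi,2\pi n+\pi)$, so the only admissible value is $F=2\pi n$. Strict monotonicity of $F$ gives at most one such $a_*$, and Theorem \ref{thm:semiperbunch}(1) gives at least one, so there is exactly one.

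Case (2), $\mathfrak{s}=-1$, is symmetric. Take $\psi=-\arccos\theta_j\in(-\pi,0)$; the same derivative identity holds. On an increasing interval $(m,M)$ of $\cos\phi_i$ we have $\phi_i\in(2\pi n+\pi,2\pi n+2\pi)$, so again exactly one multiple of $2\pi$ is reachable by $F$. Case (3) is already contained in Theorem \ref{thm:semiperbunch}(3).

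Since the maximal and minimal points of $\cos\phi_i$ beyond $N^{\times}$ form a countable sequence tending to infinity, each interval contributes exactly one solution. This gives exactly countably many solutions. Together with the finitely many solutions in $(N^s,N^{\times}]$, which follows from analyticity as in the proof of Theorem \ref{thm:semiperbunch}, this proves the final assertion.

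The main obstacle is the uniqueness step. The naive approach of differentiating $\cos\phi_i-\theta_j$ fails near the extremal points, because there $\sin\phi_i$ is small and $-\phi_i'\sin\phi_i$ does not dominate $\theta_j'$. Passing to the angle function $\psi$ avoids this: it only requires $\psi'\to0$, which follows from the derivative identity above, rather than any uniform bound away from $\theta_j=\pm1$.
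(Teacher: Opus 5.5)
Your proof is correct, but it is organized differently from the paper's.

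\textbf{The paper's route.} The paper stays with $\cos\phi_i-\theta_j$ and splits the case $\mathfrak{s}=1$ according to $\lambda=\lim_{a\to+\infty}\theta_j(a)$.
\begin{itemize}
\item For $\lambda\in(-1,1)$, it discards the end pieces $[M,M_+]$ and $[m_-,m]$ of $(M,m)$, where $\cos\phi_i$ lies outside $(\lambda-\varepsilon,\lambda+\varepsilon)$. On the middle piece it uses $|\sin\phi_i|>d$ and $\phi_i'>\delta$ to dominate $|\theta_j'|<d\delta$.
\item For $\lambda=\pm1$, it converts (\ref{equ:sincosgood}) into an equivalent sine condition involving the eventual sign $\mathfrak{c}$ of $\theta_j$, and reruns the first argument with $\vartheta_k\to0$. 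It then transfers the conclusion back to the decreasing intervals $(M_\iota,m_\iota)$ of $\cos\phi_i$ via the interlacing $v_1<M_1<u_1<m_1<v_2<\cdots$.
\end{itemize}

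\textbf{What your route buys.} Your angle lift $\psi$, with $\psi'=\theta_j\vartheta_k'-\vartheta_k\theta_j'\to0$, replaces the case split and the interval bookkeeping by one strictly increasing function $F=\phi_i-\psi$. It treats interior and boundary limits of $\theta_j$ uniformly. It also gives existence directly from $F(M)<2\pi n<F(m)$, rather than through Theorem \ref{thm:semiperbunch}. The only extra input is that you need both $\theta_j'\to0$ and $\vartheta_k'\to0$, whereas each of the paper's sub-cases uses only one of them.

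Your framework also makes the positive-limit hypothesis unnecessary. The derivative $F'=\phi_i'-\theta_j\vartheta_k'+\vartheta_k\theta_j'$ has its germ in $\mathscr{R}$ by Propositions \ref{prop:isfield} and \ref{prop:derivativeClose}. By Proposition \ref{prop:rootfunclarge} and continuity, $F'$ has an eventually constant sign; it cannot vanish identically on a tail, since $F\to+\infty$ ($\psi$ being bounded). That sign must then be positive, again because $F\to+\infty$. This yields the paper's subsequent hypothesis-free version of the theorem, which the authors obtain instead through the auxiliary function $\delta\phi_i+\theta_j$.
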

\begin{proof}
    The case when $\mathfrak{s}=0$ is already proven in Theorem \ref{thm:semiperbunch}. In the following we prove the case when $\mathfrak{s}=1$. The proof of the case when $\mathfrak{s}=-1$ is similar. The proof of the case when $\mathfrak{s}=1$ is split into two sub-cases:

    \textcolor{blue}{Case 1}: When $\lim\limits_{a\rightarrow+\infty}\theta_j(a)\in(-1,1)$.

    Now that $\lim\limits_{a\rightarrow+\infty}\phi_i(a)/a>0$, by Proposition \ref{prop:DaoshuJixian}, there is $\delta>0$ and $N^s_1>N^s$ such that $\phi_i'(a)>\delta$ for all $a>N^s_1$. Since $\lim\limits_{a\rightarrow+\infty}\theta_j(a)\triangleq\lambda\in(-1,1)$, there is $\varepsilon>0$ so that $(\lambda-\varepsilon,\lambda+\varepsilon)\subset(-1,1)$ and $d=\min\{1-(\lambda+\varepsilon),(\lambda-\varepsilon)-(-1)\}>0$. There is $N^s_2>N^s$ so that $\theta_j(a)\in(\lambda-\varepsilon,\lambda+\varepsilon)$ for all $a>N^s_2$. By Proposition \ref{命题：有界则导数趋零，无界则导数保号}, we have $\theta_j'(a)\rightarrow0$. Hence there is $N^s_3>N^s$ such that $|\theta_j'(a)|<d\delta$ for all $a>N^s_3$. We set $N^{\times}=\max\{N^s_1,N^s_2,N^s_3\}$.

    The function $\cos\phi_i(a)$ is strictly decreasing in the interval $(M,m)$. There is a unique $M_+\in(M,m)$ such that $\cos\phi_i(M_+)=\lambda+\varepsilon$ and a unique $m_-\in(M,m)$ such that $\cos\phi_i(m_-)=\lambda-\varepsilon$. Moreover, we have $M<M_+<m_-<m$. In the intervals $[M,M_+]$ and $[m_-,m]$ there is no point $a_*$ satisfying equations (\ref{equ:sincosgood}) since on those intervals we have $\theta_j(a)\in(\lambda-\varepsilon,\lambda+\varepsilon)$ and $\cos\phi_i(a)\notin(\lambda-\varepsilon,\lambda+\varepsilon)$. Now it suffices to show that there is a unique $a_*\in(M_+,m_-)$ satisfying equations (\ref{equ:sincosgood}).

    For all $a\in(M_+,m_-)$ we have $\cos\phi_i(a)\in(\lambda-\varepsilon,\lambda+\varepsilon)$. Therefore, $1-\cos\phi_i(a)>d$ and $\cos\phi_i(a)-(-1)>d$. These gives \[\sin^2\phi_i(a)=1-\cos^2\phi_i(a)>d^2,\]
    meaning that $|\sin\phi_i(a)|>d$. Therefore,
    \[
    |(\cos\phi_i(a))'|=|\sin\phi_i(a)||\phi'_i(a)|>d\delta
    \]
    for all $a\in(M_+,m_-)$. Now that $|\theta_j(a)|<d\delta$, the derivative of the function $\cos\phi_i(a)-\theta_j(a)$ is negative on $(M_+,m_-)$. Thus there is a unique $a_*\in(M_+,m_-)$ satisfying equations (\ref{equ:sincosgood}).
    
 \textcolor{blue}{Case 2}: When $\lim\limits_{a\rightarrow+\infty}\theta_j(a)\in\{-1,1\}$.

In this case $\theta_j$ is not identically zero. By Proposition \ref{prop:rootfunclarge}, there is $N^s_1>N^s$ so that $\text{sgn}(\theta_j(a))$ is invariant for all $a>N^s_1$. We denote it by $\mathfrak{{c}}\neq0$. Now that $\mathfrak{b}=1$, one verifies that for all $a>N^s_1$, the following two sets of equations are equivalent:
 \begin{equation}\label{方程:正余弦转化}
 \left\{
 \begin{array}{rcl}
      \cos\phi_i(a)&=&\theta_j(a)  \\
      \text{sgn}((\cos\phi_i(a))')&=&-\mathfrak{bs} 
 \end{array}
 \right.
 \Longleftrightarrow
  \left\{
 \begin{array}{rcl}
      \sin\phi_i(a)&=&\vartheta_k(a)  \\
      \text{sgn}((\sin\phi_i(a))')&=&\mathfrak{bc}
 \end{array}
 \right..
 \end{equation}
 These are both equivalent to equations (\ref{equ:sincosgood}). Thus we can consider the intersections of the functions $y=\sin\phi_i(a)$ and $y=\vartheta_k(a)$ instead.

 Note that we have $\vartheta_k(a)\rightarrow0\in(-1,1)$. Using the trick in the proof of Case 1, one can prove that if $\mathfrak{c}=1$ ($\mathfrak{c}=-1$), then there is $N^s_2>N^s_1$ such that for any increasing (decreasing) interval $(v,u)\subset(N^s_2,\infty)$ ($(u,v)\subset(N^s_2,\infty)$) of the function $y=\sin\phi_i(a)$, with $u$ a maximal point and $v$ a minimal point, there is a unique $a=a_*\in(v,u)$ ($a=a_*\in(u,v)$) satisfying the conditions in (\ref{方程:正余弦转化}).

 Denote by $v_1<v_2<v_3<\cdots$ all those successive minimal point of $y=\sin\phi_i(a)$ greater than $N^s_2$, and by $u_1<u_2<u_3<\cdots$ all those maximal point of  $y=\sin\phi_i(a)$ greater than $v_1$. Denote by $M_1<M_2<M_3<\cdots$ and by $m_1<m_2<m_3<\cdots$ all those successive maximal and minimal points of $y=\cos\phi_i(a)$, respectively, such that $M_1>v_1$  and $m_1>u_1$. Then
 \[
v_1<M_1<u_1<m_1<v_2<M_2<u_2<m_2<\cdots, 
 \]
 and $M_1<m_1<M_2<m_2<\dots$ are zeros of $y=\sin\phi_i(a)$.

 Taking $N^{\times}=v_1$, then all the decreasing intervals of $y=\cos\phi_i(a)$ contained in $(N^{\times},\infty)$ are $(M_1,m_1), (M_2,m_2),\cdots$. We need to show that there is a unique $a=a_*\in(M_\iota,m_\iota)$ satisfying conditions in (\ref{方程:正余弦转化}) for $\iota\geq1$. 
 
 Form above we know that, if $\mathfrak{c}=1$ ($\mathfrak{c}=-1$), then there is a unique $a=a_*\in(v_\iota,u_\iota)$ ($a=a_*\in(u_\iota,v_{\iota+1})$) satisfying the conditions in (\ref{方程:正余弦转化}). Since $\mathfrak{s}=1$, $\sin\phi_i(a_*)>0$. Hence $a_*\in(M_\iota,u_\iota)$ ($a_*\in(u_\iota,m_\iota)$). On the other hand, there is no $a\in[u_\iota,m_\iota)$ ($a\in(M_\iota,u_\iota]$) satisfying the conditions in (\ref{方程:正余弦转化}), since $(\sin\phi_i(a))'\leq0$ ($\geq0$) in that interval but conditions in (\ref{方程:正余弦转化}) require that \[\text{sgn}(\sin\phi_i(a))'=\mathfrak{bc}>0\;\; (<0).\] 
\end{proof}
The condition that the root-function $\phi_i(a)/a$ has a positive limit can be decided according to Corollary \ref{coro:limits}. The following theorem claims that this condition can be removed:
\begin{theorem}\label{定理:恰一根}
    Using the same notation and assumptions as in Theorem \ref{thm:semiperbunch}, we claim that there is $N^{\times}>N^s$ such that: 

    (1) If $\mathfrak{s}=1$, and $M> N^{\times}$ is any maximal point of the function $y=\cos \phi_i(a)$, $m>M$ is the smallest minimal point of the function $y=\cos \phi_i(a)$ exceeding $M$, then there is a unique $a_*\in(M,m)$ satisfying equations (\ref{equ:sincosgood}). Thus $(a_*,\phi_i(a_*))$ is a solution to equations (\ref{equ:ri}).

    (2) If $\mathfrak{s}=-1$, and $m> N^{\times}$ is any minimal point of the function $y=\cos\phi_i(a)$, $M>m$ is the smallest maximal point of the function $y=\cos\phi_i(a)$ exceeding $m$, then there is a unique $a_*\in(m,M)$ satisfying equations (\ref{equ:sincosgood}). Thus $(a_*,\phi_i(a_*))$ is a solution to equations (\ref{equ:ri}).

    (3) If $\mathfrak{s}=0$, those maximal (minimal) points of the function $y=\cos\phi_i(a)$ exceeding $N^s$ are the only values of $a_*$ satisfying equations (\ref{equ:sincosgood}) when $\theta_i(a)\equiv1$ $(\theta_i(a)\equiv-1)$.

    In all cases, there are exactly countably many solutions to equations (\ref{equ:ri}).
\end{theorem}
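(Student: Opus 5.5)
The plan is to remove the hypothesis of Theorem \ref{定理:导数正极限推出恰一根} by changing the independent variable from $a$ to $b=\phi_i(a)$. In the proof of Theorem \ref{定理:导数正极限推出恰一根}, the hypothesis $\lim\phi_i(a)/a>0$ was used only once: to obtain a uniform lower bound $\phi_i'>\delta$, which beats $|\theta_j'|<d\delta$. By Proposition \ref{prop:DaoshuJixian} the only case not yet covered is $\lim\phi_i'(a)=\lim\phi_i(a)/a=0$ with $\phi_i\to+\infty$. In that case both $\phi_i'$ and $\theta_j'$ tend to $0$, so what matters is the \emph{ratio} $\theta_j'/\phi_i'$, not the two derivatives separately. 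I will therefore work with this ratio, which handles both cases at once.

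\textbf{Step 1 (reparametrisation).} By Theorem \ref{thm:semiperbunch}, $\phi_i$ is strictly increasing on $(N^s,\infty)$ with $\phi_i'>0$ and $\phi_i\to+\infty$. Hence it has an analytic inverse $\psi$ defined on $(\phi_i(N^s),\infty)$, with $\psi(b)\to+\infty$. Put $\Theta(b)=\theta_j(\psi(b))$ and $V(b)=\vartheta_k(\psi(b))$. Then equations (\ref{equ:sincosgood}) at $a_*$ become $\cos b_*=\Theta(b_*)$ and $\sin b_*=V(b_*)$ at $b_*=\phi_i(a_*)$. Since $\phi_i$ is an increasing bijection, maximal and minimal points of $\cos\phi_i(a)$ correspond to those of $\cos b$. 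Moreover $d\Theta/db=\theta_j'(a)/\phi_i'(a)$ with $a=\psi(b)$.

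\textbf{Step 2 (the key limit).} I claim $\theta_j'/\phi_i'\to0$ and $\vartheta_k'/\phi_i'\to0$ as $a\to+\infty$.
\begin{itemize}
\item By Proposition \ref{prop:derivativeClose} and Proposition \ref{prop:isfield}, $\rho=\theta_j'/\phi_i'$ is an analytic algebraic exponential polynomial; the division is legitimate since $\phi_i'\neq0$ eventually.
\item By Proposition \ref{prop:jiexiDaishuzhishuduoxiangshiShiMouyinziGenhanshu} and Corollary \ref{coro:limits}, $L=\lim_{a\to\infty}\rho(a)$ exists in $\R\cup\{\pm\infty\}$. Since $\psi(b)\to\infty$, also $d\Theta/db\to L$ as $b\to\infty$.
\item $\Theta(b)\to\lambda=\lim\theta_j$, which exists because $\theta_j$ is bounded in $[-1,1]$ (Corollary \ref{coro:limits}).
\item If $L\neq0$, then by the mean value theorem $|\Theta(b+1)-\Theta(b)|$ is eventually bounded below by $\min(|L|/2,1)$, contradicting convergence of $\Theta$. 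Hence $L=0$.
\end{itemize}
The same argument applies verbatim to $\vartheta_k$. This is the step I expect to be the real point of the proof. The only delicacy is checking that $\rho$ really is an element of $\mathscr{R}$, so that its limit exists; mere boundedness of $\Theta$ does not force $d\Theta/db\to0$.

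\textbf{Step 3 (uniqueness, by rerunning the earlier argument in $b$).} In Case 1 of the proof of Theorem \ref{定理:导数正极限推出恰一根} ($\lambda\in(-1,1)$), choose $\varepsilon$ and $d$ exactly as there. Then choose $N^\times$ so that for $a>N^\times$ we have $\theta_j(a)\in(\lambda-\varepsilon,\lambda+\varepsilon)$ and $|d\Theta/db|<d$. On the decreasing half-period of $\cos b$ the endpoint pieces are excluded as before. On the middle piece, $|(\cos b)'|=|\sin b|>d>|\Theta'(b)|$, so $\cos b-\Theta(b)$ is strictly monotone and has exactly one zero; existence comes from Theorem \ref{thm:semiperbunch}. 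Transport back by $a=\psi(b)$. Case 2 ($\lambda=\pm1$) proceeds as in the earlier proof: swap to the sine formulation (\ref{方程:正余弦转化}), use $V\to0$ and $dV/db\to0$, and interlace the extrema of $\sin b$ and $\cos b$ exactly as before. The case $\mathfrak{s}=-1$ is symmetric, and $\mathfrak{s}=0$ is already contained in Theorem \ref{thm:semiperbunch}. Since there are countably many half-periods beyond $N^\times$, each containing exactly one solution, there are exactly countably many solutions.
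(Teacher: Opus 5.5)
Your proof is correct, but you obtain the key derivative estimate by a different route from the paper's.

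\textbf{How the paper gets the estimate.} The paper never forms the ratio $\theta_j'/\phi_i'$ and never reparametrizes. In Case 1 it fixes a rational $0<\delta<d$ and notes that $\delta\phi_i+\theta_j\in\mathscr{R}$ tends to $+\infty$, since $\theta_j$ is bounded. From Proposition \ref{prop:rootfunclargederi} and continuity it concludes that $\delta\phi_i'+\theta_j'>0$ eventually. On the middle piece $\sin\phi_i>d>\delta$ and $\phi_i'>0$, so $(\cos\phi_i-\theta_j)'=-\phi_i'\sin\phi_i-\theta_j'<-\delta\phi_i'-\theta_j'<0$. That inequality is exactly the one-sided version $\theta_j'/\phi_i'>-\delta$ of your limit $d\Theta/db\to0$. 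Only this side is needed on decreasing half-periods; the other cases use $\delta\phi_i-\theta_j$ or $\delta\phi_i\pm\vartheta_k$ in the same way.

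\textbf{What each route buys.} The paper's device is more economical. It applies eventual sign-constancy of a derivative to a single function. It needs no quotient in $\mathscr{R}$ and no inverse function. It also needs no appeal to Proposition \ref{prop:jiexiDaishuzhishuduoxiangshiShiMouyinziGenhanshu} and Corollary \ref{coro:limits} to get a limit in $\R\cup\{\pm\infty\}$.

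Your version gives a two-sided, scale-free statement in the natural angle variable. There the problem becomes an exactly periodic $\cos b$ compared with a slowly varying $\Theta$ satisfying $\Theta'\to0$. This handles $\mathfrak{s}=\pm1$ and both Cases 1 and 2 uniformly, and shows plainly why the earlier hypothesis $\lim\phi_i/a>0$ was never needed.

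Your version also makes Case 2 explicit. The paper's Case 2 defers to the earlier proof, whose ``trick of Case 1'' relied on the now-dropped bound $\phi_i'>\delta$, so it must be silently replaced by the new trick.

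The extra ingredients you use are closure of $\mathscr{R}$ under derivatives and quotients, and existence of limits of its elements. The paper already uses both in the same way in Proposition \ref{prop:DaoshuJixian}, so your argument needs nothing beyond the paper.
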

\begin{proof}
    The case when $\mathfrak{s}=0$ is already proven in Theorem \ref{thm:semiperbunch}. In the following we prove the case when $\mathfrak{s}=1$. The proof of the case when $\mathfrak{s}=-1$ is similar. The proof of the case when $\mathfrak{s}=1$ is split into two sub-cases:

    \textcolor{blue}{Case 1}: When $\lim\limits_{a\rightarrow+\infty}\theta_j(a)\in(-1,1)$.

    
    Since $\lim\limits_{a\rightarrow+\infty}\theta_j(a)\triangleq\lambda\in(-1,1)$, there is $\varepsilon>0$ so that $(\lambda-\varepsilon,\lambda+\varepsilon)\subset(-1,1)$ and $d=\min\{1-(\lambda+\varepsilon),(\lambda-\varepsilon)-(-1)\}>0$. There is $N^s_1>N^s$ so that $\theta_j(a)\in(\lambda-\varepsilon,\lambda+\varepsilon)$ for all $a>N^s_1$. Let $\delta <d$ be a positive rational number, then the function $y=\delta\phi_i(a)+\theta_j(a)$ is an analytic algebraic exponential polynomial by Proposition \ref{prop:isfield}. Now that $\phi_i(a)$ tends to $+\infty$ and $\theta_j(a)$ is bounded, we have $\delta\phi_i(a)+\theta_j(a)\rightarrow+\infty$. Combining Proposition \ref{prop:rootfunclargederi} with the continuity of the derivative  $\delta\phi_i'(a)+\theta_j'(a)$, we claim that  there is $N^s_2>N^s$ such that 
    \[
    \forall a>N^s_2,\;\;\delta\phi_i'(a)+\theta_j'(a)>0,\;\text{or}
    \]
     \[
    \forall a>N^s_2,\;\;\delta\phi_i'(a)+\theta_j'(a)<0.
    \]
    But the latter is impossible since $\delta\phi_i(a)+\theta_j(a)\rightarrow+\infty$. We then set $N^{\times}=\max\{N^s_1,N^s_2\}$.
    

    The function $\cos\phi_i(a)$ is strictly decreasing in the interval $(M,m)$. There is a unique $M_+\in(M,m)$ such that $\cos\phi_i(M_+)=\lambda+\varepsilon$ and a unique $m_-\in(M,m)$ such that $\cos\phi_i(m_-)=\lambda-\varepsilon$. Moreover, we have $M<M_+<m_-<m$. In the intervals $[M,M_+]$ and $[m_-,m]$ there is no point $a_*$ satisfying equations (\ref{equ:sincosgood}) since on those intervals we have $\theta_j(a)\in(\lambda-\varepsilon,\lambda+\varepsilon)$ and $\cos\phi_i(a)\notin(\lambda-\varepsilon,\lambda+\varepsilon)$. Now it suffices to show that there is a unique $a_*\in(M_+,m_-)$ satisfying equations (\ref{equ:sincosgood}).

    For all $a\in(M_+,m_-)$ we have $\cos\phi_i(a)\in(\lambda-\varepsilon,\lambda+\varepsilon)$. Therefore, $1-\cos\phi_i(a)>d$ and $\cos\phi_i(a)-(-1)>d$. These gives \[\sin^2\phi_i(a)=1-\cos^2\phi_i(a)>d^2,\]
    meaning that $|\sin\phi_i(a)|>d>\delta$. Noting that the derivative \[(\cos\phi_i(a))'=-\phi'_i(a)\sin\phi_i(a)<0\] in the interval $(M,m)$ and $-\phi_i'(a)<0$, we have $\sin\phi_i(a)>0$ and thus $\sin\phi_i(a)>\delta$ for all $a\in(M_+,m_-)$. Therefore,
    \[
\begin{array}{rcl}
     (\cos\phi_i(a)-\theta_j(a))'&=&
    -\phi'_i(a)\sin\phi_i(a)-\theta_j'(a)  \\
     &<&-\delta\phi'_i(a)-\theta_j'(a)\\
     &<&0
\end{array}
    \]
    for all $a\in(M_+,m_-)$. This means the function $\cos\phi_i(a)-\theta_j(a)$ is monotonically decreasing on $(M_+,m_-)$. Thus there is a unique $a_*\in(M_+,m_-)$ satisfying equations (\ref{equ:sincosgood}).
    
 \textcolor{blue}{Case 2}: When $\lim\limits_{a\rightarrow+\infty}\theta_j(a)\in\{-1,1\}$.

This part is the same as Case 2 in the proof of Theorem \ref{定理:导数正极限推出恰一根}.
\end{proof}

\begin{remark}\label{评论：为什么叫做半周期解}
    Theorem \ref{thm:semiperbunch} yields countably many solutions \[(a_n,b_n)=(a_*^{(n)},\phi_i(a_*^{(n)}))\] with $n=1,2,3,\cdots$ and $a_*^{(1)}<a_*^{(2)}<a_*^{(3)}<\cdots$. Note that $\lim\limits_{n\rightarrow+\infty}a_*^{(n)}=+\infty,$ we have \[\lim\limits_{n\rightarrow+\infty}\cos b_{n}=\lim\limits_{n\rightarrow+\infty}\cos\phi_i(a_*^{(n)})=\lim\limits_{n\rightarrow+\infty}\theta_j(a_*^{(n)})=\lim\limits_{a\rightarrow+\infty}\theta_j(a).\]
The last limit exists according to Corollary \ref{coro:limits}. Suppose $\lim\limits_{a\rightarrow+\infty}\theta_j(a)=\lambda$ and $\lambda\in[\lambda_-,\lambda_+]=[\max\{\lambda-\epsilon,-1\},\min\{\lambda+\epsilon,1\}]$ for $\epsilon>0$, then for sufficiently large $n$, $\cos b_n\in[\lambda_-,\lambda_+]$. That is, for each $n$ sufficiently large, there is $\ell_n\in\Z_{>0}$ such that 
\[
\left\{\begin{array}{rcl}
     b_n&\in&[\arccos \lambda_+,\arccos \lambda_-]+2\ell_n\pi\text{, or} \\
     b_n&\in&[-\arccos \lambda_-,-\arccos \lambda_+]+2(\ell_n+1)\pi,
\end{array}
\right.
\]
with $\{\ell_n\}$ a non-decreasing integer sequence satisfying $\ell_{n+1}-\ell_n\in\{0,1\}$. By Theorem \ref{定理:恰一根}, we have $\ell_{n+1}-\ell_n=1$ for sufficiently large $n$. Hence, $\{b_n\}$ is kind of ``periodic", while $\{a_n\}$ is not. But $a_n$ can be computed from $b_n$ by solving the equation $p(a,e^a,b_n)=0$.
\end{remark}
\begin{definition}\label{定义：半周期根}
    The real solutions to equations (\ref{equ:ri}) promised by Theorem \ref{thm:semiperbunch} are called the \emph{semi-periodic solutions}. The semi-periodic solutions with respect to the triple $(\phi_i,\theta_j,\vartheta_k)$ is called a \emph{bunch} of semi-periodic solutions, or the $(\phi_i,\theta_j,\vartheta_k)$-\emph{bunch} of semi-periodic solutions.
\end{definition}

The following theorem describes the location of the semi-periodic solutions to the BETP equations (\ref{equ:ri}).
\begin{theorem}\label{定理:右上分布}
    Suppose $r,s,p,q$ and $t$ are polynomials as defined in Proposition \ref{prop:rootdecom}. Assume that the sets of root-functions $\{\phi_i\}$, $\{\theta_j\}$ and $\{\vartheta_k\}$ of the equations $p(a,y,e^a)=0$, $q(a,e^a,y)=0$ and $t(a,e^a,y)=0$, respectively, are well-defined. Then there are $N>0$ and $M>0$ such that the set of the solutions to equations (\ref{equ:ri}) in the quarter-plane
    \[
    \mathscr{Q}_{N,M}=\{(a,b)\in\mathbb{R}^2\,|\,a>N,b>M\}
    \]
    equals the union of finitely many bunches of semi-periodic solutions. Moreover, there is a subset $\{\phi_{i(\iota)}\}\subset\{\phi_i\}$, consisting of finitely many analytic algebraic exponential polynomials which are defined over $(N,\infty)$ and tending to $+\infty$ as $a\rightarrow+\infty$, such that every bunches of semi-periodic solutions lies on the graph of exactly one of these analytic algebraic exponential polynomials.
\end{theorem}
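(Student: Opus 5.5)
The plan is to combine Corollary \ref{引理:根按ijk分解} with Theorem \ref{thm:semiperbunch} and the terminal properties of the root-functions $\phi_i$ of $p$. First choose $N$ at least the tripartite boundary $N^*$ of Definition \ref{定义:三分界}. Then enlarge it so that it exceeds every $N^s$ from Proposition \ref{prop:2+sign}, taken over the finitely many degenerate or non-degenerate triples $(\phi_i,\theta_j,\vartheta_k)$ with $\phi_i$ non-constant. By Corollary \ref{引理:根按ijk分解}, every solution $(a_*,b_*)$ with $a_*>N$ has the form $b_*=\phi_i(a_*)$ for some such triple satisfying (\ref{方程:跟函数分别等于b正余弦}). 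So it suffices to sort the finitely many $\phi_i$ by their behaviour at infinity.

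Corollary \ref{coro:limits} and Proposition \ref{命题：有界则导数趋零，无界则导数保号} say that each $\phi_i$ either tends to a finite limit (a real root of $\lc_{x_2,x_1}(p)$), or to $-\infty$, or to $+\infty$. In the latter two cases $\phi_i$ is eventually strictly monotone.
\begin{itemize}
\item If $\phi_i$ is bounded on $(N,\infty)$, enlarging $N$ lets us take $M$ larger than $\sup_{a>N}\phi_i(a)$. This sup is finite because $\phi_i$ is continuous and has a finite limit.
\item If $\phi_i\to-\infty$, then after enlarging $N$ we get $\phi_i<0<M$ on $(N,\infty)$.
\end{itemize}
Taking $M$ to be the maximum of these finitely many bounds, with $M>0$, no solution in $\mathscr{Q}_{N,M}$ lies on the graph of a bounded or decreasing $\phi_i$. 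Every solution in the quarter therefore lies on a graph of some $\phi_{i(\iota)}$ tending to $+\infty$.

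For each triple whose $\phi_i$ tends to $+\infty$, Theorem \ref{thm:semiperbunch} (sharpened by Theorem \ref{定理:恰一根}) identifies the solutions on $\Gamma(\phi_i|_{(N,\infty)})$ satisfying (\ref{equ:sincosgood}) as exactly the $(\phi_i,\theta_j,\vartheta_k)$-bunch. The solution set in $\mathscr{Q}_{N,M}$ is then the union over the finitely many such triples of these bunches, each intersected with $\{b>M\}$.

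The main obstacle is this intersection with $\{b>M\}$: a bunch must lie entirely in the quarter, not merely be truncated by it. I would handle it by enlarging $N$ further. Since each relevant $\phi_{i(\iota)}$ is increasing and tends to $+\infty$, we can take $N$ so that $\phi_{i(\iota)}(a)>M$ for all $a>N$ and all $\iota$. This does not disturb the previous choices, since increasing $N$ keeps the bounded and decreasing $\phi_i$ below $M$. After this, every point of a bunch with $a>N$ automatically has $b>M$, and the truncation issue disappears.

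Finally, each bunch lies on exactly one graph, because distinct root-functions of $p$ satisfy $\phi_1<\cdots<\phi_k$ on $(N,\infty)$. Hence their graphs are disjoint, and a bunch defined via a fixed $\phi_i$ lies on only that graph. The $\phi_{i(\iota)}$ are analytic algebraic exponential polynomials by the Remark after Definition \ref{def:rootFunctionStart}, which completes the plan.
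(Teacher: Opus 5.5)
Your proposal is correct and follows the paper's proof. You restrict to $a$ beyond the tripartite boundary and the $N^s$'s, decompose the solutions by triples via the corollary to Proposition \ref{prop:rootdecom}, discard the $\phi_i$ bounded above through the choice of $M$, and identify what remains as bunches on the $\phi_i$ tending to $+\infty$. The only difference is in how $M$ is chosen: the paper takes $M=a_{s+1}$ from an exponential potential periodic interval set of $p$, so Proposition \ref{thm:location-of-the-roots} keeps the bounded-above $\phi_i$ below $M$ and the $\phi_i\to+\infty$ above $M$ on $(N,\infty)$ in one step, which your sup-and-enlarge-$N$ construction does by hand.
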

\begin{proof}
    Set $[(a_0,b_0),\ldots,(a_{s+1},b_{s+1})]$ to be an  exponential potential periodic interval set of the equation $p(a,y,e^a)=0$. Proposition \ref{thm:location-of-the-roots} indicates that there is a number $N^{(1)}$ greater than the start-point of the root-functions $\{\phi_i\}$ such that: over the interval $(N^{(1)},+\infty)$, (i) every root-function $\phi_i(a)$ bounded from above accepts the number $a_{s+1}$ as an upper bound, while (ii) every root-function $\phi_i(a)$ tending to $+\infty$ accepts the number $a_{s+1}$ as a lower bound.

    As usual, denote by $N^*$ the tripartite boundary of the equations (\ref{equ:ri}). For every triple $(\phi_i,\theta_j,\vartheta_k)$ that is of the degenerate or non-degenerate cases satisfying the condition $\lim\limits_{a\rightarrow+\infty}\phi_i(a)=+\infty$,
    Theorem \ref{thm:semiperbunch} claims that there is a number $N^s_{ijk}>N^*$ (denoted by $N^s$ in Theorem \ref{thm:semiperbunch}) such that the $(\phi_i,\theta_j,\vartheta_k)$-bunch of semi-periodic solutions, with the $a$-coordinates greater than $N^s_{ijk}$, are well-defined. Set
    \[N^{(2)}=\max_{ijk}\,N^s_{ijk},\] where the indices $ijk$ verify under the conditions that the triple $(\phi_i,\theta_j,\vartheta_k)$ is of the degenerate or non-degenerate cases and $\lim\limits_{a\rightarrow+\infty}\phi_i(a)=+\infty$. Finally, taking $M=a_{s+1}$ and $N=\max\{N^{(1)},N^{(2)}\}$, we show that the conclusions hold: 

    Denote by $\mathscr{S}_{ijk}$ the set of points $(a_*,b_*)$ with $a_*>N^*$ satisfying equations (\ref{方程:跟函数分别等于b正余弦}) defined by a triple $(\phi_i,\theta_j,\vartheta_k)$ that is of the degenerate or non-degenerate cases. Then Corollary \ref{引理:根按ijk分解} claims that 
    \[
    \mathscr{S}(N^*)=\bigcup_{ijk}\mathscr{S}_{ijk},
    \]
  where $\mathscr{S}(N^*)$ is the set of solutions $(a_*,b_*)$ to equations (\ref{equ:ri}) with $a_*>N^*$. Then we have 
  \[\mathscr{S}(N^*)\cap\mathscr{Q}_{N,M}=\bigcup_{ijk}(\mathscr{S}_{ijk}\cap\mathscr{Q}_{N,M}).\]
  Since $M=a_{s+1}$, the set $\mathscr{S}_{ijk}\cap\mathscr{Q}_{N,M}=\emptyset$ if $\phi_i$ is bounded from above by the number $M$ over the interval $(N,+\infty)$. When $\phi_i$ tends to $+\infty$, the set $\mathscr{S}_{ijk}\cap\mathscr{Q}_{N,M}$ is equal to the $(\phi_i,\theta_j,\vartheta_k)$-bunch of semi-periodic solutions (with the $a$-coordinates greater than $N$) by Theorem \ref{thm:semiperbunch}. The fact that set $\mathscr{S}_{ijk}\cap\mathscr{Q}_{N,M}$ of solutions lie on the graph of the function $y=\phi_i(a)$ is clear according to equations (\ref{方程:跟函数分别等于b正余弦}).
\end{proof}
\section{Algorithms and Experiments}\label{节：算法和实验}
\subsection{Representation and evaluation of a root-function}
We first introduce how to represent a root function: For a polynomial $g(x_1,x_2,x_3)$ satisfying $\deg(g,x_3)>0$ and $\discrim(g,x_3)\neq0$, with root-functions $\theta_1<\theta_2<\cdots<\theta_k$ sharing the start-point $M>0$, we represent the $i$-th root-function $\theta_i$ in our program by
\[
\theta_i=[[g(a,e^a,y),M,i]].
\]

Then we show how to compute a representation of a root-function and its value at some rational point: Given $g$, using \cite{strzebonski2008real} Algorithm 4.2, one computes a common upper bound $M$ for the zeros of the exponential polynomials $\lc(g,x_3)(a,a^e)$ and $\discrim(g,x_3)(a,e^a)$. Then, the number $M$ is a start-point of the root-functions of $g$ (See Theorem \ref{theorem:rootfunc}). Taking any rational number $a_0>M$ and using \cite{strzebonski2008real} Algorithm 4.5 to isolate all the real solutions to the equation 
\[
g(a_0,e^{a_0},y)=0,
\]
we then conclude that the index $i$ ranges in the set $\{1,2,\ldots,k\}$ with $k$ being the number of the isolating intervals of the above equation. Moreover, algorithms in \cite{strzebonski2008real} also allow us to refine the isolating intervals as small as we possible. This means the valuations $\theta_i(a_0),\;1\leq i\leq k,$ can be computed to arbitrary precision. In the following we denote by $[\theta_i(a_0)]$ some isolating interval of the real number $\theta_i(a_0)$, and intervals like that can be substituted into the variables of a polynomial for computation: the addition, the subtraction and the multiplication in the polynomial are understood as those between intervals, and the evaluation of that polynomial is also an interval.  
\subsection{Deciding which interval a root-function terminally belongs to}\label{子节：区间归属}
Given $I=[(a_0,b_0),\ldots,(a_{s+1},b_{s+1})]$ an 
exponential potential periodic interval set of the equation $g(a,e^a,y)=0$, Proposition \ref{thm:location-of-the-roots} claims that there is a bound $B>0$ (denoted by $M$ therein), so that for every $a>B$ and every root-function $\theta_i$ there is a unique interval $(a_{j_i},b_{j_i})$ in $I$ such that $\theta_i(a)\in(a_{j_i},b_{j_i})$. Roughly speaking, this means each root-function terminally belongs to a unique interval in $I$. Moreover, a root-function $\theta_i$ tends to $+\infty$ iff $j_i=s+1$ by Corollary \ref{coro:limits}.

The bound $B$ is an upper bound of the zeros of some exponential polynomial by Proposition \ref{thm:location-of-the-roots}. This can be computed again by \cite{strzebonski2008real} Algorithm 4.2. Then one may choose any rational number $b_0>B$ and isolate the roots of the equation 
\[g(b_0,e^{b_0},y)=0\] by intervals $T_1<T_2<\cdots<T_k$, which are small enough such that every interval $T_i$ intersects exactly one interval $(a_j,b_j)$ in $I$. Then the root-function $\theta_i$ terminally belongs to the interval $(a_j,b_j)$. In this way we can decide which root-function tends to $+\infty$ according to Corollary \ref{coro:limits}.
\subsection{The main algorithms}
In this subsection we introduce our main algorithms to count the triples of different types and to count the bunches of semi-periodic solutions.
\subsubsection{Counting triples of different types}
Algorithm \ref{算法:类型计数} counts the numbers of the  triples $(\phi_i,\theta_j,\vartheta_k)$ of different cases according to Proposition \ref{prop:rootdecom} and Definition \ref{定义:根线组合情形分类}: Steps 2-3 check whether the polynomials $p, q$ and $t$ obtained in equations (\ref{方程:pqt}) have well-defined root-functions. If not, the algorithm fails. In Step 5 we choose a rational number exceeding $M$, the common start-point of all root-functions. 

For each triple $(\phi_i,\theta_j,\vartheta_k)$ of root-functions, we compute in Step 10 an interval [det] containing the number 
\[
\det(\mathcal{M})(a_0,\phi_i(a_0),e^{a_0},\vartheta_k(a_0),\theta_j(a_0)),
\]
where the polynomial matrix $\mathcal{M}$ is as in equation (\ref{equ:idealrela}). If $0\notin$[det], then the number above is nonzero, proving that the triple $(\phi_i,\theta_j,\vartheta_k)$ is of the non-degenerate case. The counter ND, for the triples of root-functions that are of the non-degenerate case, is then increased by one (Steps 12-14).

The conditions in Step 16 imply conditions (\ref{equ:rsoFeiling}), proving that the triple $(\phi_i,\theta_j,\vartheta_k)$ is of the no-solution case. Thus the counter NS, for the triples of the no-solution case  is then increased by one (Step 17-18).

If neither of the above two circumstances is satisfied, we refine the intervals we used while deciding those conditions and see whether the results change (Steps 20-21). This will be done over and over again whenever the length of the interval [det] is greater than a preset positive number $\varepsilon$ (Step 11). We break this loop if finally we have [det]$\leq\varepsilon$. We then believe that the number 
\[
\det(\mathcal{M})(a_0,\phi_i(a_0),e^{a_0},\vartheta_k(a_0),\theta_j(a_0))
\]
may indeed equal to zero and the corresponding triple is of the degenerate case. In this case the counter D, for the triples that are of the degenerate case, is increased by one (Step 23), although the type of this triple is not decided. Therefore, when this algorithm returns the numbers ND, NS and D, we know that there are at least ND triples that are of the non-degenerate case, at least NS triples that are of the no-solution case and at most D triples that are of the degenerate case.

\begin{algorithm}\label{算法：类型计数}
\caption{CountTripleType}
\label{算法:类型计数}
\begin{algorithmic}[1]
\REQUIRE $r,s\in\Q[x_1,x_2,x_3,x_4,x_5]$, $\varepsilon>0.$
\ENSURE A vector $(\text{ND},\text{NS},\text{D})\in\N^3$ such that: \\
(1) there are at least ND triples satisfying the non-degenerate case;\\
(2) there are at least NS triples satisfying the no-solution case;\\
(3) there are at most D triples satisfying the degenerate case;
\STATE Compute polynomials $p$, $q$, $t$ according to equations (\ref{方程:pqt}).
\STATE \textbf{if} ($\deg(p,x_2)\leq0\vee\deg(q,x_5)\leq0\vee\deg(t,x_4)\leq0$) \textbf{then} \Return \text{\textcolor{red}{F}}; \textbf{end if}
\STATE \textbf{if} $0=\discrim(p,x_2)\discrim(q,x_5)\discrim(t,x_4)$ \textbf{then} \Return \text{\textcolor{red}{F}}; \textbf{end if}
\STATE Represent the root-functions $\{\phi_i\}_{i=1}^\ell$, $\{\theta_j\}_{j=1}^m$ and $\{\vartheta_k\}_{k=1}^n$ of the equations \[p(a,y,e^a)=0, q(a,e^a,y)=0\text{ and } t(a,e^a,y)=0,\] with a common start-point $M$;
\STATE Chose a rational number $a_0>M$;
\STATE Compute the polynomial matrix $\mathcal{M}$ in equation (\ref{equ:idealrela});
\STATE ND$\gets0$; NS$\gets0$; D$\gets0$;
\FOR{$(i,j,k)\in\{1,2,\ldots,\ell\}\times\{1,2,\ldots,m\}\times\{1,2,\ldots,n\}$}
\STATE Compute isolating intervals $[\phi_i(a_0)]$, $[\theta_j(a_0)]$ and $[\vartheta_k(a_0)]$;
\STATE $[\text{det}]\gets\det(\mathcal{M})(a_0,[\phi_i(a_0)],[e^{a_0}],[\vartheta_k(a_0)],[\theta_j(a_0)])$;
\WHILE{\text{Length }$[\text{det}]>\varepsilon$}
\IF{$0\notin[\text{det}]$}
\STATE ND$\gets$ ND$+1$;
\STATE \textbf{Break};
\ENDIF
\IF{$\big(\,0\notin r(a_0,[\phi_i(a_0)],[e^{a_0}],[\vartheta_k(a_0)],[\theta_j(a_0)])\;\bigvee$\\$\;\;\;\;\;\,\, 0\notin s(a_0,[\phi_i(a_0)],[e^{a_0}],[\vartheta_k(a_0)],[\theta_j(a_0)])\;\bigvee$\\$\;\;\;\;\;\,\, 1\notin [\vartheta_k(a_0)]^2+[\theta_j(a_0)]^2\,\big)$}
\STATE  NS$\gets$ NS$+1$;
\STATE \textbf{Break};
\ENDIF
\STATE Refine the intervals $[\phi_i(a_0)]$, $[e^{a_0}]$, $[\theta_j(a_0)]$ and $[\vartheta_k(a_0)]$ so that the length of each interval is reduced by at least $50\%$; 
\STATE $[\text{det}]\gets\det(\mathcal{M})(a_0,[\phi_i(a_0)],[e^{a_0}],[\vartheta_k(a_0)],[\theta_j(a_0)])$;
\ENDWHILE
\STATE \textbf{if} $\text{Length }[\text{det}]\leq\varepsilon$\textbf{ then} D$\gets \text{D}+1$;\textbf{ end if} 
\ENDFOR
\end{algorithmic}
\end{algorithm}
\subsubsection{Counting bunches of semi-periodic solutions}
A minor modification enables Algorithm \ref{算法:类型计数} to count bunches of semi-periodic solutions: In Step 4, replace the complete set $\{\phi_i\}_{i=1}^\ell$ of root-functions of the equation $p(a,y,e^a)=0$ by its subset which consists of all those root-functions tending to $+\infty$. This can be done by the method described in Subsection \ref{子节：区间归属}.

\begin{algorithm}\label{算法：半周期根计数}
\caption{CountBunch}
\label{算法：半周期根计数}
\begin{algorithmic}[1]
\REQUIRE $r,s\in\Q[x_1,x_2,x_3,x_4,x_5]$, $\varepsilon>0.$
\ENSURE A vector $(\text{ND},\text{NS},\text{D})\in\N^3$ such that: \\
(1) there are at least ND bunches of semi-periodic solutions \\$\;\;\;\;\;$given by triples of the non-degenerate case;\\
(2) there are at least NS triples of the no-solution case \\$\;\;\;\;\;$giving no solutions for sufficiently large positive $a$-coordinate;\\
(3) there are at most D bunches of semi-periodic solutions \\$\;\;\;\;\;$given by triples of the degenerate case;
\STATE $\quad\quad\quad\cdots\cdots\cdots\cdots\cdots\cdots\cdots\cdots\cdots$
    \STATE $\quad\quad\quad\cdots\text{(as in Algorithm \ref{算法：类型计数})}\cdots\cdots$
\STATE $\quad\quad\quad\cdots\cdots\cdots\cdots\cdots\cdots\cdots\cdots\cdots$
\STATE Represent all the root-functions $\{\theta_j\}_{j=1}^m$ and $\{\vartheta_k\}_{k=1}^n$ of the equations $q(a,e^a,y)=0\text{ and } t(a,e^a,y)=0,$ respectively, and represent those root-functions $\{\phi_i\}_{i=i_0}^\ell$ of the equation $p(a,y,e^a)=0$ that tends to $+\infty$,\\ such that all these root-functions share a common start-point $M$; 
\STATE $\quad\quad\quad\cdots\cdots\cdots\cdots\cdots\cdots\cdots\cdots\cdots$\\
     $\quad\quad\quad\cdots\text{(as in Algorithm \ref{算法：类型计数})}\cdots\cdots$\\ $\quad\quad\quad\cdots\cdots\cdots\cdots\cdots\cdots\cdots\cdots\cdots$
\end{algorithmic}
\end{algorithm}

If the modified algorithm (Algorithm \ref{算法：半周期根计数}) returns three numbers ND, NS and D, then by Theorem \ref{thm:semiperbunch}, we claim that (i) there are at least ND bunches of semi-periodic solutions to equations (\ref{equ:ri}) and these are given by triples of the non-degenerate case; (ii) there are at most D bunches of semi-periodic solutions to equations (\ref{equ:ri}) that are given by triples of the degenerate case; (iii) there are at least NS triples that does not give any solutions to equations (\ref{equ:ri}) when the $a$-coordinate is sufficiently large.

The experiments in the next section are done according to Algorithm \ref{算法：半周期根计数}, since we are interested mostly in the semi-periodic solutions. As we shall see, for many examples the number D$\,=0$. In this case we know there are exactly ND bunches of semi-periodic solutions.

\subsection{The experiments}
We have implemented Algorithm \ref{算法：半周期根计数} by Mathematica. 
The experiments were conducted on a laptop equipped with an AMD Ryzen 9 8945HX processor with Radeon Graphics and 32GB of onboard RAM. 

\subsubsection{Random examples}
We test Algorithm \ref{算法：半周期根计数} on some random examples. The results are shown in Table \ref{表格}. For each example in Table \ref{表格}, we randomly generate two polynomials $r$ and $s$ in the ring $\Q[x_1,x_2,x_3,x_4,x_5]$ and compute the number of bunches of semi-periodic solutions to equations (\ref{equ:ri}) by Algorithm \ref{算法：半周期根计数}. The coefficients of each randomly generated polynomial are random integers between $-10$ and $10$, the number of terms is at most $5$ and the total degree is at most $3$.

The symbol ``$\#$term" in Table \ref{表格} means the total number of terms in the randomly generated polynomials $r$ and $s$, which is at most 10 according to the settings. The meanings of the numbers ND and D are as explained in Algorithm \ref{算法：半周期根计数}. We omit the number NS since we only care about the number of bunches of semi-periodic solutions. The symbol ``F" in Table \ref{表格} means Algorithm \ref{算法：半周期根计数} returns ``F" (in the second or in the third step), the computation for that example is time-out ($>7$ hours), or an overflow error occurs.


\begin{table}\caption{Random Examples}\label{表格}
    \setlength{\abovecaptionskip}{0.1cm}
  \begin{center}
  \begin{scriptsize}
      \begin{tabular}{|c|c|c|c|c|c|}

       \cline{1-5}
       examples & $\#$term& ND& D& runtime\\ 
       \cline{1-5}
      $1$&10&0&0&\textcolor{blue}{12.74s}\\ 

     \cline{1-5}
      $2$&10&1&0&\textcolor{blue}{378.44s}\\ 

      \cline{1-5}
      $3$&10&1&0&\textcolor{blue}{159.49s}\\ 

       \cline{1-5}
      $4$&8&1&0&\textcolor{blue}{116.61s}\\ 

       \cline{1-5}
             $5$&9&2&0&\textcolor{blue}{12.03s}\\ 

       \cline{1-5}
             $6$&10&F&F&\textcolor{blue}{-}\\ 

       \cline{1-5}
             $7$&10&1&0&\textcolor{blue}{411.37s}\\ 
    \cline{1-5}
             $8$&9&0&0&\textcolor{blue}{1437.02s}\\
    \cline{1-5}
             $9$&9&0&0&\textcolor{blue}{70.8599s}\\
     \cline{1-5}
             $10$&10&0&0&\textcolor{blue}{1.89436s}\\
    \cline{1-5}
             $11$&10&0&0&\textcolor{blue}{ 593.984s}\\
  \cline{1-5}
             $12$&7&0&0&\textcolor{blue}{ 31.0385s}\\
    \cline{1-5}
             $13$&7&0&\textcolor{red}{2}&\textcolor{blue}{ 4.43889s}\\
    \cline{1-5}
             $14$&9&0&0&\textcolor{blue}{ 69.5261s}\\
    \cline{1-5}
             $15$&10&1&0&\textcolor{blue}{ 1540.32s}\\
    \cline{1-5}
             $16$&8&3&0&\textcolor{blue}{ 299.359s}\\
    \cline{1-5}
             $17$&9&F&F&\textcolor{blue}{ -}\\
    \cline{1-5}
             $18$&8&1&0&\textcolor{blue}{326.002s}\\
    \cline{1-5}
             $19$&10&0&0&\textcolor{blue}{642.32s}\\
    \cline{1-5}
             $20$&9&1&0&\textcolor{blue}{88.7287s}\\
    \cline{1-5}
             $21$&10&0&0&\textcolor{blue}{98.5475s}\\
    \cline{1-5}
             $22$&10&0&0&\textcolor{blue}{116.373s}\\
    \cline{1-5}
             $23$&9&0&0&\textcolor{blue}{1525.72s}\\
    \cline{1-5}
             $24$&8&1&0&\textcolor{blue}{32.3553s}\\
    \cline{1-5}
             $25$&9&0&0&\textcolor{blue}{14.5311s}\\
    \cline{1-5}
             $26$&10&F&F&\textcolor{blue}{-}\\
    \cline{1-5}
             $27$&9&0&0&\textcolor{blue}{28.7314s}\\
    \cline{1-5}
             $28$&10&0&0&\textcolor{blue}{23470.9s}\\
    \cline{1-5}
             $29$&10&F&F&\textcolor{blue}{-}\\
    \cline{1-5}
             $30$&8&0&0&\textcolor{blue}{1202.85s}\\
    \cline{1-5}
             $31$&9&0&0&\textcolor{blue}{1449.47s}\\
    \cline{1-5}
             $32$&8&0&0&\textcolor{blue}{6.62692s}\\
    \cline{1-5}
             $33$&9&0&0&\textcolor{blue}{164.036s}\\
    \cline{1-5}
             $34$&9&0&0&\textcolor{blue}{762.245s}\\
    \cline{1-5}
             $35$&10&F&F&\textcolor{blue}{-}\\
    \cline{1-5}
             $36$&8&1&0&\textcolor{blue}{77.698s}\\
    \cline{1-5}
             $37$&10&0&0&\textcolor{blue}{354.641s}\\
    \cline{1-5}
             $38$&8&3&0&\textcolor{blue}{1691.75s}\\
    \cline{1-5}
             $39$&9&F&F&\textcolor{blue}{-}\\
    \cline{1-5}
             $40$&9&0&0&\textcolor{blue}{2629.38s}\\
    \cline{1-5}
             $41$&9&0&0&\textcolor{blue}{373.463s}\\
    \cline{1-5}
             $42$&10&0&0&\textcolor{blue}{26.6591s}\\
    \cline{1-5}
             $43$&9&F&F&\textcolor{blue}{-}\\
    \cline{1-5}
             $44$&8&F&F&\textcolor{blue}{-}\\
    \cline{1-5}
             $45$&10&1&0&\textcolor{blue}{155.927s}\\
    \cline{1-5}
             $46$&9&0&0&\textcolor{blue}{2.31042s}\\
    \cline{1-5}
             $47$&8&0&0&\textcolor{blue}{13.8368s}\\
    \cline{1-5}
             $48$&10&0&0&\textcolor{blue}{278.94s}\\
    \cline{1-5}
             $49$&10&F&F&\textcolor{blue}{-}\\
    \cline{1-5}
             $50$&10&0&0&\textcolor{blue}{123.977s}\\
    \cline{1-5}
             $51$&9&2&0&\textcolor{blue}{614.898s}\\
    \cline{1-5}
             $52$&9&0&0&\textcolor{blue}{237.112s}\\
    \cline{1-5}
             $53$&9&0&0&\textcolor{blue}{2.72436s}\\
    \cline{1-5}
             $54$&10&F&F&\textcolor{blue}{-}\\
    \cline{1-5}
             $55$&8&F&F&\textcolor{blue}{-}\\
    \cline{1-5}
             $56$&8&1&0&\textcolor{blue}{457.403s}\\
    \cline{1-5}
             $57$&9&0&0&\textcolor{blue}{943.769s}\\
    \cline{1-5}
             $58$&10&0&0&\textcolor{blue}{984.132s}\\
    \cline{1-5}
             $59$&10&F&F&\textcolor{blue}{-}\\
    \cline{1-5}
             $60$&9&0&0&\textcolor{blue}{2048.26s}\\
    \hline

    \end{tabular}
    \end{scriptsize}
  \end{center}
 \end{table}
     The results suggest that Algorithm \ref{算法：半周期根计数} can effectively determine the exact number of bunches of semi-periodic solutions for the vast majority of randomly generated examples: Only one example involves undetermined triples (i.e., with D$\,>0$) and there are only 12 examples out of 60 returning ``F". However, the running times differ significantly although these examples have similar size: The shortest and the longest running time is, respectively, $1.89436\,$s and $23470.9\,$s. This indicates that the running time of the algorithm depends not only on the size of the example.
     
     \subsubsection{Examples coming from MTP's}
     The problem of computing complex roots of an MTP reduces to computing real solutions of equations (\ref{equ:ri}), which are derived from that MTP. Table \ref{MTP例子} shows some examples of this type.

     The symbol ttlDeg in Table \ref{MTP例子} stands for the greater of the total degrees of the two polynomials derived from the MTP, while the other symbols have the same meaning as in Table \ref{表格}.

     These examples have higher total degrees and more terms than those randomly generated examples in the last sub-subsection. As usual, D$\,=0$ for most examples, meaning that we get the exact number of bunches of semi-periodic solutions for these examples. On the other hand, the running time scales noticeably with the size of the example: it is within 30 seconds for smaller examples (1, 2, 3 and 6) and more than 1000 seconds for larger ones (4, 5, 7 and 8). To conclude, Algorithm \ref{算法：半周期根计数} is effective for small examples derived from MTP's.
\begin{table}\caption{Examples coming from MTP's}\label{MTP例子}
    \setlength{\abovecaptionskip}{0.1cm}
  \begin{center}
  \begin{scriptsize}
      \begin{tabular}{|c|c|c|c|c|c|c|c|c|c|c|}

       \hline
       &MTP's & ttlDeg&$\#$term& ND & D & runtime\\
       \hline
          1& $2 z - 5 \sin z + 7 \cos z$&4&12&1&0&\textcolor{blue}{13.61s}\\
    \hline
    2&$8z + 12 \sin z - \cos z + 7$&4&14&1&0&\textcolor{blue}{27.00s}\\
    \hline
       3& $-8 z + 17 \sin z + 45 \cos z + 4$&4&14&1&0&\textcolor{blue}{15.28s}\\
    \hline
4&$5 z\sin z + \cos^2 z + 1$&8&21&1&0&\textcolor{blue}{1164.23s}\\
    \hline
    5&$3 z - \sin^2 z + 2\cos z $&8&21&2&0&\textcolor{blue}{4120.05s}\\
    \hline
   6& $2 z\cos z - 5 \sin z + 7 \cos z $&5&16&0&\textcolor{red}{2}&\textcolor{blue}{7.25s}\\
    \hline
      7&  $2 z\cos z - 5 \sin z + 7 \sin z\cos z $&8&24&1&0&\textcolor{blue}{1335.78s}\\
    \hline
         8&   $2 z\cos z - 5 \sin z + 7 \cos^2 z $&8&27&1&0&\textcolor{blue}{1075.45s}\\
    \hline
    \end{tabular}
    \end{scriptsize}
  \end{center}
\end{table}
\section{Conclusion}\label{节：总结}
This paper describes the real solutions of the BETP equations 
\[
\left\{
\begin{array}{rcl}
     r(a,b,e^a,\sin b, \cos b)&=&0  \\
     s(a,b,e^a,\sin b, \cos b)&=&0
\end{array}
\right.
\]
in the unbounded quarter 
\[
\{(a,b)\in\R^2\,|\,a>N,b>M\}
\]
as some bunches of semi-periodic solutions on some curves, when the positive numbers $N$ and $M$ are sufficiently large. Although this description is neat in theory, there are two shortcomings: 1) The number $N$ and $M$ are not given explicitly. 2) We don't know how to certify a degenerate triple of root-functions.
\bibliographystyle{elsarticle-num}
\bibliography{ref}

@inproceedings{achatz2008deciding,
  title={Deciding polynomial-exponential problems},
  author={Achatz, Melanie and McCallum, Scott and Weispfenning, Volker},
  booktitle={Proceedings of the International Symposium on Symbolic and Algebraic Computation},
  pages={215--222},
  year={2008}
}

@inproceedings{mitchell1990robust,
  title={Robust ray intersection with interval arithmetic},
  author={Mitchell, Don P},
  booktitle={Proceedings of Graphics Interface},
  volume={90},
  pages={68--74},
  year={1990}
}

@inproceedings{strzebonski2009real,
  title={Real root isolation for tame elementary functions},
  author={Strzebonski, Adam},
  booktitle={Proceedings of the  International Symposium on Symbolic and Algebraic Computation},
  pages={341--350},
  year={2009}
}

@article{strzebonski2011cylindrical,
  title={Cylindrical decomposition for systems transcendental in the first variable},
  author={Strzebo{\'n}ski, Adam},
  journal={Journal of Symbolic Computation},
  volume={46},
  number={11},
  pages={1284--1290},
  year={2011},
  publisher={Elsevier}
}

@article{wang2021symbolic,
  title={A symbolic-numerical algorithm for isolating real roots of certain radical expressions},
  author={Wang, Dongming and Xu, Juan},
  journal={Journal of Computational and Applied Mathematics},
  volume={391},
  pages={113424},
  year={2021},
  publisher={Elsevier}
}

@inproceedings{maignan1998solving,
  title={Solving one and two-dimensional exponential polynomial systems},
  author={Maignan, Aude},
  booktitle={Proceedings of the  International Symposium on Symbolic and Algebraic Computation},
  pages={215--221},
  year={1998}
}

@inproceedings{collins1976polynomial,
  title={Polynomial real root isolation by differentiation},
  author={Collins, George E and Loos, R{\"u}diger},
  booktitle={Proceedings of the 3rd ACM Symposium on Symbolic and Algebraic Computation},
  pages={15--25},
  year={1976}
}

@inproceedings{strzebonski2008real,
  title={Real root isolation for exp-log functions},
  author={Strzebo{\'n}ski, Adam},
  booktitle={Proceedings of the  International Symposium on Symbolic and Algebraic Computation},
  pages={303--314},
  year={2008}
}

@incollection{mccallum1998improved,
  title={An improved projection operation for cylindrical algebraic decomposition},
  author={McCallum, Scott},
  booktitle={Quantifier Elimination and Cylindrical Algebraic Decomposition},
  pages={242--268},
  year={1998},
  publisher={Springer}
}

@inproceedings{richardson1991towards,
  title={Towards computing non algebraic cylindrical decompositions},
  author={Richardson, Daniel},
  booktitle={Proceedings of the 1991 international symposium on Symbolic and algebraic computation},
  pages={247--255},
  year={1991}
}

@article{xu2015quantifier,
  title={Quantifier elimination for a class of exponential polynomial formulas},
  author={Xu, Ming and Li, Zhi-Bin and Yang, Lu},
  journal={Journal of Symbolic Computation},
  volume={68},
  pages={146--168},
  year={2015},
  publisher={Elsevier}
}

@article{huang2018positive,
  title={Positive root isolation for poly-powers by exclusion and differentiation},
  author={Huang, Cheng-Chao and Li, Jing-Cao and Xu, Ming and Li, Zhi-Bin},
  journal={Journal of Symbolic Computation},
  volume={85},
  pages={148--169},
  year={2018},
  publisher={Elsevier}
}

@book{shidlovskii2011transcendental,
author = {Andrei B. Shidlovskii},
title = {Transcendental Numbers},
year = {2011},
publisher = {De Gruyter},
}

@article{mccallum2012deciding,
  title={Deciding polynomial-transcendental problems},
  author={McCallum, Scott and Weispfenning, Volker},
  journal={Journal of Symbolic Computation},
  volume={47},
  number={1},
  pages={16--31},
  year={2012},
  publisher={Elsevier}
}

@article{strzebonski2012real,
  title={Real root isolation for exp--log--arctan functions},
  author={Strzebo{\'n}ski, Adam},
  journal={Journal of Symbolic Computation},
  volume={47},
  number={3},
  pages={282--314},
  year={2012},
  publisher={Elsevier}
}

@article{lipparini2024satisfiability,
  title={Satisfiability modulo Nonlinear Arithmetic and Transcendental Functions via Numerical and Topological methods},
  author={Lipparini, Enrico},
  year={2024},
  publisher={Universit{\`a} degli studi di Genova}
}

@inproceedings{ni2024local,
  title={Local Search for Checking Satisfiability of Formulas with Trigonometric Functions},
  author={Ni, Xinpeng and Xia, Bican and Zhao, Tianqi},
  booktitle={International Symposium on Automated Technology for Verification and Analysis},
  pages={256--274},
  year={2024},
  organization={Springer}
}

@article{lipparini2025satisfiability,
  title={Satisfiability of non-linear transcendental arithmetic as a certificate search problem},
  author={Lipparini, Enrico and Ratschan, Stefan},
  journal={Journal of Automated Reasoning},
  volume={69},
  number={1},
  pages={3},
  year={2025},
  publisher={Springer}
}

@inproceedings{chen2024reduction,
  title={Reduction of transcendental decision problems over the reals},
  author={Chen, Rizeng and Xia, Bican},
  booktitle={Proceedings of the 2024 International Symposium on Symbolic and Algebraic Computation},
  pages={56--64},
  year={2024}
}

@article{matsumoto2012nonlinear,
  title={Nonlinear delay monopoly with bounded rationality},
  author={Matsumoto, Akio and Szidarovszky, Ferenc},
  journal={Chaos, Solitons \& Fractals},
  volume={45},
  number={4},
  pages={507--519},
  year={2012},
  publisher={Elsevier}
}

@article{chen2024isolating,
  title={Isolating all the real roots of a mixed trigonometric-polynomial},
  author={Chen, Rizeng and Li, Haokun and Xia, Bican and Zhao, Tianqi and Zheng, Tao},
  journal={Journal of Symbolic Computation},
  volume={121},
  pages={102250},
  year={2024},
  publisher={Elsevier}
}
}
\end{document}